\documentclass[reprint,superscriptaddress]{revtex4-2}
\usepackage[english]{babel}
\makeatletter
\@namedef{l@en}{\l@english}
\makeatother

\usepackage{physics,amsmath,amsfonts,amssymb,dsfont,mathtools,amsthm}
\usepackage{caption}
\usepackage{ragged2e}

\makeatletter
\long\def\@makecaption#1#2{%
  \par\begingroup
    \small 
    \justifying
    \noindent #1\quad #2\par
  \endgroup
}
\makeatother

\usepackage{graphicx,xcolor,subcaption}
\usepackage{dcolumn}
\usepackage{bm}
\usepackage{comment}
\usepackage{enumerate}
\usepackage{hyperref}
\usepackage{cleveref} 

\newtheorem{lemma}{Lemma}

\begin{document}

\preprint{}

\title{Port-based teleportation under pure-dephasing decoherence}

\author{Rajendra S.~Bhati}
\email{rsbhati@cft.edu.pl}
\affiliation{Center for Theoretical Physics, Polish Academy of Sciences, Aleja Lotników 32/46, 02-668 Warszawa, Poland}

\author{Michał Studziński}
\email{michal.studzinski@ug.edu.pl}
\thanks{corresponding author}
\affiliation{International Centre for Theory of Quantum Technologies, University of Gdańsk,
Marii Janion 4, 80-309 Gdańsk, Poland}

\author{Jarosław K.~Korbicz}
\email{jkorbicz@cft.edu.pl}
\affiliation{Center for Theoretical Physics, Polish Academy of Sciences, Aleja Lotników 32/46, 02-668 Warszawa, Poland}


\begin{abstract}
We study deterministic port-based teleportation in the presence of noise affecting both the entangled resource state and the measurement process. We focus on a physically motivated model in which each Bell pair constituting the resource interacts with an identical local environment, corresponding to independently distributed entangled links.
Two noisy scenarios are analysed: one with decoherence acting solely on the resource state and ideal measurements, and another with noisy, noise-adapted measurements, which in our analysis are chosen as square-root measurements constructed from the noisy ensemble. In the first case, we derive an analytical lower bound and later a closed-form expression for the entanglement fidelity of the teleportation channel and analyse its asymptotic behaviour. In the second, we combine semi-analytical and numerical methods.
Surprisingly, we find that such noise-adapted measurements perform worse than the noiseless ones.
To connect the abstract noise description with microscopic physics, we embed the protocol in a spin–boson model and investigate the influence of bath memory and temperature on the teleportation fidelity, highlighting qualitative differences between different environments. We also briefly discuss the algorithmic aspects of noisy port-based teleportation,
highlighting how noise modifies the status of efficient measurement implementations.
\end{abstract}

\maketitle

\section{Introduction}
Port-based teleportation (PBT), introduced by Ishizaka and Hiroshima~\cite{ishizaka_asymptotic_2008,ishizaka_quantum_2009}, is a variant of the standard quantum teleportation protocol~\cite{bennett_teleporting_1993} in which the receiver applies no unitary correction to the output system. Instead, the sender (Alice) and the receiver (Bob) share a multipartite resource state; Alice performs a single joint measurement on her share of the resource and the input state, and Bob retrieves the teleported state simply by selecting one of his subsystems (ports) according to the classical message sent by Alice. We distinguish two variants of PBT: deterministic non-exact, where the state is always transmitted, but is noisy; probabilistic exact, where if the transmission is successful, then teleportation is perfect.  Due to the non-programming theorem~\cite{Nielsen1997,PhysRevLett.125.210501}, both variants cannot be implemented perfectly when the size of the resource state is finite -- the teleported state is always noisy, or the probability of failure is always non-zero. This limitation makes it important to understand how the efficiency of the teleportation procedure depends on the dimension of the teleported state, the number of shared ports (size of the resource state), and what the optimal settings of the protocol are. The full answer was given in a series of papers, where the analysis was made possible by the use of tools from representation theory and semidefinite programming~\cite{Studzinski2017,StuNJP,wang_higher-dimensional_2016, majenz2, leditzky2020optimality}. 

The absence of the unitary correction step in PBT has deep structural implications. This feature endows PBT with a unique operational flexibility that makes it a natural building block for many quantum information tasks: Bob can process the teleported system even before receiving the classical information about which port is correct (delayed-input formalism). 
In this way, PBT underlies a variety of higher-order quantum operations (HOQO) primitives~\cite{taranto2025higherorderquantumoperations}, including universal programmable quantum processors~\cite{ishizaka_asymptotic_2008,sim} and their hybrid versions~\cite{Muguruza2024portbasedstate,brzic2025}, schemes of unitary learning~\cite{yoshida2024onetoone}, transposing and inverting unitary operations~\cite{PhysRevLett.123.210502,PhysRevA.100.062339,Quintino2022deterministic}, as well as protocols for storage and retrieval of unknown operations~\cite{Stroing,PhysRevA.81.032324}. Beyond these applications, PBT has also proved influential in more traditional quantum information scenarios, where its structure has enabled progress in instantaneous non-local quantum computation, position-based cryptography~\cite{beigi_konig}, and the study of interaction complexity and entanglement in non-local and holographic models~\cite{may2022complexity}. It has further contributed to clarifying the connection between quantum communication complexity advantages and violations of Bell inequalities~\cite{buhrman_quantum_2016} and  others~\cite{pereira2023continuous,limit,10315956}. In the last years, progress in the area of PBT has put it on a much firmer mathematical and algorithmic footing: its performance is now fully characterized in terms of representation theory~\cite{Studzinski2017,StuNJP,majenz2,grinko2023gelfandtsetlinbasispartiallytransposed}, and efficient quantum circuits and algorithms for optimal PBT in various regimes have been constructed~\cite{fei2023efficientquantumalgorithmportbased,Grinko2024,PRXQuantum.5.030354}.

 All of these developments, however, rely on essentially ideal assumptions: carefully optimized resource state and perfect measurements. In any realistic architecture, the multipartite resource state will be imperfect, affected by local or correlated noise, and the implementation of the PBT measurement itself will suffer from errors. From the operational point of view, a noisy PBT scheme is simply an effective quantum channel acting on the input state; understanding this effective channel is crucial if PBT is to be used as a building block of the above-mentioned primitives.  Studying PBT in noisy conditions is therefore not merely a technical refinement of the ideal theory, but a necessary step in assessing its true operational power.  This direction is beginning to be explored: for instance, recent work~\cite{Kim} has analyzed PBT when the shared ports are subjected to local Pauli noise, fully characterizing the induced teleportation channel, and using this to study port-based entanglement teleportation as a tool for entanglement distribution in quantum networks.

Motivated by these developments, we study PBT in an open-quantum-systems setting. Rather than treating noise only at the level of an effective teleportation channel, we explicitly model the interaction between each entangled resource pair and its environment, allowing us to derive the resulting noisy PBT channel from the underlying dynamics. In particular:

\begin{enumerate}[(a)]

\item We focus on the deterministic PBT, where each Bell pair forming the resource state is coupled to its own local environment, with all environments assumed to be identical, reflecting a scenario in which the parties share (N) identical entangled `wires'. While more general models with a common environment are possible, they are technically more involved and are not considered here. The used model is discussed in more detail in Section~\ref{Sec:noisyPBT_model}.

Throughout this work, we consider pure-dephasing coupling, motivated by its ubiquitous occurrence in realistic physical noise models \cite{schlosshauer2007} and by the fact that it allows for a tractable yet physically relevant analysis. Other important noise models, such as amplitude damping and depolarizing noise, can also affect the resource state. However, as the present treatment of pure dephasing already demonstrates, the analysis of each noise model is technically demanding and therefore deserves a separate, dedicated study (see, for example, Ref.~\cite{Kim} for an information-theoretic treatment of general Pauli noise). The same consideration applies to the probabilistic PBT protocol, whose analysis is left for future work.

\item We consider two scenarios of the noisy PBT protocol. In the first scenario, we consider a noisy resource state and noiseless measurements. We assume perfect control over the measurements--taken to be square-root measurements (SRMs)~\footnote{In the literature, square-root measurements (SRMs) are sometimes referred to as pretty-good measurements (PGMs). We will use the two terms interchangeably.}--in Alice’s laboratory, and uncontrolled interaction of the resource state distributed among the parties. In this scenario, we first derive a lower bound for the entanglement fidelity of the teleportation channel using techniques introduced in~\cite{beigi_konig}. Then we develop a fully analytical approach to the problem, presenting a closed-form expression for the entanglement fidelity, together with a discussion on its asymptotic behaviour. This is contained in Section~\ref{BK} and Section~\ref{Sec noiseless}.
    In the second case, we assume a scenario with noisy measurements. In particular, our measurements are noise-adapted ones, which in our analysis correspond to SRMs constructed from the noisy ensemble. Here, we mostly rely on a semi-analytical approach, supported strongly by a numerical approach. However, for the special case of $N=2$, we derive a fully analytical lower bound on the entanglement fidelity relying on Helstrom's bound. This is considered in Section~\ref{noisyPOVMs}.  In addition, in Section~\ref{sec:simulability-noisyPBT}, we discuss the algorithmic implications of these two scenarios. When the noiseless measurements are used, noise affects only the achieved fidelity and the required number of ports, while existing efficient implementations of the measurement remain applicable. By contrast, for noise-adapted POVMs constructed from the noisy ensemble, the loss of joint unitary symmetry means that efficient implementation can no longer be inferred from symmetry alone, and becomes a quantitative question tied to spectral properties of the noisy ensemble.
    \item  In order to illustrate the impact of realistic environments. 
    on the performance of the noisy PBT protocol, we apply our general results to the concrete example of  
the spin--boson model \cite{10.1098/rspa.1996.0029, PhysRevA.65.032326, Schlosshauer2007-tx}, which has been widely used to model interactions of physical qubits with their environments, see e.g. \cite{ruggiero2006quantum, predojevic2015engineering, gaitan2008quantum, weiss2021quantum} and references therein. The model describes a central spin-register, in our case two spins of each of the Bell pair, interacting with a thermal bosonic bath. The model is parametrized
in terms of three dimensionless quantities: the Ohmicity exponent $s$,
the ratio of the thermal energy to the cutoff energy $T/\Lambda$, and the
dimensionless time $\tau = t\Lambda$ (we are using units in which $\hbar=1$ for simplicity).
 We illustrate the impact of microscopic noise properties by analysing two
representative environmental regimes: modelled by a super-Ohmic spectral density with $s=2$, which for a single qubit corresponds to a Markovian regime with weak memory
effects, and with $s=3$,  corresponding to a
non-Markovian regime with enhanced environmental memory.
In both cases, we study the time dependence of the teleportation fidelity as a
function of time to show how the duration of the decoherence affects the teleportation process. 
This is considered in Section~\ref{Sec:spin-boson}.
\end{enumerate}
\begin{figure}[t]
    \centering    \includegraphics[width=0.95\linewidth]{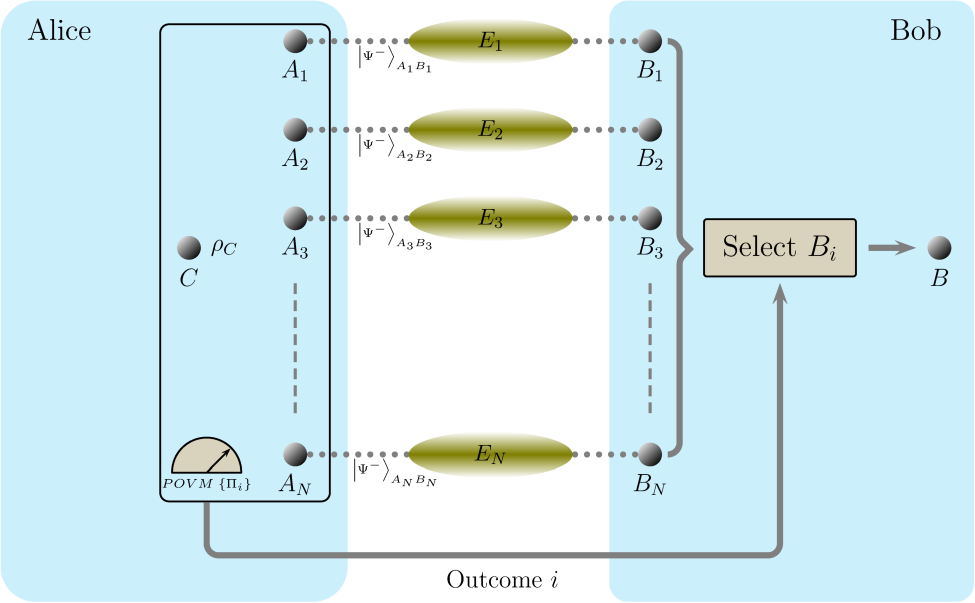}
    \caption{Port-based teleportation under decoherence. Alice and Bob share \(N\) copies of the Bell state \(\ket{\Psi^-}\), each transmitted through a decoherence channel and interacting with its own environment \(\{E_i\}_{i=1}^N\). Alice performs an \(N\)-outcome positive operator-valued measure (POVM) \(\{\Pi_i\}_{i=1}^N\)  on her \(N\) qubits together with the qubit \(C\) to be teleported. Bob then selects the port corresponding to Alice's measurement outcome.
    }
    \label{fig:PBT_scheme}
\end{figure}
The paper additionally contains Section~\ref{Sec:infos}, where all preliminary information is collected, and several appendices, Appendix~\ref{Fij}--Appendix~\ref{app:spin-boson}, where, for the reader's convenience, we collect detailed calculations and proofs.

 \section{Preliminaries and deterministic noiseless PBT}
 \label{Sec:infos}

In this section, we collect the definitions used to quantify the performance of deterministic PBT and recall the noiseless protocol. These ingredients are needed throughout the technical analysis below. For further details, we refer the reader to previous works.

 We start from the basic definitions of the average fidelity and
entanglement fidelity. The average fidelity of a quantum channel $\mathcal{C}$ (here teleportation channel) is given by
\begin{equation}
    f(\mathcal{C}) := \int \langle \psi | \mathcal{C}(|\psi\rangle\langle\psi|) | \psi \rangle \, d\psi,
\end{equation}
where the integral is performed with respect to the uniform
distribution $d\psi$ over all $d$-dimensional pure states~\cite{HorodeckiMPRFidelity}.
The entanglement fidelity of a channel $\mathcal{C}$ is given by
\begin{equation}
\label{eq:entFid}
    F(\mathcal{C}) := \Tr\!\left[ P^+ \, (\mathcal{C} \otimes \mathbf{1}) P^+ \right],
\end{equation}
where
$\quad  P^+ := \frac{1}{d} \sum_{i,j=1}^{d} |i\rangle\langle j| \otimes |i\rangle\langle j|$
is the projector on the bipartite maximally entangled state $|\psi^+\rangle=(1/\sqrt{d})\sum_i |ii\rangle$, and $\mathbf{1}$ denotes the identity channel.
Our channel in this context is a teleportation channel. There exists a universal relation between the average fidelity $f$ of a teleportation channel and its entanglement fidelity $F$~\cite{HorodeckiMPRFidelity} given by 
\begin{equation}
    f(\mathcal{C}) = \frac{dF(\mathcal{C}) + 1}{d+1}.
    \label{teleportfid}
\end{equation}
Throughout this work, we use the terms `teleportation fidelity' and `average fidelity' interchangeably to denote the average fidelity of the teleportation channel.

 In this paper, we consider the standard PBT scheme first proposed
in~\cite{ishizaka_asymptotic_2008}, and restrict our discussion to qubit systems, i.e., when $d=2$.
In the standard configuration of any PBT protocol, the sender (Alice) and the receiver (Bob) share a $2N$-qubit pure state (resource state):
\begin{equation}
    \Psi_{\vec{A}\vec{B}}^-
    := \psi^-_{A_1 B_1} \otimes \cdots \otimes \psi^-_{A_N B_N},
\end{equation}
where $\vec{A}=A_1A_2\cdots A_N, \vec{B}=B_1B_2\cdots B_N$, and $P_{A_iB_i}^-\equiv \psi_{A_iB_i}^- :=|\psi^-\rangle \langle \psi^-|_{A_iB_i}$ is the projector on the bipartite singlet state of two qubits on the $i-$th port:
\begin{equation}
    |\psi^-\rangle_{A_iB_i}:=\frac{1}{
    \sqrt{2}}(|01\rangle-|10\rangle).
\end{equation}

Alice teleports an unknown qubit state $\rho_C$ by performing a joint POVM
measurement $\{\Pi_i\}_{i=1}^N$ on the systems $\vec{A}$ and $C$, yielding
an outcome $i \in \{1,\dots,N\}$.
 After the measurement, she sends her outcome $i$ to Bob using a classical
communication channel. Bob, to recover the teleported state, selects qubit $B_i$, and removes the
rest of his qubits. The PBT channel is expressed as
\begin{equation}
    \mathcal{C}(\rho_C)
    := \sum_{i=1}^N \operatorname{tr}_{\vec{A}\bar{B}_i C}
    \Big[
        \big(\Pi_{i}^{\vec{A}C} \otimes \mathbf{1}_{\vec{B}} \big)
        \big( \Psi_{\vec{A} \vec{B}}^- \otimes \rho_C \big)
    \Big]_{B_i \rightarrow B},
\end{equation}
where $\operatorname{tr}_{\bar{B}_i}$ denotes tracing out all systems but $B_i$. The selected port is regarded as a common qubit $B$ with the teleported state, which is just a renaming, denoted by $B_i\rightarrow B$. Since we work with deterministic and non-exact transmission to quantify performance of the teleportation channel, we use the entanglement fidelity $F(\mathcal{C})$ from~\eqref{eq:entFid}, but for $P^-$ instead of $P^+$. It is because the entanglement fidelity is invariant under local unitaries, and the considered states are connected by a unitary of the form $Y \otimes \mathbf{1}$, where $Y= \bigl(\begin{smallmatrix} 0 & -i \\ i & 0 \end{smallmatrix}\bigr)$ is a Pauli matrix.
The exact form of the entanglement fidelity
$F(\mathcal{C})$  for the teleportation channel $\mathcal{C}$ is given by~\cite{ishizaka_asymptotic_2008,ishizaka_quantum_2009}:
\begin{equation}
   F(\mathcal{C})= \frac{1}{2^{N+3}}
    \sum_{k=0}^N
    \left(
        \frac{N - 2k - 1}{\sqrt{k+1}}
        + \frac{N - 2k + 1}{\sqrt{N - k + 1}}
    \right)^2
    \binom{N}{k}.
\end{equation}
For further purposes, we present derivations of the above expression in Appendix~\ref{derivationPRA}. In the asymptotic limit \(N\to\infty\), the entanglement fidelity and the average fidelity behave respectively as 
\begin{equation} \label{F_IH_inf}
F(\mathcal{C}) \to 1 - \frac{3}{4N},\qquad f(\mathcal{C}) \to 1 - \frac{1}{2N}. 
\end{equation}
 The asymptotic of the deterministic and probabilistic PBT is also known for the case when $d>2$ and presented in  papers~\cite{majenz2,Studzinski2017}. 
The maximal value of the entanglement fidelity from~\eqref{eq:entFid} is achieved by the square-root measurement (SRM), equivalently known as the pretty-good measurement (PGM). These measurements have the following explicit form for every \(1\leq i\leq N\):
\begin{equation}
    \Pi_{i} := \widetilde{\Pi}_{i} + \Delta,
    \qquad
    \Delta := \frac{1}{N} \sum_{i=1}^N \widetilde{\Pi}_{i},
\end{equation}
with
\begin{equation}
\label{eq:POVMs0}
    \widetilde{\Pi}_{i}^{\vec{A}C} := \rho^{-1/2}\, \sigma_{i} \, \rho^{-1/2},
    \qquad
    \rho:= \sum_{i=1}^N \sigma_{i}.
\end{equation}
The operators $\sigma_{i}$ are called signal states, expressed
as
\begin{equation}
    \sigma_i
    := \frac{1}{2^{N-1}} \psi^-_{A_i C} \otimes \mathbf{1}_{\bar{A}_i},
\end{equation}
where $\mathbf{1}_{\bar{A}_i}$ represents the identity operator acting
on all $\vec{A}$ systems excluding system $A_i$.  The inverse in~\eqref{eq:POVMs0} is taken on the support of $\rho$ and thus the POVM must be supplemented by the operator $\Delta $ to have $\sum_i \Pi_i^{\vec{A}C}=\mathbf{1}_{\vec{A}C}$, which projects on the zero-eigenvalue subspace and thus has no contribution to the entanglement fidelity. The explicit proof of optimality of the SRM measurements for the deterministic PBT has been shown in~\cite{leditzky2020optimality}.

\section{The noisy teleportation channel}
\label{Sec:noisyPBT_model}
 In this work, we assume that the environment affects only the Bell pairs, while the input qubit to be teleported remains unaffected. This is an approximation to a real situation, motivated by the fact that entangled states decohere faster than the states of a single system \cite{PhysRevLett.93.140404,dur2004stability,zurek2003decoherence,SCHLOSSHAUER20191}. We will assume the following form of the interaction between the Bell pairs $AB$ and the environment $E$:  
\begin{equation}\label{Hint}
    H_{int} = \sum_{i,j\in\{0, 1\}}\ketbra{ij}_{AB}\otimes V^E_{ij},
\end{equation}
which has the form of a general pure dephasing in the product basis $\ket{ij} = \ket i \otimes \ket j$ of the qubits $AB$, i.e. it destroys coherence in the basis $\ket{ij}$, in particular it decoheres Bell states. The basis $\{\ket {ij}\}$ is called the pointer basis and without a loss of generality we can choose it to be the eigenbasis of $\sigma_z$. The environmental coupling observables $V^E_{ij}$ are at this point left arbitrary. The form of the interaction \eqref{Hint} is motivated by physical considerations as it is often encountered in open systems. For example, a broad class of dynamics of a type $H=S_A\otimes V_E+S_B\otimes V'_E+H_E$, where $S_{A}, S_B$ are the coupling observables of the systems $A$ and $B$ side, $V_E,V_E'$ the corresponding environmental coupling observables and the $H_E$ internal dynamics of the environment can be transformed to form \eqref{Hint}. The internal dynamics of the qubits is assumed to be suppressed or at most commuting with the interaction $[H_{AB},S_A]=[H_{AB},S_B]=0$. Although this can be criticized from the open system's perspective as a rather restrictive condition, it can be nevertheless justified here as one would like to prepare the communication qubits as close to the Bell state as possible and keep them in that state.  Another reason for this choice is the fact that it allows for analytical calculations in principle without the need of the tracing out of the environment, which may be kept as a bona fide quantum system throughout the analysis. This in turn may reveal surprising decoherence suppression effects like e.g. the purifying teleportation \cite{Roszak2023purifying}. Although we will not study such scenarios here due to their complication in the case of the PBT scheme, our approach can be in principle directly applied to it in the future. A concrete example of the dynamics \eqref{Hint} is the popular spin-boson model with multiple spins (a spin register)  \cite{10.1098/rspa.1996.0029, PhysRevA.65.032326,Tuziemski2018, Schlosshauer2007-tx}, which we will study in Section \ref{Sec:spin-boson}.

The Hamiltonian \eqref{Hint} generates the following unitary evolution, known as pure dephasing evolution in the open systems community, coherent control or generalized C-NOT evolution in the quantum information community, or generalized von Neumann measurement in the measurement theory:
\begin{equation}
    U_{AB:E} = \sum_{i,j\in\{0,1\}}\ketbra{ij}\otimes\exp(-i V_{ij}t).
\end{equation}
The state of any of the Bell pairs after the decoherence is given by
\begin{align}
        \tilde{P}^-_{A B} & =\tr_E U_{A B:E}\left(\ketbra{\Psi^-}_{A B}\otimes\rho_E\right)U_{AB:E}^\dagger \label{dephasing_channel} \\
        & = \frac{1+\vert\Gamma\vert\cos{\theta}}{2}\ketbra{\Psi^-}+\frac{1-\vert\Gamma\vert\cos{\theta}}{2}\ketbra{\Psi^+} \\
        &+\frac{i\vert\Gamma\vert\sin{\theta}}{2}\left(\ketbra{\Psi^+}{\Psi^-}-\ketbra{\Psi^-}{\Psi^+}\right) \label{P_tilde_1}\\
        & = \frac{1+|\Gamma|}{2}\mathds{1}\otimes R(\theta)\ketbra{\Psi^-}_{A B}\mathds{1}\otimes R(\theta)^\dagger \\
        &+ \frac{1-|\Gamma|}{2}\mathds{1}\otimes  R(\theta)\sigma_z \ketbra{\Psi^-}_{A B}\mathds{1}\otimes \sigma_z R(\theta)^\dagger \label{P_tilde} \\
        & \equiv \mathds{1}\otimes\mathcal{E}_{\Gamma}\left(\ketbra{\Psi^-}_{A B}\right).
\end{align}
Here 
\begin{align}\label{Gamma}
\Gamma = \tr_E[e^{-i V_{01}t}\rho_E e^{i V_{10}t}] \equiv \vert\Gamma\vert e^{i\theta},
\end{align}
$|\Gamma|\leq 1$,  is the only non-trivial decoherence factor, $R(\theta)$ is a relative phase rotation in the pointer basis:
\begin{equation}
\label{eq:Rotation}
    R(\theta) = e^{-i\theta}\ketbra{0}+\ketbra{1},
\end{equation}
and $\mathcal{E}_{\Gamma}$ 
is a single qubit decoherence channel, defined as:
\begin{align} 
\label{single_q_channel}
    \mathcal{E}_{\Gamma}(\rho)= R(\theta)\left[\frac{1+|\Gamma|}{2}\rho  + \frac{1-|\Gamma|}{2} \sigma_z \rho \sigma_z \right] R(\theta)^\dagger .
\end{align}
The high symmetry of $\ket{\Psi^-}$ allows to move the effect of the environment to a single qubit only, despite the correlated character of the interaction \eqref{Hint}. If the decoherence factor was real, i.e. $\theta=0$, this would be a Pauli noise channel with operators $\mathds 1, Z $ \cite{Kim}. The decoherence factor \eqref{Gamma} depends on the interaction time $t$, but for the moment we neglect that dependence and study the efficiency of the protocol as a function of $\Gamma$. We note that $\Gamma=1$ corresponds to no decoherence, while $\Gamma=0$ to a fully decohered, separable resource, which is no different from using classical randomness:
\begin{align}
        \tilde{P}^-_{A B}(\Gamma=0) = \frac{1}{2} \ketbra{01} + \frac{1}{2} \ketbra{10}.
        \label{strong_decoh}
\end{align}

Let us now briefly describe the PBT procedure \cite{ishizaka_asymptotic_2008,ishizaka_quantum_2009, Kim}. Consider that Alice and Bob share $N$ singlet pairs, which were affected by the decoherence ~\eqref{single_q_channel}. The shared resource state becomes: 
\begin{equation}
\rho_{\vec{A}\vec{B}}=\bigotimes_{i=1}^N\ketbra{\Psi^-}_{A_i B_i}\equiv \bigotimes_{i=1}^N{{\tilde{P}^-}}_{A_i B_i},
\end{equation}
where Alice and Bob have $N$ qubits each $A_1, A_2,\cdots, A_N$ and $B_1, B_2, \cdots, B_N$, respectively. Here onwards, we use the following notations through out the paper: Alice's $N$ qubits are denoted by $\vec{A}$, all except $A_i$ is denoted by $\bar{A}_i$, all except $A_i$ and $A_j$ are denoted by $\bar{A}_{ij}$, and so on. Similar notations are adopted for Bob's qubits as well. The resource state $\rho_{\vec{A}\vec{B}}$ is employed for teleportation of an unknown state of another qubit, say $C$, from $A$ to $B$. Alice performs a joint measurement with $N$ possible outcomes on the $\vec{A}$ and $C$ qubits, described by a POVM $\{\Pi_i\}$. Alice then sends the information regarding the index of the outcome, for example the $i$-th outcome, and Bob selects the corresponding port and discards the rest. The selected port is regarded as a common qubit $B$ with the teleported state, which is just a renaming, denoted by $B_i\rightarrow B$. The corresponding teleportation channel, which maps the state acting on the Hilbert space $\mathcal{H}_C$ to those on $\mathcal{H}_B$ and assuming corrupted resource \eqref{dephasing_channel} then reads (cf. \cite{Kim}):
\begin{equation}
    \begin{aligned}
        \mathcal{C}(\sigma^{in}_C) & = \sum_{i=1}^N \left[\tr_{\vec{A}\bar{B}_i C}\sqrt{\Pi_i}\left(\rho_{\vec{A}\vec{B}}\otimes\sigma_C^{in}\right)\sqrt{\Pi_i}^\dagger\right]_{B_i\rightarrow B} \\
        &= \sum_{i=1}^N\tr_{\vec{A} C}\Pi_i\left[\eta_{\vec{A} B}^{(i)}\otimes\sigma_{C}^{in}\right],\label{Lambda}
    \end{aligned}
\end{equation}
where we introduced the following, normalized, states:
\begin{equation}
\label{eta_i}
        \eta_{\vec{A} B}^{(i)}  = \left[\tr_{\bar{B}_i}(\rho_{\vec{A} \vec{B}})\right]_{B_i\rightarrow B}  =\frac{1}{2^{N-1}}\tilde{P}^-_{A_i B} \otimes \mathds{1}_{\bar{A}_i} 
\end{equation}
The state $\eta_{\vec{A} B}^{(i)}$ can be expressed as a mixture of signal states corresponding to the scenarios when a noiseless PBT is performed with $\ket{\Psi^-}^{\otimes N}$ and $\ket{\Psi^+}^{\otimes N}$ states as the resources and a subsequent phase rotation at the Bob's qubit:
\begin{equation}
    \eta_{\vec{A} B}^{(i)} = R(\theta)_B\left[\frac{1+|\Gamma|}{2}\sigma^{(i)}_{\vec{A}B} + \frac{1-|\Gamma|}{2}\omega^{(i)}_{\vec{A}B}\right] R(\theta)_B^\dagger,\label{diag}
\end{equation}
where:
\begin{align}
    \sigma^{(i)}_{\vec{A}B} = &  \frac{1}{2^{N-1}}\ketbra{\Psi^-}_{A_i B} \otimes \mathds{1}_{\bar{A}_i}, \label{sigma_i}\\
       \omega^{(i)}_{\vec{A}B} = &  \frac{1}{2^{N-1}}\ketbra{\Psi^+}_{A_i B} \otimes \mathds{1}_{\bar{A}_i}.\label{omega_i}
\end{align} 
The entanglement fidelity $F$ of $\mathcal{C}$ is given by:

\begin{equation}
   \begin{aligned}
        F &= \tr P^-_{BD}\left[\left(\mathcal{C}\otimes\mathds{1}\right)P^-_{CD}\right] \\
        &= \tr\sum_{i = 1}^N P^-_{BD}\Pi_{i}^{\vec{A} B}\left\{\eta^{(i)}_{\vec{A}B}\otimes P^{-}_{CD}\right\}  \\
         & = \frac{1}{2^2}\sum_{i=1}^N\tr \Pi_{i}^{\vec{A} B}\eta^{(i)}_{\vec{A} B} \equiv \frac{N}{4} P_{succ} \label{F}.
   \end{aligned}
\end{equation}
    
As the last equality above shows, it is proportional to the success probability of the state discrimination task defined by the ensemble $\left\{\frac{1}{N}, \eta^{(i)}_{\vec{A} B}\right\}_{i=1}^N$, $i=1,2,\dots,N$, of equiprobable signal states $\eta^{(i)}_{\vec{A} B}$ of \eqref{eta_i}. 
The problem of finding the teleportation fidelity then reduces to the state discrimination task, as in the original PBT protocol. Following the original idea~\cite{ishizaka_asymptotic_2008,ishizaka_quantum_2009}, we use the PGMs~\cite{beigi_konig,leditzky2020optimality} as a standard tool for state-discrimination tasks. For an ensemble \(\{p_i,\sigma_i\}_{i=1}^N\), it is defined as \begin{align} \Pi_i \equiv p_i\bar\sigma^{-1/2}\sigma_i\bar\sigma^{-1/2}, \label{PGM} \end{align} where \(\bar\sigma\equiv\sum_i p_i\sigma_i\) is the ensemble average state. If \(\bar\sigma\) has zero eigenvalues, the inverse is taken on its support, and the POVM is supplemented by the operator \(\Delta\equiv\mathds{1}-\sum_i\Pi_i\), which projects onto the zero-eigenvalue subspace and does not contribute to the success probability.

However, the states \eqref{eta_i} to be discriminated are now more complicated than in the original PBT scheme, i.e. for $\Gamma=1$. First, their non-trivial part, the reduction to the $A_iB$ subspace is not pure but mixed, supported on a $2D$ subspace spanned by $\ket{\Psi^-},\ket{\Psi^+}$, cf. \eqref{P_tilde}. Second, and more importantly, the original symmetry $U^{\otimes N}\otimes Y U^*Y$ of the non-decohered ensemble $\{\sigma^{(i)}_{\vec{A}B}\}$ \eqref{sigma_i} is lost. The latter allowed for a great simplification: using the spin representation theory, the authors of \cite{ishizaka_asymptotic_2008,ishizaka_quantum_2009} were able to simultaneously diagonalize the ensemble states, their average, and the pretty good measurement $\{\Pi_i\}$ and as a result calculate the fidelity analytically. Under the action of decoherence, the unitary symmetry is lost since 
$|\Psi^{+}\rangle$ has a different symmetry structure than $|\Psi^{-}\rangle$. 
In particular, \((U \otimes X U^{*} X)\,|\Psi^{+}\rangle = |\Psi^{+}\rangle \), where $X$ denotes the Pauli $X$ matrix. Although each ensemble state~\eqref{eta_i} still possesses the $U^{\otimes (N-1)}$ symmetry, their average does not 
inherit this property, since the operator $\mathbf{1}_{\bar A_i}$ is supported  on different subspaces for different values of $i$. As a remark, we note that one would be tempted to use the Knill-Barnum bound \cite{barnum2002reversing, Montanaro2019,Audenaert2014} to estimate $P_{succ}$ as in fact the former was derived with the help of the PGM. The bound reads $P_{succ}\geq 1-\sum_{i\ne j}\sqrt{p_ip_j}F(\sigma_i,\sigma_j)$, where $F(\sigma_i,\sigma_j)\equiv \tr\sqrt{\sqrt{\sigma_i}\sigma_j\sqrt{\sigma_i}}$ the pairwise state fidelity of the ensemble. In our case, we obtain from \eqref{eta_i}, see Appendix ~\ref{Fij}
\begin{align}\label{overlap}
    F\left(\eta^{(i)},\eta^{(j)}\right) = \frac{1}{2}
\end{align}
which is both $N$- and $\Gamma$-independent and leads to a bound $P_{succ}\geq 1 -\frac{1}{N}\frac{N(N-1)}{4}=1-\frac{N-1}{4}$, which trivializes already for $N>3$.

Below, we analyse three better strategies, of increasing complication, for estimating the PBT efficiency under the pure dephasing decoherence.

\section{The Beigi-Koenig bound}\label{BK}

The easiest way to obtain a bound on the fidelity \eqref{F} is to use the main result of \cite{beigi_konig}:  The success probability of the PGM for the equiprobable ensemble $\{\frac{1}{N}, \eta^{(i)}\}^{N}_{i=1}$ is bounded by:
\begin{equation}\label{BKbound}
    P^{pgm}_{succ}\geq \frac{1}{N\bar{r}\tr\bar\eta^2},
\end{equation}
where $\bar{\eta}=(1/N)\sum_{i=1}^N\eta^{(i)}$ is the ensemble average state, $\tr\bar\eta^2$ is its purity, and $\bar{r}=(1/N)\sum_{i=1}^N \mbox{rank}\:\eta^{(i)}$ is the average rank. From \eqref{eta_i} we obtain:
\begin{equation}
    \tr\bar{\eta}^2 = \frac{1}{2^N N}\left(\vert \Gamma \vert^2+\frac{N+1}{2}\right).
\end{equation}
where we have used the individual state purities:
    \begin{align} \label{purity}
        \tr[(\eta^{(i)})^2]= \frac{1}{2^{N-1}}\tr(\tilde{P}^-)^2 = \frac{1+\vert\Gamma \vert^2}{2^N},
    \end{align}
and $\tr{\eta^{(i)} \eta^{(j)}}_{i\neq j} =2^{-(N+1)}$. Since $\mbox{rank}\:\eta^{(i)}= 2^{N-1} \cdot \mbox{rank}\: \tilde{P}^- =2^N$ for every $i$, cf. \eqref{eta_i}, we obtain:
\begin{equation}
    \begin{aligned}
        P^{pgm}_{succ} & 
        = \frac{2}{N}\left(1+\frac{1+2\vert \Gamma\vert^2}{N}\right)^{-1}
        \geq \frac{2}{N}\left( 1-\frac{1+2\vert \Gamma\vert^2}{N} \right),
    \end{aligned}
\end{equation}
and from this the bound on the entanglement fidelity:
    \begin{align}\label{BKbound}
        F \geq \frac{1}{2}\left(1-\frac{1+2\vert \Gamma\vert^2}{N} \right).
    \end{align}
This formula looks almost correct in the absence of decoherence, \(\Gamma=1\), apart from the fact that the bound is one half of what it should be~\cite{beigi_konig}. This is a consequence of the fact that, in the noisy case, \(\operatorname{rank}\tilde P^-=2\), whereas for the ideal, uncorrupted resource one has \(\operatorname{rank}P^-=1\). Thus the Beigi--Koenig estimate suffers an immediate drop whenever the rank of the signal states increases. It is useful to distinguish this estimate from the pairwise-fidelity bound discussed above. The latter becomes trivial already for \(N>3\), whereas the Beigi--Koenig bound remains positive but is operationally uninformative for our purpose, since it certifies at most fidelities of order \(1/2\) in the large-\(N\) regime and therefore cannot capture the high-fidelity behaviour expected for weak dephasing. We thus conclude that Eq.~\eqref{BKbound} is of limited use for assessing noisy PBT in the regime of interest.

\section{PBT using the noiseless Ishizaka-Hiroshima measurements}\label{Sec noiseless}

The next strategy is to use the original Ishizaka-Hiroshima deterministic PGM, constructed from the noiseless ensemble, i.e. from $\eta_{\vec{A} B}^{(i)}$ with $\Gamma=1$, cf. \cite{Kim}. We will call it the noiseless measurement and its elements the noiseless POVMs. The motivation is that one can imagine a situation, where the noise estimation, i.e. the estimation of $\Gamma$, is impossible or even if possible, it cannot be used to fine-tune the measurements. The natural choice is then to use the noiseless measurement as at least for weak decoherence it will perform close to optimal. As we will show, this strategy performs surprisingly well not only for the weak decoherence. 

There are at least two possible approaches: A direct calculation of the entanglement fidelity  \eqref{F} or a calculation using the fact that the PBT channel \eqref{Lambda} is a depolarizing channel for noiseless measurements \cite{ishizaka_asymptotic_2008,ishizaka_quantum_2009}. The latter was studied in \cite{Kim} and here we perform a direct calculation for completeness, presenting below only the main steps while the detailed calculations can be found in the Appendix~\ref{App_noiseless}. 

For simplicity, throughout this Section we will use the un-normalized ensemble average and POVMs, i.e. we consider
the ensemble $\left\{\frac{1}{N}, \sigma^{(i)}_{\vec{A} B}\right\}$, with $\sigma^{(i)}_{\vec{A} B}$ given by \eqref{sigma_i}, and define the un-normalized average state by:
\begin{align}\label{avrho}
    \rho=\sum_{i=1}^N\sigma^{(i)}_{\vec{A} B},
\end{align}
where we used the $\rho$ rather than $\bar\sigma$ to comply with the original notation of \cite{ishizaka_asymptotic_2008,ishizaka_quantum_2009}. As it turns out, the original analysis based on the symmetry and the spin decomposition can be still applied here with a slight modification. 

Since all the $\sigma^{(i)}_{\vec{A} B}$ share the same symmetry, the average state \eqref{avrho} shares it also and can be shown to be block-diagonal with respect to the total spin $s$ of the system of $(N+1)$ spins $1/2$, given by the qubits $\vec{A} B$:
\begin{equation}\label{rhos}
    \rho = \sum_{s = s_{min}}^{(N-1)/2}\rho_-(s)\oplus \rho_+(s),
\end{equation}
where $s_{min}=0$ for $N$ even and $1/2$ for $N$ odd; $\rho_-(s)$ and $\rho_+(s)$ are two families of the block states expressed in terms of eigenstates of $\rho$ as:
\begin{widetext}
    \begin{equation}
    \begin{aligned}
        \rho_\mp(s) = & \lambda^{\mp}_{s\mp 1/2}\sum_{m=-s}^s\left[\sum_{\beta}\ketbra{\Psi_{\text{I}}(\lambda^\mp_{s\mp 1/2};m,\beta)} +\sum_{\beta}\ketbra{\Psi_{\text{II}}(\lambda^\mp_{s\mp 1/2};m,\beta)} \right],
    \end{aligned}
\end{equation}
\end{widetext}
where
\begin{equation}
    \rho\ket{\Psi_{\text{I(II)}}(\lambda^\mp_{s\mp 1/2};m, \beta)} = \lambda^\mp_{s\mp 1/2}\ket{\Psi_{\text{I(II)}}(\lambda^\mp_{s\mp 1/2};m, \beta)},
\end{equation}
and the above states are explicitly written in Appendix~\ref{App_noiseless}. The spin index $s$ is implicit there in terms of $j$, $m=-s,\dots ,s$ is the usual magnetic number for spin $s$, and the index beta $\beta$ corresponds to the irrep multiplicity and characterizes degeneracy of the corresponding eigenavlues: For $\lambda^-_{s-1/2}$ the degeneracy is $g^{[N-1]}(s)$ and $g^{[N-1]}(s-1)$ for the eigenstates $\ket{\Psi_{\text{I}}(\lambda^-_{s- 1/2};m, \beta)}$ and $\ket{\Psi_{\text{II}}(\lambda^-_{s- 1/2};m, \beta)}$, respectively. Similarly, the degeneracy of $\lambda^+_{s+1/2}$ is $g^{[N-1]}(s+1)$ and $g^{[N-1]}(s)$ for the eigenstates $\ket{\Psi_{\text{I}}(\lambda^+_{s+1/2};m, \beta)}$ and $\ket{\Psi_{\text{II}}(\lambda^+_{s+1/2};m, \beta)}$, respectively, where    
\begin{equation}
    g^{[N]}(s) = \frac{N!(2s+1)}{(N/2-s)!(N/2+1+s)!}.
\end{equation}
The noiseless POVMs can be now constructed from $\rho$ as:
\begin{align}
  \forall_{1\leq i \leq N} \quad  \Pi_i  = \rho^{-1/2}\sigma^{(i)}\rho^{-1/2},
\end{align} 
where we have dropped the $1/N$ normalization factor and the inverse is on the support of $\rho$. 
Using \eqref{P_tilde_1} and \eqref{eta_i}, we can write down the success probability of detecting the noisy state $\eta^{(i)}_{\Vec{A}B}$ by the noiseless POVM defined above. This is the main objective of study in this Section:
\begin{widetext}
    \begin{align}
\label{noiseless_succ_prob}
        \forall_{1\leq i \leq N} \quad \tr\Pi^{(i)}_{\Vec{A}B}\eta^{(i)}_{\Vec{A}B} & = \frac{1+\vert\Gamma\vert\cos{\theta}}{2}\tr\Pi^{(i)}_{\Vec{A}B}\sigma^{(i)}_{\Vec{A}B} + \frac{1-\vert\Gamma\vert\cos{\theta}}{2}\tr\Pi^{(i)}_{\Vec{A}B}{\omega}^{(i)}_{\Vec{A}B} \\
        &+ \frac{i\vert\Gamma\vert\sin{\theta}}{2}\tr\Pi^{(i)}_{\Vec{A}B}\left[\left(\ketbra{\Psi^+}{\Psi^-}-\ketbra{\Psi^-}{\Psi^+}\right)_{A_i  B}\otimes\frac{\mathds{1}_{\bar{A}_i}}{2^{N-1}}\right] \label{mixedterm}
\end{align}
\end{widetext}

The first term above is the original Ishizaka-Hiroshima, calculated in \cite{ishizaka_asymptotic_2008,ishizaka_quantum_2009}. The second term differs in that instead of $\ket{\Psi^-}$, the ensemble states are constructed using $\ket{\Psi^+}$, cf. \eqref{omega_i}, and the last term is a mixed term. This allows to easily diagonalize all the operators in \eqref{noiseless_succ_prob}.  Building on the original works, let us introduce two families of states:
\begin{equation}
    \forall_{1\leq i \leq N} \quad\ket{\xi_{\mp}^{(i)}\left(s,m,\beta\right)} = \ket{\Psi^{\mp}}_{A_i B}\ket{\Phi^{[N-1]}(s,m,\beta)}_{\bar{A}_i},
\end{equation}
where $\ket{\Phi^{[N-1]}(s,m,\beta)}_{\bar{A}_i}$ are appropriately chosen basis states in the space of $N-1$ qubits. States $\ket{\xi_{-}^{(i)}\left(s,m,\beta\right)}$ were shown to diagonalize $\sigma^{(i)}_{\Vec{A}B}$ \cite{ishizaka_asymptotic_2008,ishizaka_quantum_2009}, and by the same argument states $\ket{\xi_{+}^{(i)}\left(s,m,\beta\right)}$ diagonalize ${\omega}^{(i)}_{\Vec{A}B}$. We note that $\ket{\xi_{+}^{(i)}\left(s,m,\beta\right)}$ vectors are just the $\sigma_z$ rotated Ishizaka-Hiroshima vectors (see Secion II in~\cite{ishizaka_quantum_2009}), where the rotation can be conveniently applied to the qubit $B$:  $\ket{\xi_{+}^{(i)}\left(s,m,\beta\right)} =\sigma_z^B \ket{\xi_{+}^{(i)}\left(s,m,\beta\right)}$, cf. \eqref{P_tilde}. To complete the calculation, we need the following overlaps, which can be calculated  for every $1\leq i \leq N$:
\begin{widetext}
    \begin{align}
        \braket{\xi^{(i)}_{+}(s,m,\beta)}{\Psi_{\text{I}}\left(\lambda^-_{s-1/2};m^\prime,\beta^\prime\right)} & = \frac{m}{\sqrt{s(2s+1)}} \delta_{m,m^\prime}\delta_{\beta, \beta^\prime} \\
        \braket{\xi^{(i)}_{+}(s,m,\beta)}{\Psi_{\text{II}}(\lambda^{+}_{s+1/2};m^\prime,\beta^\prime)} & = \frac{m}{\sqrt{(s+1)(2s+1)}} \delta_{m,m^\prime}\delta_{\beta, \beta^\prime}, \\
    \end{align}
\end{widetext}
the other two being zero. We can now evaluate \eqref{noiseless_succ_prob}, using the fact that due to the unitary symmetry, every object has a block diagonal structure like \eqref{rhos}. It is thus enough to calculate $ \tr\Pi^{(i)}_{\Vec{A}B}\eta^{(i)}_{\Vec{A}B}$ in the subspace of a fixed total spin $s$. Dropping the qubit indices for simplicity we obtain for the second term of \eqref{noiseless_succ_prob}:
\begin{widetext}
    \begin{align}
       \forall_{1\leq i \leq N} \quad  \tr\Pi^{(i)}{(s)}\omega^{(i)}(s) & = \tr\rho(s)^{-1/2}\sigma^{(i)}(s)\rho(s)^{-1/2}\omega^{(i)}(s) \label{eq:48}\\
        & = \frac{1}{2^{2N-2}}\cdot \frac{1}{3} g^{[N-1]}(s) \frac{s(s+1)}{2s+1} \left[\left(\lambda^-_{s-1/2}\right)^{-1/2} -\left(\lambda^+_{s+1/2}\right)^{-1/2}\right]^2, \label{eq:49}
\end{align}
\end{widetext}
and last term in \eqref{mixedterm}, which contains $\ketbra{\Psi^+}{\Psi^-}$. Vanishing  term~\eqref{mixedterm} and expression~\eqref{eq:49} are derived in Appendix~\ref{App_noiseless}. Finally, putting all the steps together, we obtain the following entanglement fidelity: 

\begin{align}
        F = \frac{1+\vert\Gamma\vert\cos\theta}{2}F_{IH} + \frac{1-|\Gamma|\cos\theta}{2}F_{\mathrm{corr}} \label{ent_fid_NL}.
    \end{align} 
In the above expression $F_{IH}$ is the Ishizaka-Hiroshima entanglement fidelity \cite{ishizaka_asymptotic_2008,ishizaka_quantum_2009}:
\begin{equation}
\label{eq:F_ih_noiseless}
    F_{IH} = \frac{1}{2^{N+3}}\sum_{k=0}^N\left(\frac{N-2k-1}{\sqrt{k+1}}+\frac{N-2k+1}{\sqrt{N-k+1}}\right)^2\binom{N}{k}.
\end{equation}
For the reader's convenience the term $F_{IH}$ is calculated in Appendix~\ref{derivationPRA}.
The second term $F_{corr}$ in~\eqref{ent_fid_NL} is the new entanglement fidelity correction term:
\begin{widetext}
    \begin{equation}
\label{eq:F_corr_term0}
F_{\mathrm{corr}}=\frac{1}{3}\frac{1}{2^{N+3}}
\sum_{k=0}^{N}
\binom{N}{k}
\left[(N-2k)^2-1\right]
\left(
\frac{1}{\sqrt{k+1}}
-
\frac{1}{\sqrt{N-k+1}}
\right)^2.
\end{equation}
\end{widetext}
The large-\(N\) behaviour of \(F_{\mathrm{corr}}\) can be understood
without going through the full derivation given in
Appendix~\ref{app:F_corr_term}. Indeed, the binomial sum in
Eq.~\eqref{eq:F_corr_term0} may be interpreted as an expectation with
respect to a random variable \(K\sim \mathrm{Bin}(N,1/2)\), with the
overall prefactor \(1/24\). Writing \(K=N/2+X\), the relevant values of
\(X\) are typically of order \(\sqrt{N}\). In this central region we have
\begin{equation}
    N-2K=-2X,
\end{equation}
and
\begin{equation}
    \frac{1}{\sqrt{K+1}}
    -
    \frac{1}{\sqrt{N-K+1}}
    =
    -\frac{X}{(N/2+1)^{3/2}}
    +
    O\!\left(\frac{|X|^3}{N^{7/2}}\right).
\end{equation}
Consequently, the summand in Eq.~\eqref{eq:F_corr_term0} is governed, to
leading order, by
\begin{align}
    &\bigl[(N-2K)^2-1\bigr]
    \left(
    \frac{1}{\sqrt{K+1}}
    -
    \frac{1}{\sqrt{N-K+1}}
    \right)^2  \\
    &=
    \frac{32X^4}{N^3}
    +
    O\!\left(
    \frac{X^2}{N^3}
    +
    \frac{X^6}{N^5}
    \right).
\end{align}
Using the fourth central moment
\begin{align}
    \mathbb{E}[X^4]
    =
    \frac{3N^2}{16}
    +
    O(N),
\end{align}
we obtain
\begin{align}
    F_{\mathrm{corr}}(N)
    &=
    \frac{1}{24}\,
    \frac{32}{N^3}\,
    \mathbb{E}[X^4]
    +
    O\!\left(\frac{1}{N^2}\right)  \\
    &=
    \frac{1}{4N}
    +
    O\!\left(\frac{1}{N^2}\right).
\end{align}
Thus the correction term vanishes in the limit \(N\to\infty\), while
the Ishizaka--Hiroshima contribution remains the dominant term.

The above derivation show that the correction term vanishes when $N\rightarrow \infty$ and the total entanglement fidelity~\eqref{ent_fid_NL} reads $F=\frac{1}{2}(1+|\Gamma|\cos \theta)$ having maximum value 1 for $\theta=0$ and $|\Gamma|=1$ (no decoherence). Notice, that the function $F_{\mathrm{corr}}$ is not monotonic for small values of $N$. We observe monotonicity for larger values of $N$. A more detailed discussion on behaviour of $F_{\mathrm{corr}}$ is presented in Appendix~\ref{app:F_corr_term}.  

\begin{figure}[t]
    \centering \includegraphics[width=0.9\linewidth]{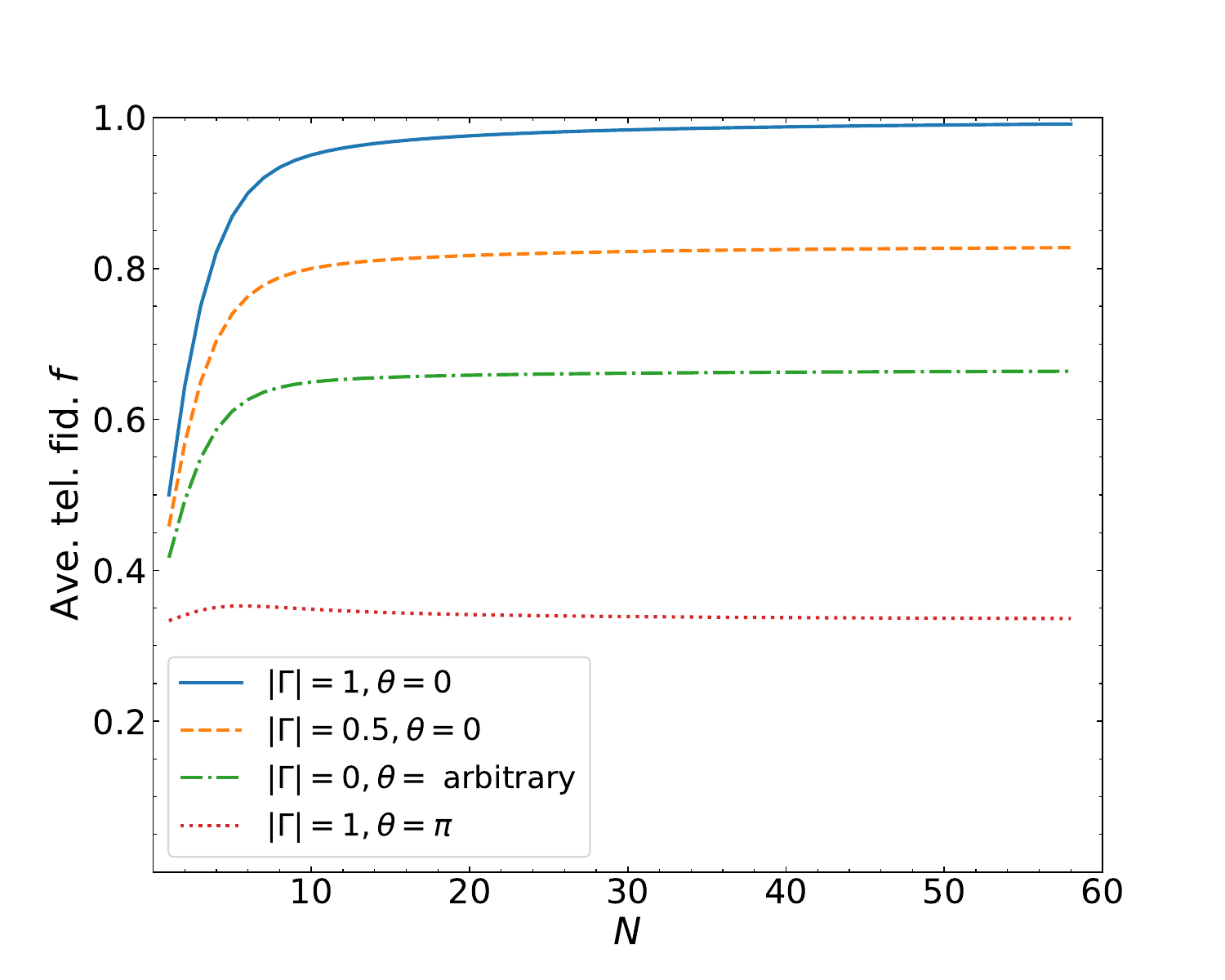}
    \caption{
        Teleportation fidelity 
        $f=(2F+1)/3$ as a function of $N$ for four various choices of the noise parameters $(|\Gamma|,\theta)$. The blue line reproduces the noiseless case from the Hiroshima-Ishizaka protocol.
    }
    \label{fig:fidelity_contour_N9}
\end{figure}

A plot of the total resulting teleportation fidelity, $f = (2F+1)/3$, as a function of $|\Gamma|$ and $\theta$ is presented in Fig.~\ref{3d_plot} for $N=9$ (there is no qualitative change for different $N$). The no decoherence case corresponds on this plot to the corner $|\Gamma|=1$ and $\theta=0$. There is a pronounced minimum for $\theta=\pi$, which corresponds to the zeroing of the Ishizaka-Hiroshima term in \eqref{ent_fid_NL}, which is the dominating term. The maximum, of the other hand is obtained for $\theta=0$, i.e. for a real decoherence factor $\Gamma$. This maximum can be obtained by a semi-adaptive procedure, where instead of using the original PGM, one rotates them by $\theta$: $\Pi_i(\theta)=R^\dagger(\theta)_B \Pi_i^{\vec{A} B} R(\theta)_B$. One can the easily show, starting from \eqref{P_tilde}, that this leads to the fidelity \eqref{ent_fid_NL} with $\theta=0$. Of course, this strategy assumes a partial knowledge about the noise, the phase of the decoherence factor $\theta$, and the ability to use it for the POVM optimization. As we mentioned, for $\theta =0$, the above strategy reduces to a special case of the Pauli noise channel, cf. \eqref{single_q_channel}, studied in \cite{Kim}.  Importantly, this qualitative picture persists in the asymptotic regime of
large~$N$.
While the Ishizaka--Hiroshima contribution approaches unity as
$N \to \infty$, the correction term $F_{\mathrm{corr}}$ vanishes as
$\mathcal{O}(1/N)$. This is illustrated in Figure~\ref{fig:fidelity_contour_N9}.
 The fidelity obtained from Eq.~\eqref{ent_fid_NL} differs from that of Ref.~\cite{Kim}: it is always lower, by at most approximately \(10\%\) for \(\Gamma=0\), and becomes equal only for \(\Gamma=1\), where all fidelities reduce to the Ishizaka--Hiroshima value. A detailed comparison with the approach of Kim and Jeong~\cite{Kim} is presented in Appendix~\ref{app:Kim}.

To clarify the behaviour of our expression in the many-port regime, and
to make explicit the part of the fidelity loss which cannot be removed
by increasing \(N\), we now analyse the large-\(N\) limit of Eq.~\eqref{ent_fid_NL}.
Using Eq.~\eqref{ent_fid_NL} with
\[
    \alpha:=\frac{1+|\Gamma|\cos\theta}{2},
\]
together with the asymptotic expansions
\begin{align}
    F_{\mathrm{IH}}(N)
    &=
    1-\frac{c_{\mathrm{IH}}}{N}
    +
    O(N^{-2}),
    \\
    F_{\mathrm{corr}}(N)
    &=
    \frac{1}{4N}
    +
    O(N^{-2}),
\end{align}
where $c_{IH}=3/4$ (see. Eq.~\eqref{F_IH_inf}), we obtain, for fixed decoherence parameter \(\Gamma=|\Gamma|e^{i\theta}\),
\begin{align}
    F(N;\Gamma,\theta)
    =
    \alpha
    -
    \frac{\alpha c_{\mathrm{IH}}-\frac{1}{4}(1-\alpha)}{N}
    +
    O(N^{-2}).
\end{align}
Hence,
\begin{align}
    \lim_{N\to\infty}F(N;\Gamma,\theta)
    =
    \frac{1+|\Gamma|\cos\theta}{2}.
\end{align}
This limit holds for any pure-dephasing model of the form considered in
Sec.~\ref{Sec:noisyPBT_model}, provided that the microscopic noise parameters are
kept fixed as \(N\to\infty\).  Equivalently, in the spin--boson realization of
Sec.~\ref{Sec:spin-boson}, the parameters \(\tau,\ell,T/\Lambda\), and \(s\) are
fixed, so that \(\Gamma(\tau,\ell)\) is fixed while the number of ports is increased.
If the microscopic noise parameters themselves scale with \(N\), then the limit should
instead be evaluated along the corresponding sequence \(\Gamma_N\).

This shows that increasing the number of ports removes only the finite-\(N\) PBT
error.  It cannot remove the decoherence-induced ceiling
\begin{align}
    F_\infty(\Gamma,\theta)
    =
    \frac{1+|\Gamma|\cos\theta}{2}.
\end{align}
Therefore, a target entanglement fidelity \(F_\ast\) can be reached only if
\begin{align}
    F_\ast < F_\infty(\Gamma,\theta).
\end{align}
Within the leading-order approximation, the required number of ports is estimated by
\begin{align}
    N
    \gtrsim
    \frac{
        \alpha c_{\mathrm{IH}}-\frac{1}{4}(1-\alpha)
    }{
        \alpha-F_\ast
    },
\end{align}
whenever the numerator is positive.  Thus the port overhead diverges as
\begin{align}
    N=O\!\left(\frac{1}{\alpha-F_\ast}\right)
\end{align}
when the target fidelity approaches the noise-dependent ceiling.  If the numerator is
non-positive, then the first-order correction does not impose an additional large-\(N\)
lower bound, and the target fidelity condition is governed mainly by
\(F_\ast<\alpha\), together with subleading corrections.

\begin{figure*}[t]
\centering

\begin{minipage}{0.51\linewidth}
    \centering    \includegraphics[width=\linewidth]{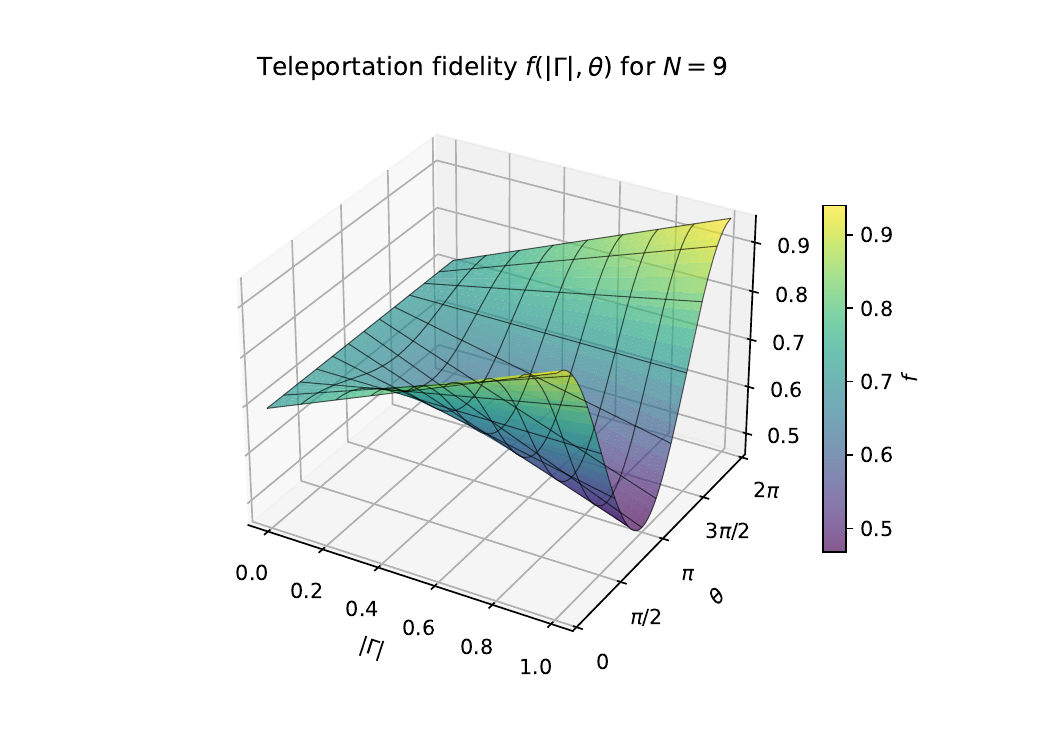}
    
    \vspace{-2mm}
    {\small (a)}
\end{minipage}
\hfill
\begin{minipage}{0.48\linewidth}
    \centering    \includegraphics[width=\linewidth]{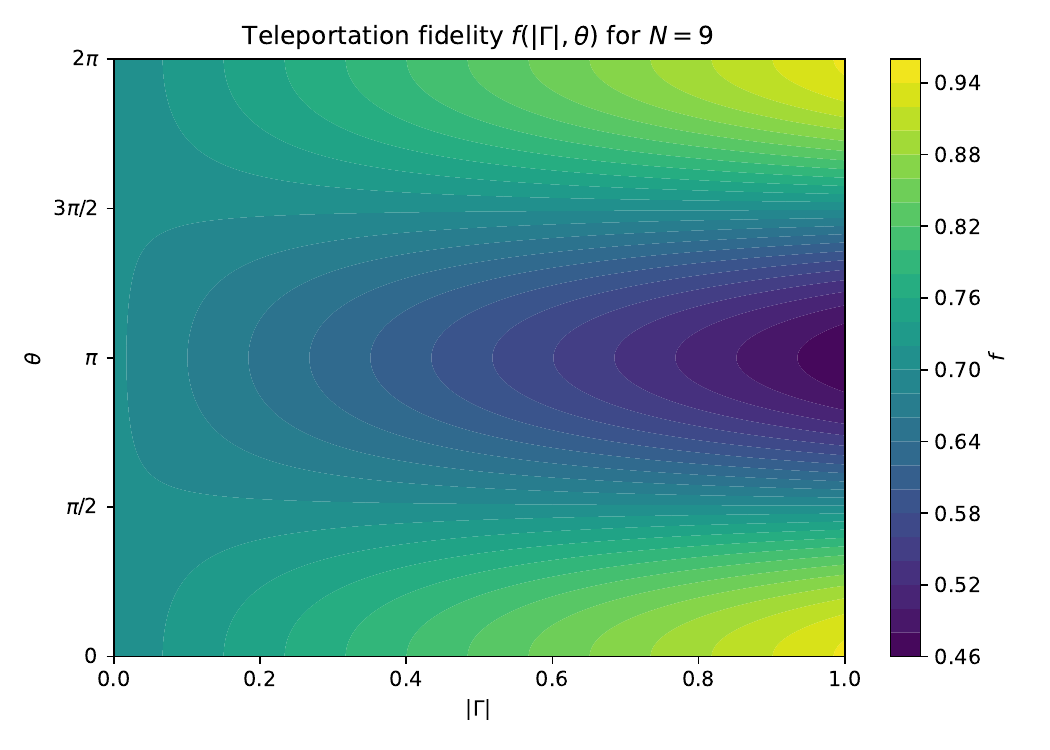}
    
    \vspace{-2mm}
    {\small (b)}
\end{minipage}
\caption{The average teleportation fidelity for $N=9$, calculated using the entanglement fidelity  Eq.~\eqref{ent_fid_NL}, is plotted against $|\Gamma|$ and $\theta$. The results are shown as
(a) a three-dimensional surface plot and (b) a contour plot.}
\label{3d_plot}
\end{figure*}

\section{PBT using noise-adapted measurements}
\label{noisyPOVMs}

The final, and the most complicated, scenario is that of discriminating measurements adapted to the noisy signal states \eqref{eta_i}. The natural choice here is of course again the PGM, but this time constructed from the noisy states $\eta_{\vec{A} B}^{(i)}$. Following the general PGM recipe  \eqref{PGM}, one first has to find the average noisy ensemble state (we keep using un-normalized states):
\begin{align}\label{noisyav}
    \bar\eta= \sum_{i=1}^N \eta_{\vec{A} B}^{(i)}.
\end{align}
Unlike for the noiseless average \eqref{avrho}, there does not seem to be an obvious diagonalizing procedure for this state. First, the unitary symmetry of the noiseless case is lost whenever the decoherence is present, since the $A_iB$ part of each $\eta_{\vec{A} B}^{(i)}$, given by $\tilde{P}^-_{A B}$, cf. \eqref{dephasing_channel}, \eqref{eta_i}, does not have a definite symmetry under the product action of $SU(2)$, as we have explained below \eqref{PGM}. Second, although each $\eta_{\vec{A} B}^{(i)}$ can be easily diagonalized, cf. \eqref{diag}, they overlap on the qubit $B$ making the diagonalization of the sum non-trivial. We thus have to resort to the numerics.  We have employed in total three numerical methods, specifying to the case $\theta=0$ expecting this is the best case scenario, cf. Fig.~\ref{3d_plot}, and cross checking between the methods whenever possible:

(1) In the first method, the signal states $\eta_{\vec{A} B}^{(i)}$ and the ensemble average \eqref{noisyav} were constructed with a python program. 
The noisy PGM POVMs $\{\Pi_{i}\}$ were generated using the eigensolver of numpy library \cite{harris2020numpy}, where we first diagonalized \eqref{noisyav}, discarded the zero eigenvalues, then constructed $\bar\eta^{-1/2}$ and from this the POVMs, checking numerically that the obtained operators were positive. The final POVM set also included the projector on the zero-eigenvalue subspace completeness. We finally computed \eqref{F} and from it the teleportation fidelity \eqref{teleportfid}. 

(2) In the second method, the starting point was the same, but the $\bar\eta^{-1/2}$ was computed differently, using the Taylor expansion:
\begin{align}
    \bar{\eta}^{-1/2} = \mathds{1}-\frac{1}{2}(\bar{\eta}-\mathds{1})+\frac{3}{8}(\bar{\eta}-\mathds{1})^2
    -\frac{5}{16}(\bar{\eta}-\mathds{1})^3+\cdots
\end{align}
and including the terms up to the order of 4000. The non-zero eigenspace gets perfectly inverted while the zero eigenspaces obviously diverge with the increased number of iterations. However, those terms disappear when the POVMs are constructed as the signal states $\eta^{(i)}$ turned out to be supported on subspaces orthogonal to the kernel of $\bar{\eta}$, which is rather a non-trivial observation, revealed by the numerics. The POVMs were again checked for the positivity, which was not an issue within the numerical precision due to the very high order of the expansion.

(3) The third method was a direct use of the Wolfram Mathematica program, which was feasible only for the simplest case of $N=2$. We inserted the signal states by hand and then used the Mathematica eigensolver to diagonalize $\bar\eta$. From this, the POVMs were constructed and the teleportation fidelity computed.   

All the three methods gave the same results within the numerical precision and whenever applicable. As a final cross-check, all the methods we also applied to the noiseless case of Section \eqref{Sec noiseless}  and the confirmed the Eq.~\eqref{ent_fid_NL} for $\theta=0$. The results are presented in Fig.~\ref{fid_2_5_9}(a) for $N=2,5,9$ together with the noiseless curves and the Beigi-Koenig bound \eqref{BKbound} for $N=9$ for a comparison.  Due to the exponential increase of the dimension, $d = 2^{N+1}$, $N=9$ was the maximum available within a normal laptop capabilities but it turned out to be perfectly enough to reveal the behaviour of the considered strategies.

The motivation behind the comparative analysis presented in Fig.~\ref{fid_2_5_9}(a) is to determine whether a noise-adapted PBT protocol, employing the PGM associated with the noisy signal ensemble, can achieve a higher teleportation fidelity than the original deterministic PBT protocol, whose POVM is constructed under the assumption of a noiseless shared resource state. Such an advantage might naturally be anticipated from a conventional viewpoint, since the adapted measurement explicitly incorporates the effects of noise present in the signal ensemble. Additionally, the comparison is well motivated from an operational standpoint as well: while the steps, such as encoding, transmission, measurements and correction procedures, of a quantum communication protocol are typically predetermined and standardized, the channels used to distribute the entangled resource states are generally exposed to environmental noise and other imperfections.

Quite surprisingly, the noise-adapted PGM performs worse than the noiseless PGM apart from $N=2$. The noise-adapted strategy is better only in the region of strong decoherence, close to $\Gamma=0$, but even that region is shrinking with increasing $N$, which is a bit of a surprise in its own right. The optimality of PGM for PBT \cite{leditzky2020optimality} was shown only for the ideal case of uncorrupted Bell states, using the unitary symmetry. The optimality criteria for the PGM in the case of mixed-state signal ensembles with 
$N>2$ are highly nontrivial. Nevertheless, among the known results, there exist optimality conditions for geometrically uniform ensembles, in which the signal states can be generated from a reference state through the action of unitary operators belonging to a symmetry group. In this setting, the PGM is known to be optimal for pure state ensembles and for mixed states with some additional condition (Theorem 3 of \cite{eldar2004optimal}; see also \cite{PhysRevA.81.012315}).
It is therefore not surprising that the PGM may cease to be optimal in the presence
of decoherence.  What is surprising, however, is that it is beaten by the noiseless
PGM, which is completely agnostic to the changes in the ensemble introduced by the
decoherence. Apart from geometrically uniform ensemble based criterion, there exists a more general Gram-matrix-based optimality criterion for PGM \cite{ItenRenesSutter}. A PGM is optimal if and only if the Gram-matrix, $G$, of the ensemble commutes with block-diagonal operator constructed from the diagonal blocks of $\sqrt{G}$. However, due to the complexity and the size of the ensemble, establishing such general results for our PBT scenario stays outside the scope of this paper.

As a remark, there is of course a case when the optimal, noisy measurement is known: For $N=2$ there is the famous Helstrom optimality bound \cite{helstrom1976quantum}, given in the terms of the trace distance between the states $P_{\rm succ}\leq 1/2+(1/4)\Vert\eta^{(1)}-\eta^{(2)}\Vert_1$. A direct calculation shows that (see Appendix~\ref{Helstrom}):
\begin{align}
    \Vert\eta^{(1)}-\eta^{(2)}\Vert_1 = \sqrt{1+2|\Gamma|^2}, 
\end{align}
and therefore the entanglement fidelity is upper-bounded for $N=2$ by:
\begin{align}
\label{eq:Hel_bound}
    F\leq\frac{1}{4}\left(1+\frac{\sqrt{1+2|\Gamma|^2}}{2}\right).
\end{align}
The plot of this bound together with the PGM fidelity for both noiseless and noise-adapted scenarios is presented in Fig.~\ref{fid_2_5_9}(b). There is a substantial gain w.r.t. the noiseless measurement and only a slight one w.r.t. the noise-adapeted one. 

\begin{figure*}[t] 
	\centering
	\begin{minipage}{0.48\textwidth}
		\centering    \includegraphics[width=\linewidth]{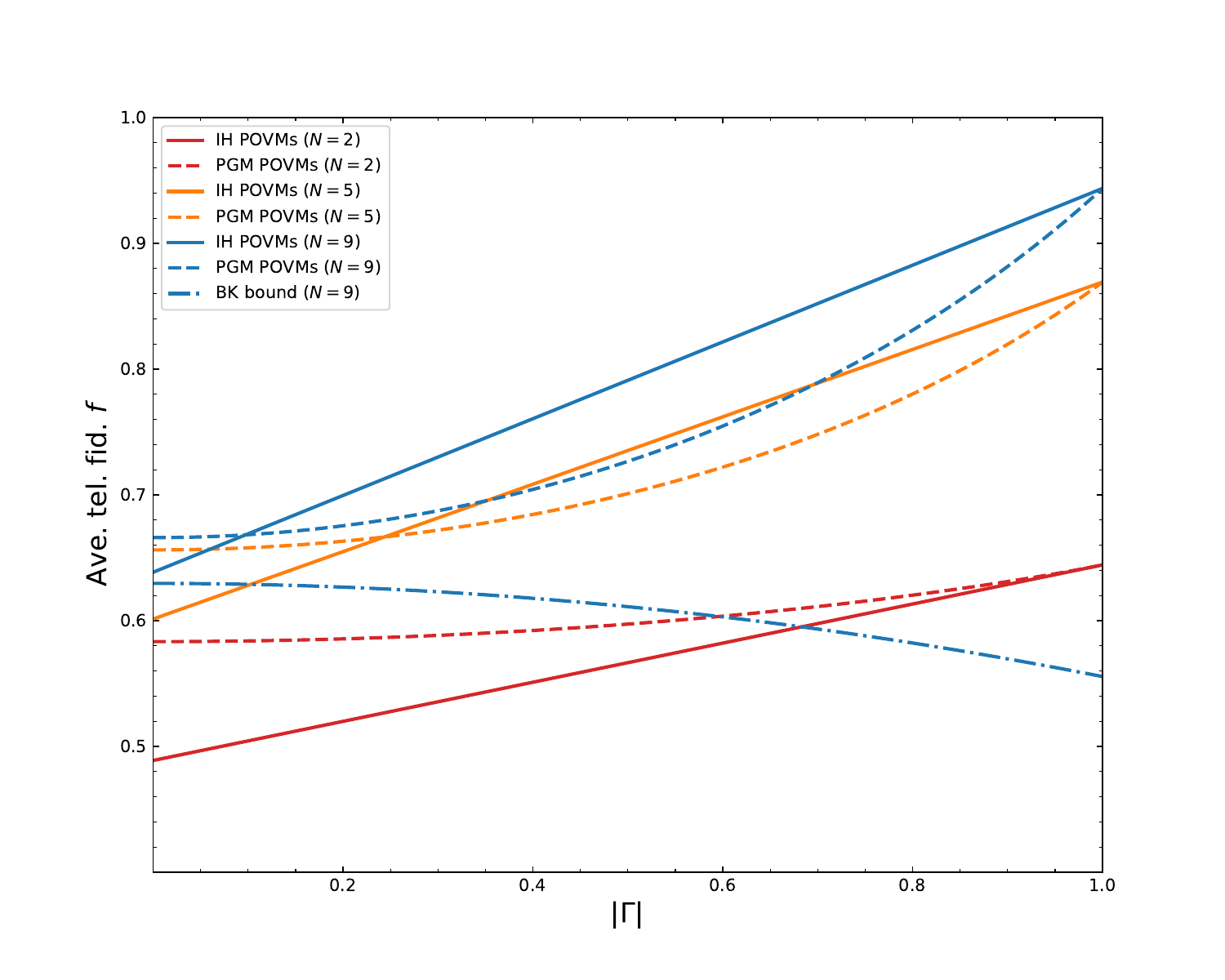} 
		\\[4pt] (a)
	\end{minipage}\hfill
	\begin{minipage}{0.48\textwidth}
		\centering		\includegraphics[width=\linewidth]{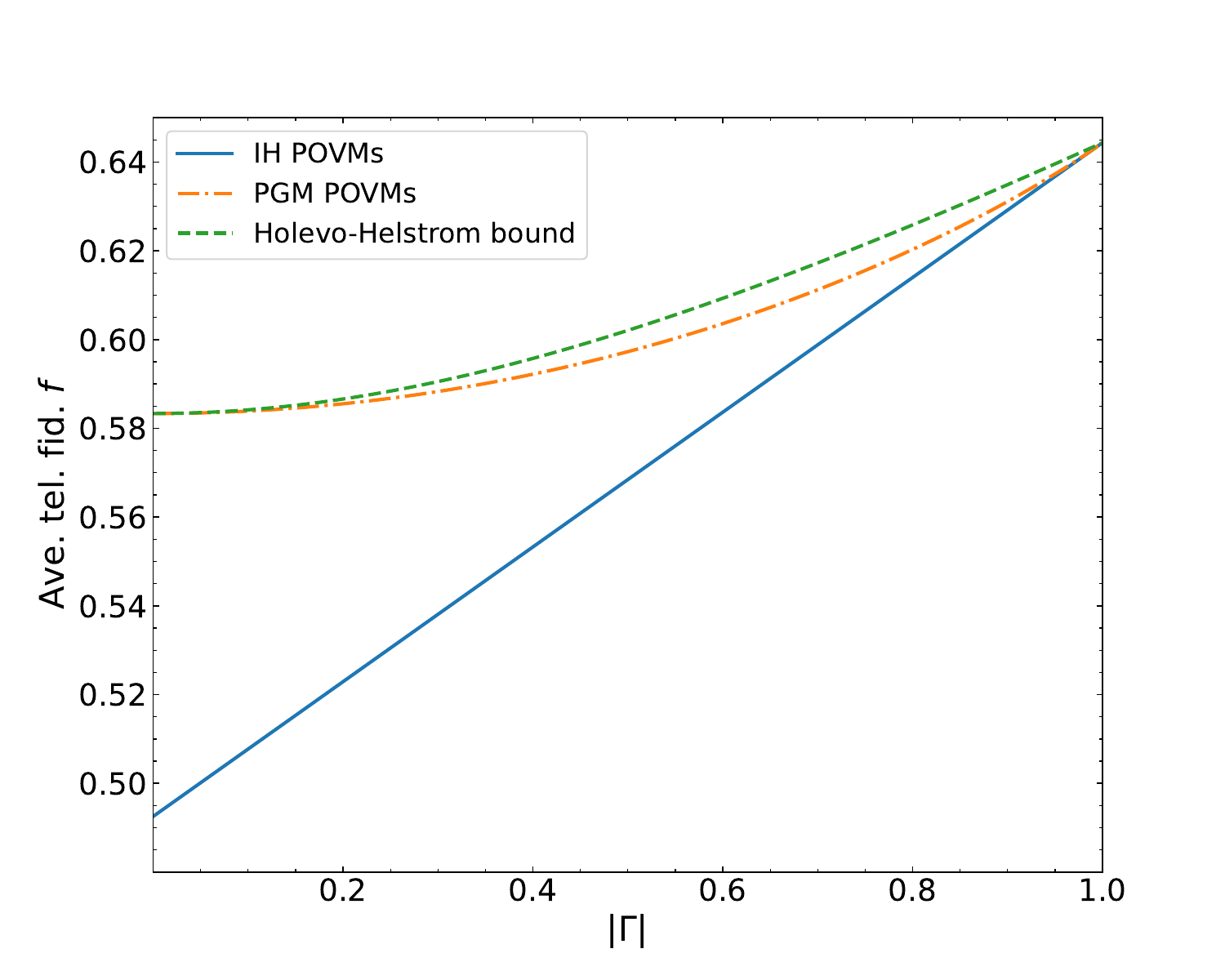} 
		\\[4pt] (b)
	\end{minipage}
		\caption{(a) The average teleportation fidelities for noiseless Ishizaka--Hiroshima POVMs (solid lines), computed using Eq.~\eqref{ent_fid_NL}, and the average teleportation fidelities for PGMs (dashed lines), computed using the Python eigensolver program, are plotted for \(N=2,5,9\). The Beigi--K\"onig bound for \(N=9\) is plotted with a blue dash-dotted line. (b) Average teleportation fidelities for noiseless POVMs (blue solid line), PGMs (orange dash-dotted line), and the maximum fidelity allowed by the Helstrom bound (green dashed line) are plotted for \(N=2\).}
	\label{fid_2_5_9}
\end{figure*}

At this stage, it is instructive to relate the strongly decohered regime, \textit{i.e.}~\( |\Gamma| \rightarrow 0 \) for arbitrary \(\theta\), to the standard teleportation benchmark. According to the Horodecki criterion~\cite{HorodeckiMPRFidelity}, a qubit teleportation protocol is nonclassical only when the average teleportation fidelity exceeds the classical threshold \(f>2/3\), corresponding to an entanglement fidelity \(F>1/2\) in Eq.~\eqref{teleportfid}. In the strong decoherence regime $|\Gamma|=0$, Eq.~\eqref{ent_fid_NL} gives \(F = (F_{IH}+F_{corr})/2\), implying \(F \leq 1/2\) for all \(N\), see Fig.\ref{fig:fidelity_vs_N_Fcorr}(b). Hence, the shared resource ceases to be useful for quantum teleportation. Equivalently, when \( |\Gamma|=0 \), the shared bipartite state becomes separable (see Eq.~\eqref{strong_decoh}), indicating that the decoherence channel is entanglement breaking~\cite{ent_breaking_channels}. Consequently, the teleportation fidelity reduces to, or falls below, the classical threshold.

The observation that the noise-adapted PGM is generally outperformed by the noiseless PGM for finite $N>2$, except in a narrow region of strong decoherence, is somewhat surprising and deserves additional emphasis. While we do not currently have a definitive explanation for this behavior, one plausible interpretation is related to the symmetry structure of the signals and the ensemble states. In the absence of noise, signals and the ensemble possess a high degree of unitary symmetry, and the corresponding PGM is supported on the associated symmetry-protected subspace which is relevant for the teleportation. Decoherence partially breaks these symmetries and enlarges the support of the noisy signal states, leading to a modified PGM adapted to a broadened noise-dominated support which might not necessarily be useful for the teleportation. At present, a more complete understanding of this phenomenon remains an interesting open question.

\section{Case study: Spin-boson model}
\label{Sec:spin-boson}

In this section, we provide a concrete realization of the pure-dephasing model used throughout the paper. We will use the well-known spin-boson model, which is one of the standard models describing decoherence effects in qubits. We will assume the self-Hamiltonian of the qubits is suppressed, which is motivated by the fact that one would like to produce Bell states for each pair and keep the qubits in those states. We consider Hamiltonian of the form~\cite{Tuziemski2018}
\begin{equation}   H=\sum_{m=A,B}\frac{\sigma_z^{(m)}}{2}\otimes \sum_{k}(g_{mk}b_k^\dagger+g_{mk}^*b_k) + \sum_k H_k,
\end{equation}
where $\sigma_z^{(m)}$ are the $m$-th qubit $z$-axis Pauli matrices, $b_k,b_k^\dagger$ are the $k$-th mode annihilation and creation operators, $g_{mk}$ are complex coupling constants, and  $H_k=\omega_k b_k^\dagger b_k$. One can easily check it is of the form \eqref{Hint} with  $V_{ij}=\sum_k(\epsilon_i g_{1k} b_k^\dagger +\epsilon_j g_{2k}b^\dagger_k+h.c.+H_k)$, where $\epsilon_i=\mp1/2$ for $i=0,1$ respectively.
We assume position-dependent coupling to the bath,
depending explicitly on the spatial separation $r$ between
the qubits forming each Bell pair, i.e.  ${g}_{mk}=g_ke^{-kr_m}$, $r=|r_A-r_B|$.  This describes e.g. the interaction with the electromagnetic field in the dipole approximation. All the information about the interaction with the environment is encoded in the complex decoherence parameter:
\begin{equation}
\Gamma(t,r)=|\Gamma(t,r)|e^{i\theta(t,r)}.
\end{equation}

We further assume the environment to be thermal, which is the most physically relevant example, i.e. the initial state of the  environment is $R(0)=1/Ze^{-\beta\sum_k H_k}$, with the inverse temperature $\beta$. 
Following the standard procedure, we incorporate the coupling constants $g_k$ into a single spectral density function. The most popular choice of the latter is the generalized Ohmic spectral density with an exponential cutoff:
\begin{equation}
J(\omega)=\frac{\omega^s}{\Lambda^{s-1}}e^{-\omega/\Lambda},
\end{equation}
where $\Lambda$ is a high-frequency cutoff and s is the so-called Ohmicity parameter. Interestingly, the latter is known to control the Markovianity properties of the model \cite{PhysRevLett.103.210401}: For $s\leq 2$ the model is Markovian, while for $s \geq 3$ it is not.  Physically, this means that for $s\leq 2$ the state of the qubits at each moment $t$ depends only on that moment. In the language of correlations, this implies that the correlations buildup between the qubits and the  environment monotonically destroys the initial coherences. One imagines this a monotonic flow of certain quibit-related information out into the environment. For $s \geq 3$, however, this picture no longer holds. The state of the qubits acquires a dependence on their previous evolution, indicating the presence of memory effects. As a result, the coherences no longer decay monotonically; in addition to their suppression, revivals of coherence may occur. From an information-theoretic perspective, this behavior can be interpreted as a bidirectional exchange of information between the qubits and the environment, with information flowing back from the environment to the system.

Introducing
dimensionless variables
\begin{equation}
\tau = t\Lambda,
\qquad
\vartheta = \frac{T}{\Lambda},
\end{equation}
it is convenient to parametrize the separation dependence by the dimensionless distance
parameter
\begin{equation}
\ell = \frac{r\Lambda}{c}.
\end{equation}
In terms of these variables, the decoherence factor takes the form
\begin{equation}
\Gamma(\tau,\ell)
=
\exp\!\left[-\chi(\tau,\ell)+i\theta(\tau,\ell)\right],
\label{eq:64}
\end{equation}
where the explicit expressions for $\chi(\tau,\ell)$ and $\theta(\tau,\ell)$ are derived in
Appendix~\ref{app:spin-boson} as,
\begin{widetext}
    \begin{align}
\chi(\tau,\ell)
&=
2\int_0^\infty d\tilde\omega\;
\tilde\omega^{\,s-2} e^{-\tilde\omega}\,
\bigl(1-\cos(\tilde\omega\tau)\bigr)\,
\coth\!\left(\frac{\tilde\omega}{2\vartheta}\right)\,
\bigl[1-\cos(\tilde\omega\ell)\bigr],
\label{eq:H12_newa}
\\[1ex]
\theta(\tau,\ell)
&=
\frac{1}{2}\int_0^\infty d\tilde\omega\;
\tilde\omega^{\,s-2} e^{-\tilde\omega}\,
\bigl(1-\cos(\tilde\omega\tau)\bigr)\,
\sin(\tilde\omega\ell).
\label{eq:chi_theta}
\end{align}
\end{widetext}
The integrals can be calculated explicitly for $s>1$, see e.g. \cite{Tuziemski2018}. More details are recalled in Appendix~\ref{app:spin-boson}. We also recall that independently of the microscopic structure of the bath, the reduced state
of each Bell pair after tracing out the environment can always be written in the form
\begin{equation}
\tilde P^-_{AB}(t,r)
=
(\mathds{1}\otimes\mathcal E_{\Gamma(t,r)})
\bigl(P^-_{AB}\bigr),
\end{equation}
so that all results derived in Secs.~\ref{Sec noiseless} and~\ref{noisyPOVMs} remain valid after the substitution
$\Gamma\rightarrow\Gamma(t,r)$.

We first consider noiseless Ishizaka--Hiroshima
measurements as they can be analyzed analytically. 
The entanglement fidelity of the
teleportation channel then reads
\begin{equation}
\begin{aligned}
    F(\tau,\ell) &=
\frac{1+|\Gamma(\tau,\ell)|\cos\theta(\tau,\ell)}{2}F_{\mathrm{IH}} \\
& +
\frac{1-|\Gamma(\tau,\ell)|\cos\theta(\tau,\ell)}{2}F_{\mathrm{corr}},
\label{eq:65}
\end{aligned}
\end{equation}
and the corresponding teleportation fidelity is
\begin{equation}
f(\tau,\ell)=\frac{2F(\tau,\ell)+1}{3}.
\label{eq:66}
\end{equation}
We plot the resulting teleportation fidelity as a function of the dimensionless time
\(\tau=t\Lambda\) in Fig.~\ref{Spin_Boson_Ohmics=2,3}(a,b) for the noiseless
Ishizaka-Hiroshima measurements (solid lines).
Two representative values of the temperature, $T/\Lambda=0.1$ and $T/\Lambda=0.9$, are chosen to
contrast low- and high-temperature environments relative to the cutoff. Increasing $T/\Lambda$ enhances thermal decoherence, leading to a faster
suppression of the fidelity at long times, while the short-time behaviour is
only weakly affected.

\begin{figure*}[htbp]
    \centering
    \begin{subfigure}[b]{0.48\textwidth}
        \centering
        \includegraphics[width=\linewidth]{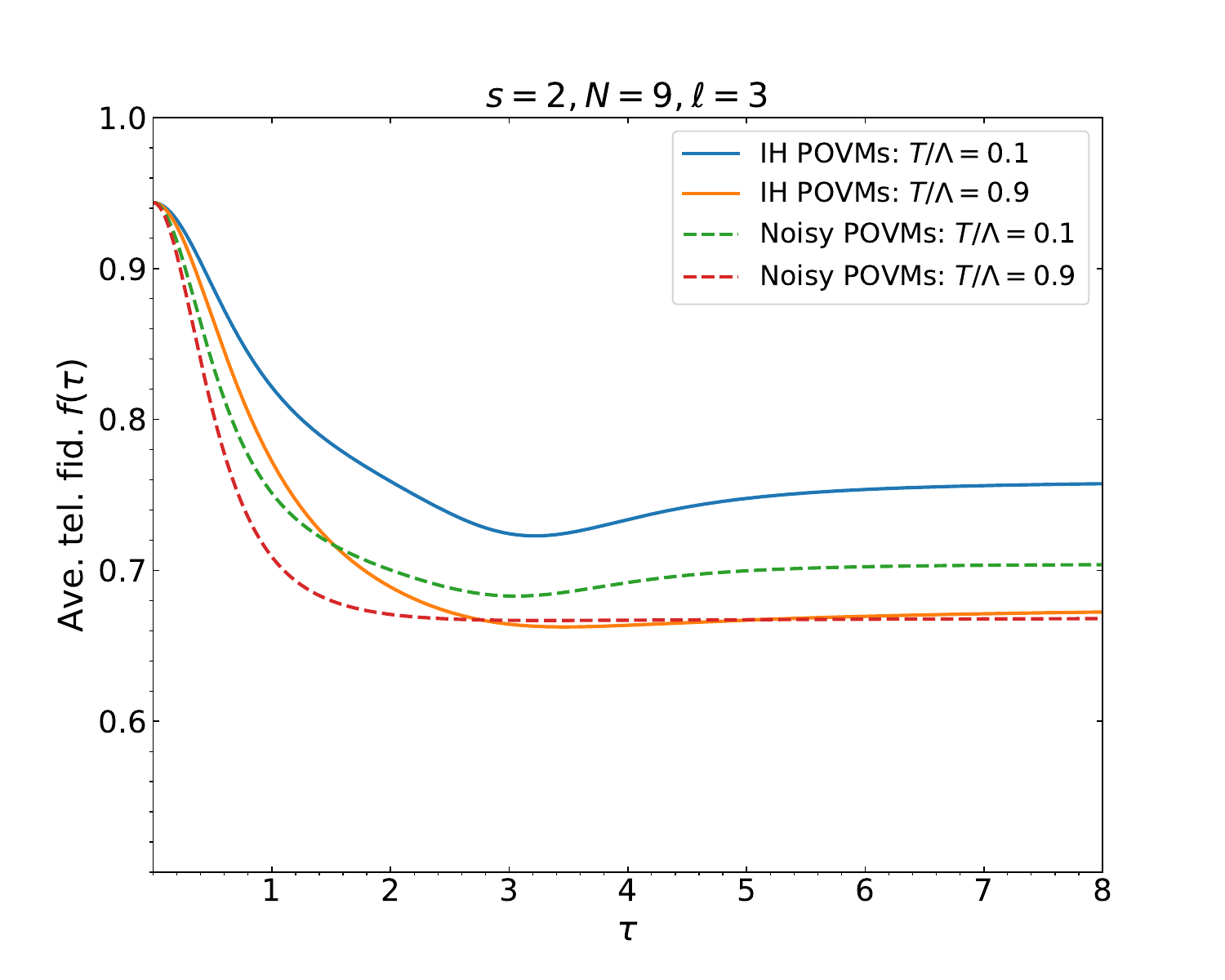}
        \caption{}
        \label{fig:sub1}
    \end{subfigure}
    \hfill
    \begin{subfigure}[b]{0.48\textwidth}
        \centering
        \includegraphics[width=\linewidth]{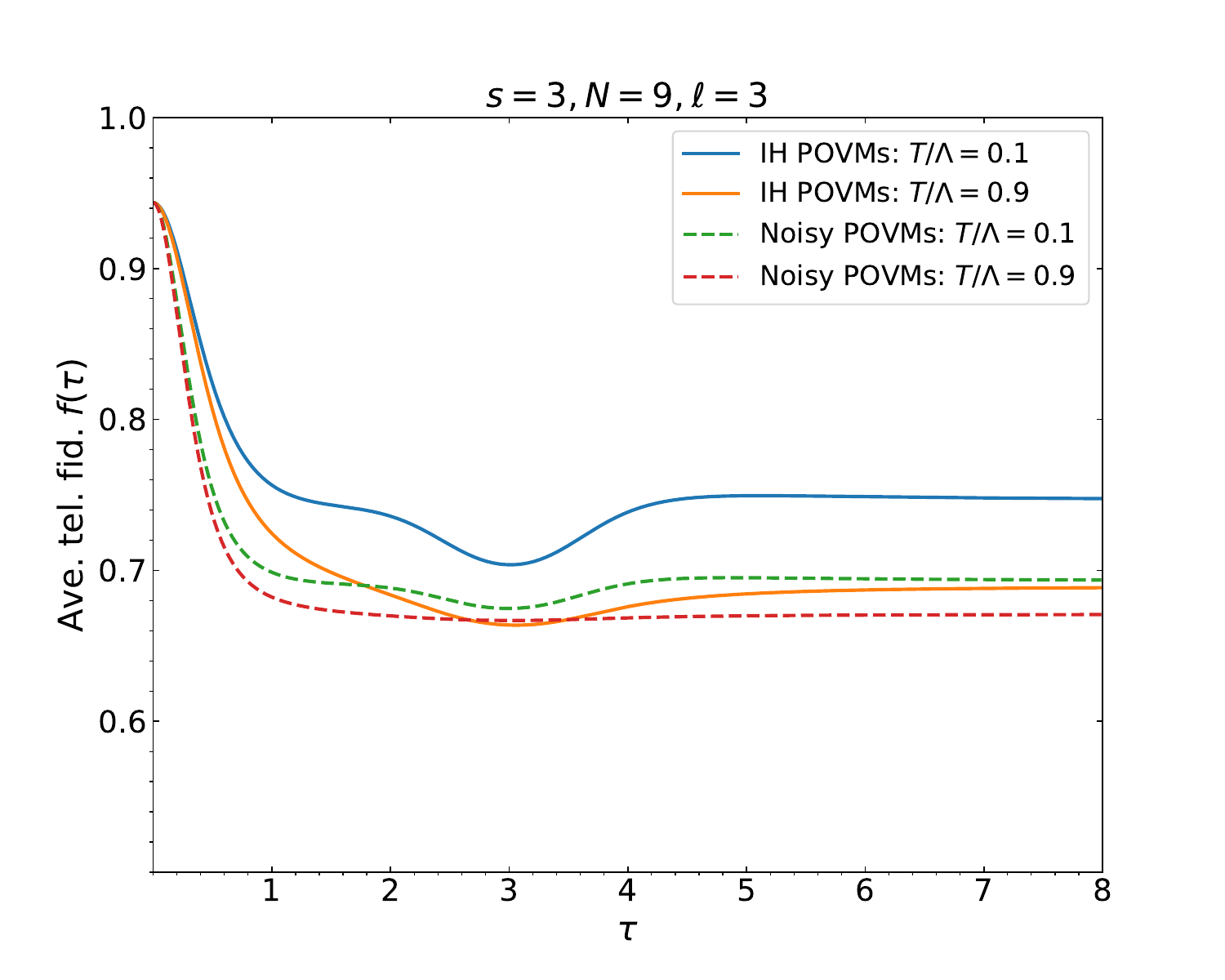}
        \caption{}
        \label{fig:sub2}
    \end{subfigure}

    \caption{Average teleportation fidelity $f(\tau)$ for $N=9$ as a function of the dimensionless time
    $\tau=t\Lambda$ for a generalized Ohmic bath with Ohmicity $s=2$ ($s=3$) and transit time
    $\ell=\Lambda\bar{t}=3$. Solid lines -- noiseless measurements; dashed lines -- noise-adapted measurements. Each curve corresponds to a fixed temperature ratio $T/\Lambda=0.1$ or $T/\Lambda=0.9$.}
    \label{Spin_Boson_Ohmics=2,3}
\end{figure*}

The case for noise-adapted measurements is studied numerically by taking $\Gamma$ as a function of $\tau$. Unlike Section~\ref{noisyPOVMs}, we consider the complex value of decoherence factor with $\chi(\tau,\ell)$ and $\theta(\tau,\ell)$ given in \eqref{eq:chi_theta}. The inverse square root of the ensemble state is computed using the eigensolver method. The corresponding results are also shown in Fig.~\ref{Spin_Boson_Ohmics=2,3}(a,b) for the noise-adapted measurements
(dashed lines). As we have predicted before, surprisingly, noise-adapted measurements perform worse than the noiseless ones, even with a large temperature difference.

We can see characteristic dips in the teleportation fidelity, so that it is not monotonically decreasing with time even for $s=2$. These dips are associated with a finite time-of-flight of the field quanta between the qubits' positions.  In the simulations shown in Fig.~\ref{Spin_Boson_Ohmics=2,3} the spatial separation is fixed to
$\ell=\Lambda r/c=3$. That distance enters the decoherence factor through the oscillatory terms in
\eqref{eq:H12_newa}\text{--}\eqref{eq:chi_theta}.
The dips are centered exactly at $\tau\simeq 3$, corresponding to the time scale at which the
distance-dependent phase $\theta(\tau,\ell)$ produces destructive interference in the coherence factor
$\Gamma(\tau,\ell)$ through the $\cos\theta(\tau,\ell)$ term.
Memory effects are reflected in the non-exponential short-time behaviour of $F(\tau,\ell)$, whereas the
long-time dynamics is controlled by the saturation of $\chi(\tau,\ell)$, explaining why increasing
temperature primarily lowers the asymptotic fidelity plateau observed in
Fig.~\ref{Spin_Boson_Ohmics=2,3}.

\section{Simulability of port-based teleportation with noisy resources}
\label{sec:simulability-noisyPBT}
A central motivation behind recent work on algorithmic implementations of port-based
teleportation~\cite{fei2023efficientquantumalgorithmportbased,Grinko2024,PRXQuantum.5.030354}
is that, although the square-root (pretty-good) measurement (SRM/PGM) is
information-theoretically optimal for distinguishing the PBT signal ensemble, its direct
implementation can be prohibitively expensive~\cite{PhysRevLett.128.220502,PRXQuantum.5.030354}.
In particular, generic Petz-type constructions may suffer from poor spectral conditioning
of the ensemble average operator, leading to exponential overheads unless additional
structure is exploited. Efficient implementations therefore rely crucially on symmetry
and representation-theoretic decompositions of the noiseless PBT ensemble. In this
subsection, we briefly discuss how noise in the PBT resource state affects the algorithmic
simulability of the protocol, focusing on the measurement stage performed by Alice.

\emph{Reference point: noiseless PBT---}
In noiseless deterministic PBT, the signal ensemble is covariant under both permutations of the ports and joint unitary transformations of the form
\( U^{\otimes N}_{\bar A}\otimes U^*_B \).
As a consequence, the ensemble average operator
\(
\bar\eta=\sum_{i=1}^N \eta^{(i)}
\)
lies in the double commutant of the natural actions of $S_N$ and $U(d)$.

By Schur--Weyl duality and joint $U^{\otimes N}\!\otimes U^*$ covariance, $\bar\eta$ decomposes as
\begin{equation}
\bar\eta
=
\bigoplus_{\lambda}
\bigl(
X_\lambda \otimes \mathds{1}_{\mathcal V_\lambda}
\bigr),
\end{equation}
acting trivially on the $S_N$ multiplicity spaces $\mathcal V_\lambda$.
Although the dimensions of the underlying representation spaces grow with the number of ports $N$, all nontrivial spectral structure of $\bar\eta$ is contained in the operators $X_\lambda$.

For fixed local dimension $d$, each $X_\lambda$ acts on
$\mathcal U_\lambda\otimes\mathbb C^d$ and decomposes into only finitely many irreducible components (at most $d$), independent of $N$.
Equivalently, $\bar\eta$ is diagonal in the Schur basis with eigenvalues depending only on coarse irrep labels and not on multiplicity indices.
This effective constant-size spectral structure—rather than the literal dimension of the representation spaces—is the representation-theoretic reason why $\bar\eta^{-1/2}$ can be evaluated efficiently and why noiseless PBT admits efficient quantum circuit implementations.

\emph{Noisy resources with noiseless PBT measurements---} We first consider the scenario in which the shared resource state is noisy, but Alice applies the same measurement as in noiseless PBT. In this case, the POVM itself is unchanged: noise affects only the quantum state entering the measurement. Algorithmically, this implies that any efficient circuit implementing the noiseless IH measurement remains valid without modification. In particular, symmetry-adapted constructions based on Schur--Weyl duality or spin-coupling techniques apply verbatim\cite{fei2023efficientquantumalgorithmportbased,Grinko2024,PRXQuantum.5.030354}. Noise degrades the teleportation fidelity but does not increase the circuit complexity of the measurement stage. Simulability in this scenario is therefore guaranteed and identical to that of noiseless PBT, independently of the noise model or strength.

\emph{Noisy resources with noise-adapted POVMs.—}
The situation changes qualitatively when the measurement is adapted to the noisy signal
ensemble. For the dephasing noise model studied here, the ensemble remains exactly
covariant under permutations of the ports,
\begin{equation}
[\bar\eta(\Gamma),\,U_\pi\otimes\mathds{1}_B]=0
\qquad \forall\,\pi\in S_N,
\end{equation}
and therefore admits a decomposition
\begin{equation}
(\mathbb C^d)^{\otimes N}\otimes\mathbb C^d
\;\cong\;
\bigoplus_{\lambda\vdash N}
\mathcal V_\lambda\otimes\mathbb C^{m_\lambda}\otimes\mathbb C^d,
\end{equation}
with $\bar\eta(\Gamma)$ block diagonal in the permutation label $\lambda$.

For any nonzero noise strength, however, the joint
$U^{\otimes N}\!\otimes U^*$ covariance of the noiseless protocol is broken, so
$\bar\eta(\Gamma)$ need no longer act trivially on the multiplicity spaces
$\mathbb C^{m_\lambda}$. Consequently, the nontrivial blocks on which
$\bar\eta(\Gamma)$ acts have dimensions
\(
D_\lambda = d\,m_\lambda,
\)
which typically grow polynomially with the number of ports $N$ (for fixed local
dimension $d$).

This loss of joint unitary symmetry removes the constant-size spectral structure that
underpins the guaranteed efficiency of noiseless PBT measurements. While permutation
symmetry still enables a blockwise treatment of the ensemble average operator, it no
longer enforces an efficient exact implementation of the noise-adapted PGMs. In
particular, the complexity of realizing the inverse square root
$\bar\eta(\Gamma)^{-1/2}$ becomes dependent on the growing block dimensions and on
quantitative spectral properties of the noisy ensemble, rather than being fixed by
symmetry alone.

\emph{Conditions for efficient circuits with noisy POVMs---}
The absence of symmetry-protected constant-size blocks does not, by itself, imply
inefficiency. Polynomial block sizes do not preclude efficient \emph{approximate}
implementations; instead, efficiency becomes contingent on quantitative spectral
properties of the ensemble average operator.

In broad classes of measurement-synthesis strategies based on functional calculus
(for example, polynomial or singular-value transformations approximating
$x\mapsto x^{-1/2}$~\cite{Gilyen2019QSVT,Low2019Hamiltonian}),
the cost of implementing the noise-adapted PGM
\begin{equation}
M_i(\Gamma)=\bar\eta(\Gamma)^{-1/2}\,\eta^{(i)}(\Gamma)\,\bar\eta(\Gamma)^{-1/2}
\end{equation}
is controlled by three parameters: the block dimensions $D_\lambda$, the blockwise
condition numbers
\begin{equation}
\kappa_\lambda(\Gamma)
=
\frac{\lambda_{\max}(\bar\eta(\Gamma)|_\lambda)}
     {\lambda_{\min}(\bar\eta(\Gamma)|_\lambda)},
\end{equation}
and the target approximation accuracy~\cite{Gilyen2019QSVT,Low2019Hamiltonian}.

Permutation symmetry ensures that the block dimensions \(D_\lambda\) grow at most polynomially with \(N\), for fixed local dimension \(d\). However, the behaviour of the condition numbers \(\kappa_\lambda(\Gamma)\) depends on the noise model and on the noise strength. For the standard measurement-synthesis strategies considered here, such as polynomial approximation or singular-value transformation of the function \(x\mapsto x^{-1/2}\), efficient approximate implementation requires these relevant blockwise condition numbers to remain polynomially controlled. If they grow too rapidly, these constructions may incur super-polynomial overhead. This statement should not be read as a general no-go result for all possible implementations of the noise-adapted measurement. Other approaches may exploit additional algebraic, approximate, or noise-specific structure. Our conclusion is therefore more modest: after the joint unitary symmetry of noiseless PBT is broken, permutation symmetry alone no longer guarantees efficient implementation of the noise-adapted PGM; the question becomes quantitative and depends on the block spectra and on the chosen synthesis method.

\section{Discussion and conclusion}
\label{Sec:discussion}
In this work we have analysed deterministic PBT in the
presence of pure-dephasing decoherence affecting both the entangled resource state and the
measurement process. Our primary objective was to understand how proposed noise models modify the performance of the PBT protocol and to
identify regimes in which analytical control can still be maintained. To this
end, we focused on a setting in which each Bell pair forming the resource state
interacts with its own identical local environment, corresponding to
independently distributed entangled links. This choice captures still an
experimentally relevant scenario while preserving sufficient symmetry to allow
for explicit analysis.

We considered two complementary noisy scenarios:
\begin{itemize}
    \item Decoherence acts only on the resource state while Alice’s measurement is assumed to be ideal.
In this case, we derived a lower bound on the teleportation fidelity and derived a closed-form
expression for the fidelity valid for arbitrary numbers of ports. This enabled
us to analyse the asymptotic behaviour of the protocol and to show that, while decoherence expectedly lowers the fidelity compared to the Ishizaka--Hiroshima result, 
the correction term vanishes when the resource state grows, leaving the overall qualitative
behaviour unchanged in the asymptotic regime.

\item In the second scenario, we addressed the problem of noisy measurements and
introduced noise-adapted POVMs tailored to the given decoherence model, which in this case means using the pretty good measurements, constructed from the noisy ensemble. While a
fully analytical treatment appears out of reach for a general~$N$, we
demonstrated that a semi-analytical approach combined with numerical
optimisation yields performance estimates. Moreover, for the special case
$N=2$, we derived an explicit analytical lower bound on the entanglement
fidelity based on the Helstrom optimality bound, providing a useful benchmark
for assessing the performance of noise-adapted measurements. Quite surprisingly, we find that noise-adapted measurements perform worse for finite $N$ than the ideal, noise-agnostic ones.
\end{itemize}

In both scenarios, we discussed the algorithmic aspects of PBT in the presence of noise,
focusing on how different noise scenarios affect the status of efficient quantum
implementations. When the Ishizaka--Hiroshima measurement is retained, noise alters
the achieved fidelity and the required number of ports but does not change the
structure of the measurement itself, so existing efficient implementations remain
applicable. In contrast, for noise-adapted POVMs constructed from the noisy signal
ensemble, the symmetries underlying noiseless PBT are partially broken. While
permutation symmetry still induces a block-diagonal structure, it no longer enforces
the constant-size blocks responsible for symmetry-protected efficiency. As a result,
the existence of efficient implementations for noise-adapted measurements cannot be
inferred from symmetry alone and becomes a quantitative question, dependent on
spectral properties of the noisy ensemble rather than addressed conclusively here.

Further, to connect  with microscopic physical models, we
embedded the protocol into a spin--boson framework and parametrised the
environment in terms of dimensionless quantities controlling bath memory,
temperature, cut-off, and the interaction time. By analysing representative super-Ohmic
regimes $s=2$ and $s=3$ and temperatures, we demonstrated that the temperature has predictably a strong effect on the fidelity while the Ohmicity in this case has only a moderate influence, unlike for a single spin models, where the transition from $s=2$ to $s=3$ corresponds to the change from Markovian to non-Markovian regimes.
We thereby illustrated the sensitivity of the PBT protocol to microscopic noise properties beyond depolarising models.

Let us also comment briefly on experimental relevance.  The local pure-dephasing
model should be understood as an effective description of stationary resource qubits
whose dominant error during storage or distribution is phase noise, with relaxation,
leakage, and control errors neglected on the time scale of the protocol.  In this
sense, solid-state spin platforms, such as NV centres in diamond, are a natural
setting, since they combine local spin-environment dephasing with spin--photon
interfaces and long-lived memory qubits~\cite{Bernien2013Heralded,ChildressHanson2013NV,Bradley2022Robust}.
Superconducting circuits are also relevant because local dephasing is a standard and
well-characterized noise mechanism in these systems~\cite{Siddiqi2021Engineering,PhysRevB.85.224505,Bylander2011Noise}.
Trapped ions provide highly coherent and controllable qubits, and may be useful for
engineered-noise tests, although their native noise is not generally a literal
realization of independent local spin--boson baths~\cite{Bruzewicz2019TrappedIons}.
We therefore do not claim that a large-\(N\) deterministic PBT implementation with
the full collective square-root measurement is currently within near-term experimental
reach.  A more realistic near-term objective would be a small-\(N\) proof-of-principle experiment, or a digital/analog simulation of the noisy signal-state discrimination problem, testing
the predicted dependence of the teleportation fidelity on the decoherence parameter
\(\Gamma\).

Several directions for future work naturally emerge from our analysis. An
important extension would be to consider common or correlated environments
acting on multiple Bell pairs, which may give rise to collective decoherence
effects not captured by the present model. Another open problem is the
systematic optimisation of noisy measurements for general~$N$, potentially
exploiting additional symmetries, like residual covariance with respect to the group
$U(1)_Z \times S_N$, or approximate covariance properties to reduce computational
complexity. From a physical perspective, it would also be interesting to investigate
other classes of noise models and environments beyond the pure-dephasing setting
considered here.  Examples include depolarizing or more general Pauli noise, which
may populate several Bell sectors, and amplitude damping, which is non-unital and
changes the population structure of the resource state.  Such models require a
separate analysis of the induced signal ensemble and of the associated PBT
measurements.  It would also be interesting to explore experimentally motivated
noise-mitigation strategies within the PBT framework, e.g. similar to the so-called
purifying teleportation~\cite{Roszak2023purifying}. 

We also note that deterministic port-based teleportation is known, in the
noiseless setting, to admit one-to-one correspondences with several other
higher-order quantum information primitives, including unitary estimation,
deterministic storage-and-retrieval, and parallel unitary inversion~\cite{yoshida2024onetoone}. From this
broader perspective, the present analysis of noise in PBT can be viewed as a
case study of how environmental decoherence deforms the symmetry structures
that underpin such equivalences. While our results do not directly extend to
these related protocols, they indicate that noise generically modifies the
tradeoffs between performance, symmetry, and algorithmic implementability.
Understanding whether analogous noise-dependent effects arise across the
corresponding unitary-processing tasks remains an open direction for future
work.

We believe that the methods developed here
provide a solid starting point for addressing these questions and for future work.

\begin{acknowledgments}
MS is supported by the National Science Centre, Poland, Grant Sonata Bis 14 no. 2024/54/E/ST2/00316. RSB and JKK acknowledge the support of  National Science Centre (NCN) through the QuantEra
project Qucabose 2023/05/Y/ST2/00139.
\end{acknowledgments}

\appendix

\onecolumngrid

\section{Fidelity between $\eta^{(i)}$ and $\eta^{(j)}$} \label{Fij}

The state fidelity between two signal states $\eta^{(i)}$ and $\eta^{(j)}$ is defined as
\begin{equation}
    F(\eta^{(i)}, \eta^{(j)}) = \tr\sqrt{\sqrt{\eta^{(i)}}\eta^{(j)}\sqrt{\eta^{(i)}}}.
\end{equation}
Using \eqref{eta_i}, \eqref{sigma_i}, and \eqref{omega_i} we immediately find that: 
\begin{align}
        \sqrt{\eta^{(i)}}
        =R(\theta)_B\left[\sqrt{\frac{1+|\Gamma|}{2}}\ketbra{\Psi^-} + \sqrt{\frac{1-|\Gamma|}{2}}\ketbra{\Psi^+}\right]R(\theta)_B^\dagger\otimes\frac{\mathds{1}_{\bar{A}_i}}{\sqrt{2^{N-1}}} .
    \end{align}
Lengthy but straightforward calculations show that:
\begin{align}
    \sqrt{\eta^{(i)}}\eta^{(j)}\sqrt{\eta^{(i)}}
        & = R(\theta)\nu_{A_i A_j B}R(\theta)^\dagger\otimes\frac{\mathds{1}_{\bar{A}_{ij}}}{2^{2(N-2)}}, 
\end{align}
where 
\begin{align}
    \nu_{A_i A_j B}  \equiv &\frac{1}{32}[(1+\sqrt{1-|\Gamma|^2})(\ketbra{001}+\ketbra{110}) + (1-\sqrt{1-|\Gamma|^2})(\ketbra{011}+\ketbra{100}) \\
         & - |\Gamma|\left(\ketbra{001}{100}+\ketbra{011}{110}+\ketbra{100}{001}+\ketbra{110}{011}\right)]. 
\end{align}
This is a $4\times 4$ matrix, which has a simple block diagonal form with the following matrix on the diagonal:
\begin{equation}
    \frac{1}{32}\begin{pmatrix}
1+\sqrt{1+|\Gamma|^2} & -|\Gamma| \\
-|\Gamma| & 1-\sqrt{1+|\Gamma|^2}
\end{pmatrix},
\end{equation}
which has eigenvalues $0$ and $2/32=1/16$. We thus obtain:
\begin{align}
   \tr\sqrt{\sqrt{\eta^{(i)}}\eta^{(j)}\sqrt{\eta^{(i)}}}= \tr\sqrt{R(\theta)_B \nu_{A_i A_j B} R(\theta)_B^\dagger}=\tr\sqrt{\nu_{A_i A_j B}} = \frac{1}{4}+\frac{1}{4}
\end{align}
 and therefore: 
\begin{align}
    F(\eta^{(i)}, \eta^{(j)}) = \frac{1}{2},
\end{align}
which is constant, independent of both $N$ and $\Gamma$.

\section{Calculations for the noiseless measurements} \label{App_noiseless}

Eigenstates of $\rho$, two families from the spin$-1/2$ addition to $N-2$ spins, one family corresponds to $j+1/2$, second to $j-1/2$ 
\begin{equation}
\label{Psi_I}
    \begin{aligned}
        \ket{\Psi_{\text{I}}(\lambda^{\mp}_{j};m,\beta)} 
        = & \ket{\Phi^{[N-1]}(j+\frac{1}{2},m+1,\beta)}_{\bar{A}_i}\ket{0}_{A_i}\ket{0}_B\braket{j,m+\frac{1}{2},\frac{1}{2},-\frac{1}{2}}{j\pm\frac{1}{2},m}
        \braket{j+\frac{1}{2},m+1,\frac{1}{2},-\frac{1}{2}}{j,m+\frac{1}{2}}\\
        +& \ket{\Phi^{[N-1]}(j+\frac{1}{2},m,\beta)}_{\bar{A}_i}\ket{1}_{A_i}\ket{0}_B\braket{j,m+\frac{1}{2},\frac{1}{2},-\frac{1}{2}}{j\pm\frac{1}{2},m}\braket{j+\frac{1}{2},m,\frac{1}{2},\frac{1}{2}}{j,m+\frac{1}{2}}  \\      
       +  & \ket{\Phi^{[N-1]}(j+\frac{1}{2},m,\beta)}_{\bar{A}_i}\ket{0}_{A_i}\ket{1}_B\braket{j,m-\frac{1}{2},\frac{1}{2},\frac{1}{2}}{j\pm\frac{1}{2},m}\braket{j+\frac{1}{2},m,\frac{1}{2},-\frac{1}{2}}{j,m-\frac{1}{2}}  \\
        + & \ket{\Phi^{[N-1]}(j+\frac{1}{2},m-1,\beta)}_{\bar{A}_i}\ket{1}_{A_i}\ket{1}_B\braket{j,m-\frac{1}{2},\frac{1}{2},\frac{1}{2}}{j\pm\frac{1}{2},m}
        \braket{j+\frac{1}{2},m-1,\frac{1}{2},\frac{1}{2}}{j,m-\frac{1}{2}}\\
    \end{aligned}
\end{equation}
and
\begin{equation}
\label{Psi_II}
    \begin{aligned}
        \ket{\Psi_{\text{II}}(\lambda^{\mp}_{j};m,\beta)} 
        = & \ket{\Phi^{[N-1]}(j-\frac{1}{2},m+1,\beta)}_{\bar{A}_i}\ket{0}_{A_i}\ket{0}_B\braket{j,m+\frac{1}{2},\frac{1}{2},-\frac{1}{2}}{j\pm\frac{1}{2},m}
        \braket{j-\frac{1}{2},m+1,\frac{1}{2},-\frac{1}{2}}{j,m+\frac{1}{2}}\\
        +& \ket{\Phi^{[N-1]}(j-\frac{1}{2},m,\beta)}_{\bar{A}_i}\ket{1}_{A_i}\ket{0}_B\braket{j,m+\frac{1}{2},\frac{1}{2},-\frac{1}{2}}{j\pm\frac{1}{2},m}\braket{j-\frac{1}{2},m,\frac{1}{2},\frac{1}{2}}{j,m+\frac{1}{2}}  \\      
       +  & \ket{\Phi^{[N-1]}(j-\frac{1}{2},m,\beta)}_{\bar{A}_i}\ket{0}_{A_i}\ket{1}_B\braket{j,m-\frac{1}{2},\frac{1}{2},\frac{1}{2}}{j\pm\frac{1}{2},m}\braket{j-\frac{1}{2},m,\frac{1}{2},-\frac{1}{2}}{j,m-\frac{1}{2}}  \\
        + & \ket{\Phi^{[N-1]}(j-\frac{1}{2},m-1,\beta)}_{\bar{A}_i}\ket{1}_{A_i}\ket{1}_B\braket{j,m-\frac{1}{2},\frac{1}{2},\frac{1}{2}}{j\pm\frac{1}{2},m}
        \braket{j-\frac{1}{2},m-1,\frac{1}{2},\frac{1}{2}}{j,m-\frac{1}{2}}\\
    \end{aligned}.
\end{equation}

\begin{equation}
    \sigma^{(i)} = \sum_{s=s_{min}}^{(N-1)/2}\sigma^{(i)}(s),
\end{equation}
with
\begin{equation}
    \sigma^{(i)}(s) = \frac{1}{2^{N-1}}\sum_{m=-s}^{s}\sum_{\beta}\ketbra{\xi^{(i)}_-(s,m,\beta)}.
\end{equation}
and similarly: 
\begin{equation}
    \omega^{(i)} = \sum_{s=s_{min}}^{(N-1)/2}\omega^{(i)}(s),
\end{equation}

\begin{equation}
    \omega^{(i)}(s) = \frac{1}{2^{N-1}}\sum_{m=-s}^{s}\sum_{\beta}\ketbra{\xi^{(i)}_+(s,m,\beta)}.
\end{equation}

Following Ishizaka and Hiroshima~\cite{ishizaka_asymptotic_2008,ishizaka_quantum_2009},
for comparison, in the IH calculation:
\begin{equation}\label{sigma_succ_prob}
\tr\Pi^{(i)}{(s)}\sigma^{(i)}(s)  =
         g^{[N-1]}(s) \frac{1}{2s+1} \left[s\left(\lambda^-_{s-1/2}\right)^{-1/2} +(s+1)\left(\lambda^+_{s+1/2}\right)^{-1/2}\right]^2,
\end{equation}
and
 \begin{align}
        \tr\Pi^{(i)}{(s)}\omega^{(i)}(s) & = \tr\rho(s)^{-1/2}\sigma^{(i)}(s)\rho(s)^{-1/2}\omega^{(i)}(s) \\
        & = \sum_{m, m'=-s}^s\sum_{\beta,\beta'}\left|\bra{\xi_{+}^{(i)}(s,m^\prime,\beta^\prime)}{\rho(s)}^{-1/2}\ket{\xi_{-}^{(i)}(s,m,\beta)}\right|^2 \\
        & = \sum_{\beta} \sum_{m=-s}^s \frac{m^2}{(2s+1)^2}\left[\left(\lambda^-_{s-1/2}\right)^{-1/2} -\left(\lambda^+_{s+1/2}\right)^{-1/2}\right]^2 \\
        & =  \sum_{\beta} \frac{s(s+1)(2s+1)}{3} \frac{1}{(2s+1)^2} \left[\left(\lambda^-_{s-1/2}\right)^{-1/2} -\left(\lambda^+_{s+1/2}\right)^{-1/2}\right]^2 \\
        & = \frac{1}{3} g^{[N-1]}(s) \frac{s(s+1)}{2s+1} \left[\left(\lambda^-_{s-1/2}\right)^{-1/2} -\left(\lambda^+_{s+1/2}\right)^{-1/2}\right]^2.
    \end{align}
Here, we have used $\sum_{m=-s}^{s} m^2 = \frac{1}{3}s(s+1)(2s+1)$ and $\sum_\beta = g^{[N-1]}(s)$ is the intrinsic degeneracy for $\ket{\Psi_{\text{I}}(\lambda^-_{s-1/2};m,\beta)}$ and $\ket{\Psi_{\text{II}}(\lambda^+_{s+1/2};m,\beta)}$.
Further simplification of the above term is presented in Appendix~\ref{app:F_corr_term}.

Now, we show vanishing of expression~\eqref{mixedterm}. We have the following:
\label{off_diag_terms_NL}
\begin{align} 
& \tr\Pi^{(i)}_{\Vec{A}B}\left[\left(\ketbra{\Psi^+}{\Psi^-}-\ketbra{\Psi^-}{\Psi^+}\right)_{A_i  B}\otimes\frac{\mathds{1}_{\bar{A}_i}}{2^{N-1}}\right] = \\
& =\frac{1}{2^{N-1}}\sum_s \sum_{m,\beta}\left\{\bra{\xi^{(i)}_{-}(s,m,\beta)} \Pi^{(i)}{(s)} \ket{\xi^{(i)}_{+}(s,m,\beta)} - c.c.\right\}\\
& = \frac{1}{2^{N-1}}\sum_s \sum_{m,m',\beta,\beta'} \left\{\bra{\xi^{(i)}_{-}(s,m,\beta)} \rho(s)^{-1/2} \ket{\xi^{(i)}_{-}(s,m',\beta')} \bra{\xi^{(i)}_{-}(s,m',\beta')} \rho(s)^{-1/2} \ket{\xi^{(i)}_{+}(s,m',\beta')} - c.c.\right\}\\
&= \frac{1}{2^{N-1}}\sum_s \sum_{m} g^{[N-1]}(s) \left\{ c_{IH}(s) \frac{m}{2s+1} \left[\left(\lambda^-_{s-1/2}\right)^{-1/2} -\left(\lambda^+_{s+1/2}\right)^{-1/2}\right] - c.c. \right\} = 0.
\end{align}
Where, $c_{IH}(s)=\bra{\xi^{(i)}_-(s,m,\beta)}\rho(s)^{-1/2}\ket{\xi^{(i)}_-(s,m',\beta')}$.
Since $\sum_m m=0$, actually not only all the overlaps above are real and hence their imaginary part is zero, but each term $\sum_m \bra{\xi^{(i)}_{-}(s,m,\beta)} \Pi^{(i)}{(s)} \ket{\xi^{(i)}_{+}(s,m,\beta)}$ vanishes on its own.

\section{Explicit derivations from the Helstrom bound}\label{Helstrom}
The goal of this Appendix is to prove expression~\eqref{eq:Hel_bound} from the main text, which reads:
\begin{align}
\label{eq:Hel_bound2}
    F\leq\frac{1}{4}\left(1+\frac{\sqrt{1+2|\Gamma|^2}}{2}\right).
\end{align}

For $N=2$, the noisy states entering the binary discrimination problem are
\begin{equation}
\eta^{(i)}_{A_1A_2B}
=
R_B(\theta)
\left[
\frac{1+|\Gamma|}{2}\,\sigma^{(i)}_{A_1A_2B}
+
\frac{1-|\Gamma|}{2}\,\omega^{(i)}_{A_1A_2B}
\right]
R_B(\theta)^\dagger,
\qquad i=1,2,
\end{equation}
with
\begin{align}
\sigma^{(1)}_{A_1A_2B}
&=
\frac12\,|\Psi^{-}\rangle\!\langle\Psi^{-}|_{A_1B}\otimes\mathbf{1}_{A_2},
&
\omega^{(1)}_{A_1A_2B}
&=
\frac12\,|\Psi^{+}\rangle\!\langle\Psi^{+}|_{A_1B}\otimes\mathbf{1}_{A_2},
\\
\sigma^{(2)}_{A_1A_2B}
&=
\frac12\,\mathbf{1}_{A_1}\otimes|\Psi^{-}\rangle\!\langle\Psi^{-}|_{A_2B},
&
\omega^{(2)}_{A_1A_2B}
&=
\frac12\,\mathbf{1}_{A_1}\otimes|\Psi^{+}\rangle\!\langle\Psi^{+}|_{A_2B}.
\end{align}

We define the difference operator
\begin{equation}
\Delta := \eta^{(1)}-\eta^{(2)}.
\end{equation}
Since both $\eta^{(1)}$ and $\eta^{(2)}$ are conjugated by the same unitary
$R_B(\theta)$, the trace norm is invariant under this transformation and
\begin{equation}
\|\Delta\|_1
=
\|\Delta_0\|_1,
\qquad
\Delta_0 :=
\frac{1+|\Gamma|}{2}\bigl(\sigma^{(1)}-\sigma^{(2)}\bigr)
+
\frac{1-|\Gamma|}{2}\bigl(\omega^{(1)}-\omega^{(2)}\bigr).
\end{equation}
Hence, the phase $\theta$ does not affect the Helstrom bound, and only
$|\Gamma|$ enters.

Notice that in this case the operators $\sigma^{(i)}$ and $\omega^{(i)}$ differ only by which
port is paired with the system $B$. It means that $\Delta_0$ is antisymmetric under the exchange of the two ports
$A_1\leftrightarrow A_2$.
It follows that $\Delta_0$ vanishes on the symmetric subspace of $A_1A_2$ and
acts non-trivially only on a two-dimensional sector.
In particular, $\Delta_0$ is Hermitian, traceless, and has rank at most $2$. In the   $B$ respecting these symmetries 
\begin{align}
\label{eq:basisBApp}
    B=\Bigl\{
|00\rangle|0\rangle,\ |00\rangle|1\rangle,\ 
|\psi_+\rangle|0\rangle,\ |\psi_+\rangle|1\rangle,\ 
|11\rangle|0\rangle,\ |11\rangle|1\rangle\ ;\
|\psi_-\rangle|0\rangle,\ |\psi_-\rangle|1\rangle
\Bigr\},
\end{align}
the term $\Delta_0$ has the following explicit form:
\begin{align}
\Delta_0=
\begin{pmatrix}
0_{6\times 6} & M\\[2pt]
M^\dagger & 0_{2\times 2}
\end{pmatrix},
\qquad
M=
\begin{pmatrix}
0 & 0\\
\sqrt{2}|\Gamma|/4 & 0\\
-1/4 & 0\\
0 & 1/4\\
0 & -\sqrt{2}|\Gamma|/4\\
0 & 0
\end{pmatrix}.
\end{align}
Notice that the first 6 vectors of the basis $B$ in~\eqref{eq:basisBApp} span the symmetric space $\mathrm{Sym}(A_1A_2)\otimes B$, while the last 2 span the antisymmetric sector $\mathrm{Asym}(A_1A_2)\otimes B$.

A direct diagonalization of $\Delta_0$ on this sector yields the nonzero
eigenvalues
\begin{equation}
\lambda_{\pm}=\pm\frac12\sqrt{1+2|\Gamma|^2},
\end{equation}
while all remaining eigenvalues are zero.
Since $\Delta$ is Hermitian, its trace norm equals the sum of absolute values
of its eigenvalues, and thus
\begin{equation}
\bigl\|\eta^{(1)}-\eta^{(2)}\bigr\|_1
=
|\lambda_{+}|+|\lambda_{-}|
=
\sqrt{1+2|\Gamma|^2},
\end{equation}
which proves Eq.~\eqref{eq:Hel_bound2}.

\section{Derivation of expression (29) from~\cite{ishizaka_quantum_2009}}
\label{derivationPRA}
For the Reader's convenience, we derive here expression~\eqref{eq:F_ih_noiseless} for the entanglement fidelity $F_{IH}$ for the noiseless deterministic PBT reported for the first time in~\cite{ishizaka_asymptotic_2008,ishizaka_quantum_2009}.
Starting from the expression for the entanglement fidelity under the square-root measurement (SRM), see the second line of equation (29) in~\cite{ishizaka_quantum_2009}, we have:
\begin{equation}
F_{IH} = \frac{N}{2^{2N}} 
\sum_{s=s_{\min}}^{(N-1)/2} 
(2s+1)\, g^{[N-1]}(s)\, [c(s,2)]^{2},
\label{eq:F-start}
\end{equation}
where
\begin{equation}
c(s,y) = \frac{s}{2s+1}\bigl(\lambda^{-}_{s-1/2}\bigr)^{-1/y}
+ \frac{s+1}{2s+1}\bigl(\lambda^{+}_{s+1/2}\bigr)^{-1/y},
\label{eq:c-def}
\end{equation}
and the eigenvalues of $\rho$ are
\begin{equation}
\lambda^{-}_{s-1/2} = \frac{N-2s+1}{2^{N+1}}, 
\qquad 
\lambda^{+}_{s+1/2} = \frac{N+2s+3}{2^{N+1}}.
\label{eq:lambdas}
\end{equation}

\paragraph{Step 1: Substitute eigenvalues.}
For $y=2$, Eq.~\eqref{eq:c-def} gives
\begin{equation}
c(s,2) = 
\frac{s}{2s+1}\sqrt{\frac{2^{N+1}}{N-2s+1}}
+ \frac{s+1}{2s+1}\sqrt{\frac{2^{N+1}}{N+2s+3}}.
\label{eq:c2}
\end{equation}

\paragraph{Step 2: Change of variables.}
Let $k = \frac{N-1}{2} - s$, so that
\[
N - 2s + 1 = 2(k+1), \qquad 
N + 2s + 3 = 2(N-k+1).
\]
Then Eq.~\eqref{eq:c2} becomes
\begin{equation}
c(s,2) = \frac{2^{(N-1)/2}}{2s+1}
\left(
\frac{s}{\sqrt{k+1}} + 
\frac{s+1}{\sqrt{N-k+1}}
\right).
\label{eq:c2k}
\end{equation}

\paragraph{Step 3: Insert into \eqref{eq:F-start}.}
Squaring and substituting into \eqref{eq:F-start} yields
\begin{equation}
F_{IH} = 
\frac{N}{2^{N+1}}
\sum_{s=s_{\min}}^{(N-1)/2}
\frac{g^{[N-1]}(s)}{2s+1}
\left(
\frac{s}{\sqrt{k+1}} +
\frac{s+1}{\sqrt{N-k+1}}
\right)^{\!2}.
\label{eq:F-mid}
\end{equation}

\paragraph{Step 4: Express multiplicity as a binomial coefficient.}
From Eq.~(9) of the paper (with $N\!\to\!N-1$),
\begin{equation}
g^{[N-1]}(s)
= \frac{(2s+1)(N-1)!}{\bigl(\tfrac{N-1}{2}-s\bigr)!\bigl(\tfrac{N+1}{2}+s\bigr)!}
= \frac{(2s+1)(N-1)!}{k!(N-k)!}.
\end{equation}
Hence
\[
\frac{g^{[N-1]}(s)}{2s+1}
= \frac{1}{N}\binom{N}{k}.
\]
Substituting into \eqref{eq:F-mid} cancels the prefactor $N$:
\begin{equation}
F_{IH} = 
\frac{1}{2^{N+1}}
\sum_{k=0}^{N}
\binom{N}{k}
\left(
\frac{s}{\sqrt{k+1}} +
\frac{s+1}{\sqrt{N-k+1}}
\right)^{\!2}.
\end{equation}

\paragraph{Step 5: Express $s$ in terms of $k$.}
Since $s = \frac{N-1}{2} - k$, we have
\[
\frac{s}{\sqrt{k+1}} = \frac{N-2k-1}{2\sqrt{k+1}},
\qquad
\frac{s+1}{\sqrt{N-k+1}} = \frac{N-2k+1}{2\sqrt{N-k+1}}.
\]
Plugging this in yields the final compact form:
\begin{equation}
\boxed{
F_{IH} =
\frac{1}{2^{N+3}}
\sum_{k=0}^{N}
\left(
\frac{N-2k-1}{\sqrt{k+1}} +
\frac{N-2k+1}{\sqrt{N-k+1}}
\right)^{\!2}
\binom{N}{k}
}.
\label{eq:F29}
\end{equation}

\section{Calculating the closed-form expression for $F_{\mathrm{corr}}$}
\label{app:F_corr_term}

We start from the expression for the correction contribution appearing in
Eq.~\eqref{ent_fid_NL}. Recall that the entanglement fidelity is related
to the corresponding state-discrimination success probability by
\begin{equation}
F=\frac{1}{4}\sum_{i=1}^{N}\tr\!\left[\Pi^{(i)}\eta^{(i)}\right].
\label{eq:Fcorr_app_F_def}
\end{equation}
Therefore, the correction term associated with $\omega^{(i)}$ is
\begin{equation}
F_{\mathrm{corr}}
=
\frac{1}{4}
\sum_{i=1}^N
\sum_{s=s_{\min}}^{(N-1)/2}
\tr\!\left[\Pi^{(i)}(s)\omega^{(i)}(s)\right].
\label{eq:Fcorr_start_corrected}
\end{equation}
By permutation symmetry of the ports, each value of $i$ gives the same
contribution. Hence
\begin{equation}
F_{\mathrm{corr}}
=
\frac{N}{4}
\sum_{s=s_{\min}}^{(N-1)/2}
\tr\!\left[\Pi^{(N)}(s)\omega^{(N)}(s)\right].
\label{eq:Fcorr_symmetry_corrected}
\end{equation}
Substituting the explicit expression for
$\tr[\Pi^{(N)}(s)\omega^{(N)}(s)]$ from Eq.~\eqref{eq:49}, we obtain
\begin{equation}
F_{\mathrm{corr}}
=
\frac{N}{4}
\frac{1}{2^{2N-2}}
\sum_{s=s_{\min}}^{(N-1)/2}
\frac{1}{3}\,
g^{[N-1]}(s)\,
\frac{s(s+1)}{2s+1}
\left[
\left(\lambda^-_{s-\frac12}\right)^{-1/2}
-
\left(\lambda^+_{s+\frac12}\right)^{-1/2}
\right]^2 .
\label{eq:S-start-corrected}
\end{equation}
Here the eigenvalues of $\rho$ are
\begin{equation}
\lambda^-_{s-\frac12}
=
\frac{N-2s+1}{2^{N+1}},
\qquad
\lambda^+_{s+\frac12}
=
\frac{N+2s+3}{2^{N+1}},
\label{eq:lambdas-corrected}
\end{equation}
and the multiplicity of the spin-$s$ irrep in the tensor product of
$N-1$ qubits is
\begin{equation}
g^{[N-1]}(s)
=
\frac{(2s+1)(N-1)!}
{
\left(\frac{N-1}{2}-s\right)!
\left(\frac{N+1}{2}+s\right)!
}.
\label{eq:g-corrected}
\end{equation}

\paragraph{Step 1: Substitution of the eigenvalues.}

From Eq.~\eqref{eq:lambdas-corrected}, we have
\begin{equation}
\left(\lambda^-_{s-\frac12}\right)^{-1/2}
-
\left(\lambda^+_{s+\frac12}\right)^{-1/2}
=
2^{\frac{N+1}{2}}
\left(
\frac{1}{\sqrt{N-2s+1}}
-
\frac{1}{\sqrt{N+2s+3}}
\right).
\label{eq:eigenvalue_difference_app}
\end{equation}

\paragraph{Step 2: Change of variables.}

Let
\begin{equation}
k=\frac{N-1}{2}-s,
\qquad
s=\frac{N-1}{2}-k.
\label{eq:k-def-corrected}
\end{equation}
Since $s$ labels irreducible representations of $N-1$ qubits, the range of
$k$ is
\begin{equation}
k=0,1,\ldots,\left\lfloor\frac{N-1}{2}\right\rfloor.
\label{eq:k-half-range}
\end{equation}
With this change of variables,
\begin{equation}
N-2s+1=2(k+1),
\qquad
N+2s+3=2(N-k+1),
\qquad
2s+1=N-2k.
\label{eq:k_relations_app}
\end{equation}
Moreover,
\begin{equation}
s(s+1)
=
\frac{(2s+1)^2-1}{4}
=
\frac{(N-2k)^2-1}{4}.
\label{eq:ss+1-corrected}
\end{equation}
Therefore,
\begin{align}
\left[
\left(\lambda^-_{s-\frac12}\right)^{-1/2}
-
\left(\lambda^+_{s+\frac12}\right)^{-1/2}
\right]^2
&=
2^{N}
\left(
\frac{1}{\sqrt{k+1}}
-
\frac{1}{\sqrt{N-k+1}}
\right)^2 .
\label{eq:diff-squared-corrected}
\end{align}

\paragraph{Step 3: Conversion of multiplicities to binomial coefficients.}

Using Eqs.~\eqref{eq:g-corrected} and \eqref{eq:k-def-corrected}, we obtain
\begin{equation}
g^{[N-1]}(s)
=
\frac{(2s+1)(N-1)!}{k!(N-k)!}
=
\frac{N-2k}{N}\binom{N}{k}.
\label{eq:g-binomial-corrected}
\end{equation}
Consequently,
\begin{align}
\frac{g^{[N-1]}(s)s(s+1)}{2s+1}
&=
\frac{1}{N}\binom{N}{k}s(s+1)
\nonumber\\
&=
\frac{1}{4N}
\binom{N}{k}
\left[(N-2k)^2-1\right].
\label{eq:prefactor-corrected}
\end{align}

\paragraph{Step 4: Closed-form expression.}

Substituting Eqs.~\eqref{eq:diff-squared-corrected} and
\eqref{eq:prefactor-corrected} into Eq.~\eqref{eq:S-start-corrected} gives
\begin{align}
F_{\mathrm{corr}}
&=
\frac{N}{4}
\frac{1}{2^{2N-2}}
\sum_{k=0}^{\lfloor (N-1)/2\rfloor}
\frac{1}{3}
\left[
\frac{1}{4N}
\binom{N}{k}
\left((N-2k)^2-1\right)
\right]
2^N
\left(
\frac{1}{\sqrt{k+1}}
-
\frac{1}{\sqrt{N-k+1}}
\right)^2
\nonumber\\[4pt]
&=
\frac{1}{12}\frac{1}{2^N}
\sum_{k=0}^{\lfloor (N-1)/2\rfloor}
\binom{N}{k}
\left[(N-2k)^2-1\right]
\left(
\frac{1}{\sqrt{k+1}}
-
\frac{1}{\sqrt{N-k+1}}
\right)^2 .
\label{eq:Fcorr_half_range}
\end{align}
The summand is invariant under $k\mapsto N-k$. Indeed,
\begin{equation}
\binom{N}{N-k}=\binom{N}{k},
\qquad
(N-2(N-k))^2=(N-2k)^2,
\label{eq:summand_symmetry_1}
\end{equation}
and
\begin{equation}
\left(
\frac{1}{\sqrt{N-k+1}}
-
\frac{1}{\sqrt{k+1}}
\right)^2
=
\left(
\frac{1}{\sqrt{k+1}}
-
\frac{1}{\sqrt{N-k+1}}
\right)^2 .
\label{eq:summand_symmetry_2}
\end{equation}
Thus the half-range sum in Eq.~\eqref{eq:Fcorr_half_range} can be written
as one half of the full binomial sum from $k=0$ to $k=N$. If $N$ is even,
the middle term $k=N/2$ vanishes because the squared difference is zero.
Therefore,
\begin{align}
F_{\mathrm{corr}}(N)
&=
\frac{1}{24}\frac{1}{2^N}
\sum_{k=0}^{N}
\binom{N}{k}
\left[(N-2k)^2-1\right]
\left(
\frac{1}{\sqrt{k+1}}
-
\frac{1}{\sqrt{N-k+1}}
\right)^2
\nonumber\\[4pt]
&=
\boxed{
\frac{1}{3}\frac{1}{2^{N+3}}
\sum_{k=0}^{N}
\binom{N}{k}
\left[(N-2k)^2-1\right]
\left(
\frac{1}{\sqrt{k+1}}
-
\frac{1}{\sqrt{N-k+1}}
\right)^2
.}
\label{eq:S-final-corrected}
\end{align}
The dependence $F_{\mathrm{corr}}=F_{\mathrm{corr}}(N)$ is depicted in
Figure~\ref{fig:fidelity_vs_N_Fcorr}. It is non-monotonic for small values
of $N$, reaches its maximum value
\begin{equation}
F_{\mathrm{corr}}(6)\approx 0.02909,
\label{eq:Fcorr_max_value}
\end{equation}
and then decays to zero as $N\rightarrow\infty$.

\begin{figure*}[t] 
	\centering
	\begin{minipage}{0.48\textwidth}
		\centering
    \includegraphics[width=\linewidth]{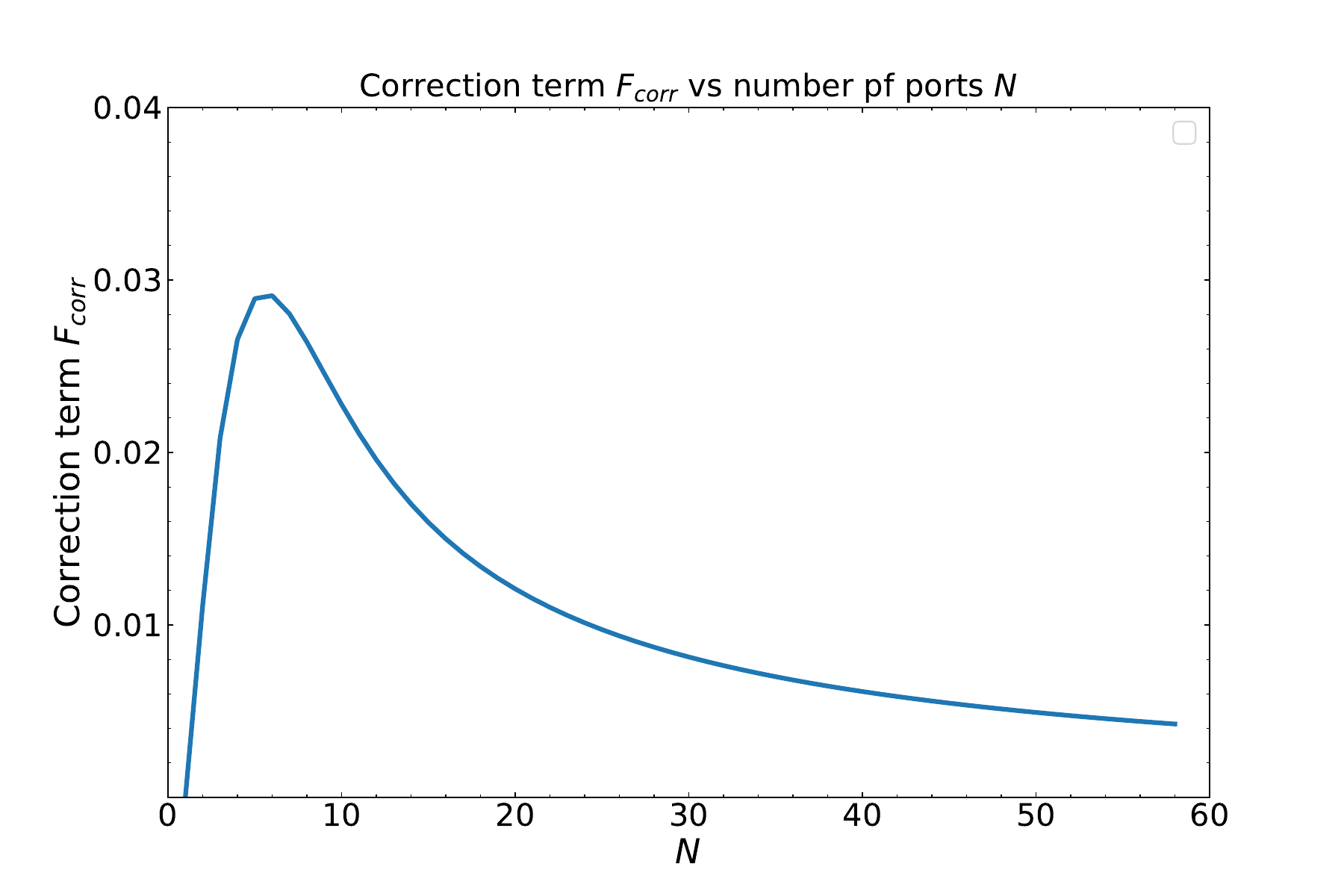} 
		\\[4pt] (a)
	\end{minipage}\hfill
	\begin{minipage}{0.48\textwidth}
		\centering
		\includegraphics[width=\linewidth]{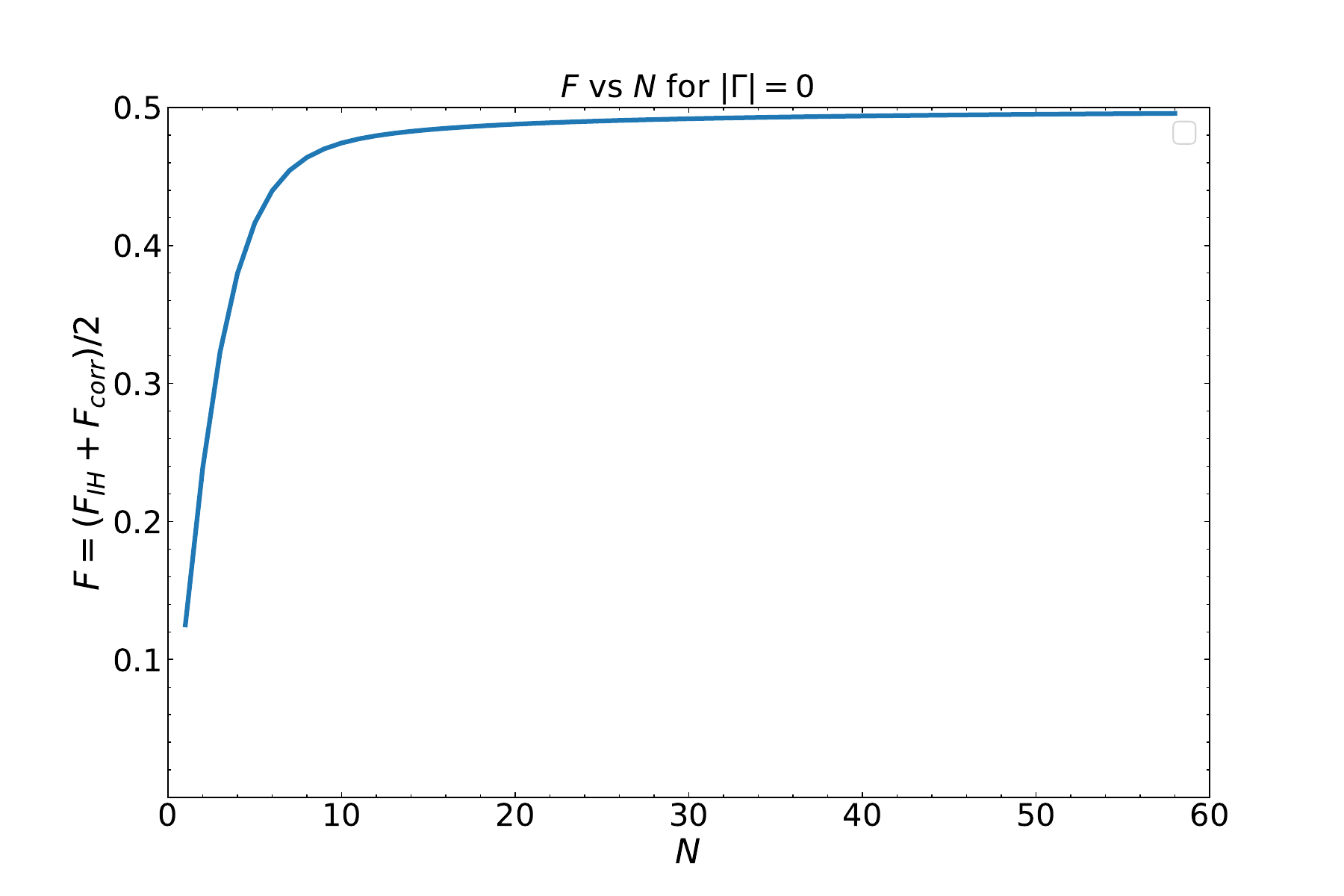}
		\\[4pt] (b)
	\end{minipage}
		\caption{(a) The correction term $F_{\mathrm{corr}}$ from
    Eq.~\eqref{eq:S-final-corrected}. The function is non-monotonic for
    small values of $N$ and achieves its maximum value
    $F_{\mathrm{corr}}\approx 0.02909$ at $N=6$. For $N>6$, the correction
    term decreases monotonically and vanishes in the limit
    $N\rightarrow\infty$. (b) The entanglement fidelity for $|\Gamma|=0$ and arbitrary value of $\theta$ from Eq.\eqref{ent_fid_NL}. The entanglement fidelity stays below $1/2$ for finite $N$ and approaches $1/2$ in the limit $N\rightarrow\infty$ implying that the shared resource is not useful for quantum teleportation.}
	\label{fig:fidelity_vs_N_Fcorr}
\end{figure*}

We now prove the asymptotic behaviour of $F_{\mathrm{corr}}$, used in the
discussion below Eq.~\eqref{eq:F_corr_term0} in the main text.

\begin{lemma}
\label{lem:Fcorr_asymptotics}
Let
\begin{equation}
F_{\mathrm{corr}}(N)
=
\frac{1}{3}\frac{1}{2^{N+3}}
\sum_{k=0}^{N}
\binom{N}{k}
\left[(N-2k)^2-1\right]
\left(
\frac{1}{\sqrt{k+1}}
-
\frac{1}{\sqrt{N-k+1}}
\right)^2 .
\label{eq:Fcorr_lemma_definition}
\end{equation}
Then, as $N\to\infty$,
\begin{equation}
F_{\mathrm{corr}}(N)
=
\frac{1}{4N}
+
O\!\left(\frac{1}{N^2}\right).
\label{eq:aymptoticApp_corrected}
\end{equation}
In particular,
\begin{equation}
\lim_{N\to\infty}F_{\mathrm{corr}}(N)=0.
\label{eq:Fcorr_limit_zero}
\end{equation}
\end{lemma}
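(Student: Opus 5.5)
We write $F_{\mathrm{corr}}(N)=\tfrac{1}{24}\,\mathbb{E}[h_N(K)]$ with $K\sim\mathrm{Bin}(N,1/2)$ and
\[
h_N(k)=\bigl[(N-2k)^2-1\bigr]\Bigl(\tfrac{1}{\sqrt{k+1}}-\tfrac{1}{\sqrt{N-k+1}}\Bigr)^2 ,
\]
and set $K=N/2+X$, so that $N-2K=-2X$ and $\mathbb{E}[X^2]=N/4$, $\mathbb{E}[X^4]=3N^2/16-N/8$ exactly. The proof has two parts: a Taylor expansion of $h_N$ in the central region $|X|\le N/4$, and a crude bound on the tails $|X|>N/4$, which we dispose of with Hoeffding's inequality.

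\textbf{Tails.} On the whole range $0\le k\le N$ we have $|h_N(k)|\le N^2\cdot 1=N^2$, since each inverse square root lies in $(0,1]$. Hoeffding gives $\Pr(|X|>N/4)\le 2e^{-N/8}$. The tail contribution is therefore $O(N^2e^{-N/8})$, which is negligible.

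\textbf{Central region.} Put $a=N/2+1$ and write the bracket as
\[
D(X)=(a+X)^{-1/2}-(a-X)^{-1/2}=-2\sum_{j\ \text{odd}}\binom{-1/2}{j}X^j a^{-1/2-j}.
\]
We do not need the full series, only a Taylor remainder. For $|X|\le a/2$, the Lagrange form yields
\[
D(X)=-X a^{-3/2}+R(X),\qquad |R(X)|\le C|X|^3a^{-7/2}.
\]
The key point is that the second-order term cancels by oddness. Squaring and multiplying by $(4X^2-1)$ gives
\[
h_N=(4X^2-1)\bigl(X^2a^{-3}+O(X^4a^{-5})\bigr)=4X^4a^{-3}+O\bigl(X^2a^{-3}+X^6a^{-5}\bigr).
\]

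\textbf{Expectations.} We use $\mathbb{E}[X^2]=O(N)$ and $\mathbb{E}[X^6]=O(N^3)$; both follow from the standard central moments of the binomial, or from sub-Gaussianity. The error terms then contribute $O(N^{-2})$. Restricting the moments to the central region changes them only by exponentially small amounts, by the same Hoeffding bound. The main term gives
\[
\frac{4\,\mathbb{E}[X^4]}{a^3}=\frac{4\cdot\tfrac{3N^2}{16}}{(N/2)^3}\bigl(1+O(N^{-1})\bigr)=\frac{6}{N}+O(N^{-2}),
\]
where we used $a^{-3}=(N/2)^{-3}\bigl(1+O(N^{-1})\bigr)$. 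Dividing by $24$ gives $F_{\mathrm{corr}}(N)=\tfrac{1}{4N}+O(N^{-2})$, and the limit $F_{\mathrm{corr}}(N)\to 0$ is immediate.

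\textbf{Expected main obstacle.} The delicate step is making the remainder in $D(X)$ uniform with the stated powers of $a$, in particular confirming that no $X^2$-order term survives. This is handled by the exact antisymmetry of $D$ under $X\mapsto -X$ together with a third-derivative bound for $x\mapsto x^{-1/2}$ on $[a/2,\,3a/2]$.
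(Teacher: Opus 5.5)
Your proof is correct and follows the same route as the paper's: rewrite as a $\mathrm{Bin}(N,1/2)$ expectation, Taylor-expand the bracket about $N/2$ so that only odd powers of $X$ survive, and use the exact moments $\mathbb{E}[X^2]$ and $\mathbb{E}[X^4]$. Your Hoeffding cut at $|X|>N/4$ and the uniform Lagrange bound on $[a/2,3a/2]$ make rigorous what the paper only asserts with ``the remaining error terms contribute at most $O(N^{-2})$''. One harmless slip: since $\binom{-1/2}{1}=-\tfrac12$, the prefactor of your odd-power series should be $+2$ rather than $-2$, which matches the leading term $-Xa^{-3/2}$ you actually use.
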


\begin{proof}
Introduce a binomial random variable
\begin{equation}
K\sim\mathrm{Bin}\!\left(N,\frac12\right).
\label{eq:K_binomial_app}
\end{equation}
Since
\begin{equation}
\frac{1}{2^N}
\sum_{k=0}^{N}
\binom{N}{k}(\cdot)
=
\mathbb{E}[(\cdot)],
\label{eq:binomial_expectation_app}
\end{equation}
we may rewrite Eq.~\eqref{eq:Fcorr_lemma_definition} as
\begin{equation}
F_{\mathrm{corr}}(N)
=
\frac{1}{24}
\mathbb{E}\!\left[
\left((N-2K)^2-1\right)\Delta(K)^2
\right],
\label{eq:Fcorr_expectation_app}
\end{equation}
where
\begin{equation}
\Delta(K)
:=
\frac{1}{\sqrt{K+1}}
-
\frac{1}{\sqrt{N-K+1}}.
\label{eq:Delta_definition_app}
\end{equation}
Set
\begin{equation}
X:=K-\frac{N}{2}.
\label{eq:X_definition_app}
\end{equation}
Then
\begin{equation}
N-2K=-2X,
\qquad
(N-2K)^2-1=4X^2-1.
\label{eq:X_relations_app}
\end{equation}
The binomial measure is concentrated on the central region
$|X|=O(\sqrt{N})$. In this region, we expand $\Delta(K)$ around $N/2$.

Let
\begin{equation}
g(t):=(t+1)^{-1/2}.
\label{eq:g_function_app}
\end{equation}
Then
\begin{equation}
\Delta(K)
=
g\!\left(\frac{N}{2}+X\right)
-
g\!\left(\frac{N}{2}-X\right).
\label{eq:Delta_g_app}
\end{equation}
Using Taylor expansion around $N/2$, the even powers cancel and we get
\begin{equation}
\Delta(K)
=
2Xg'\!\left(\frac{N}{2}\right)
+
O\!\left(\frac{|X|^3}{N^{7/2}}\right),
\label{eq:Delta_Taylor_app}
\end{equation}
where
\begin{equation}
g'(t)
=
-\frac{1}{2(t+1)^{3/2}}.
\label{eq:g_prime_app}
\end{equation}
Hence
\begin{align}
\Delta(K)^2
&=
\frac{X^2}{\left(\frac{N}{2}+1\right)^3}
+
O\!\left(\frac{|X|^4}{N^5}\right)
\nonumber\\
&=
\frac{8X^2}{N^3}
+
O\!\left(\frac{X^2}{N^4}\right)
+
O\!\left(\frac{|X|^4}{N^5}\right).
\label{eq:Delta_squared_app}
\end{align}
Substituting this into the integrand in
Eq.~\eqref{eq:Fcorr_expectation_app} gives
\begin{align}
(4X^2-1)\Delta(K)^2
&=
(4X^2-1)
\left[
\frac{8X^2}{N^3}
+
O\!\left(\frac{X^2}{N^4}\right)
+
O\!\left(\frac{|X|^4}{N^5}\right)
\right]
\nonumber\\
&=
\frac{32X^4}{N^3}
-
\frac{8X^2}{N^3}
+
O\!\left(\frac{X^4}{N^4}\right)
+
O\!\left(\frac{|X|^6}{N^5}\right)
+
O\!\left(\frac{X^2}{N^4}\right).
\label{eq:integrand_expansion_app}
\end{align}
Taking expectations, we use the standard central moments of the binomial
distribution with parameter $p=1/2$:
\begin{equation}
\mathbb{E}[X^2]
=
\frac{N}{4},
\qquad
\mathbb{E}[X^4]
=
3\left(\frac{N}{4}\right)^2-\frac{N}{8}
=
\frac{3N^2}{16}-\frac{N}{8}.
\label{eq:binomial_moments_app}
\end{equation}
The remaining error terms contribute at most $O(N^{-2})$ to the expectation.
Therefore,
\begin{align}
\mathbb{E}\!\left[(4X^2-1)\Delta(K)^2\right]
&=
\frac{32}{N^3}
\left(
\frac{3N^2}{16}-\frac{N}{8}
\right)
-
\frac{8}{N^3}\frac{N}{4}
+
O\!\left(\frac{1}{N^2}\right)
\nonumber\\
&=
\frac{6}{N}
+
O\!\left(\frac{1}{N^2}\right).
\label{eq:expectation_asymptotic_app}
\end{align}
Finally, multiplying by the prefactor $1/24$ in
Eq.~\eqref{eq:Fcorr_expectation_app}, we obtain
\begin{equation}
F_{\mathrm{corr}}(N)
=
\frac{1}{24}
\left[
\frac{6}{N}
+
O\!\left(\frac{1}{N^2}\right)
\right]
=
\frac{1}{4N}
+
O\!\left(\frac{1}{N^2}\right).
\label{eq:Fcorr_final_asymptotic_app}
\end{equation}
This proves Eq.~\eqref{eq:aymptoticApp_corrected} and completes the proof.
\end{proof}

\section{Comparison with~\cite{Kim}}
\label{app:Kim}
Every quantum channel $\mathcal{N}:\mathcal{L}(\mathcal{H})\rightarrow \mathcal{L}(\mathcal{H})$ admits the Kraus decomposition $\mathcal{N}(\cdot)=\sum_{i}K_i(\cdot)K_i^\dagger$, where the Kraus operators $K_i$, satisfy condition $\sum_i K_i^\dagger K_i=\mathbf{1}$. If the channel $\mathcal{N}$ is unital, we have additionally $\sum_i K_iK_i^\dagger=\mathbf{1}$.

For every operator $X\in \mathcal{L}(\mathcal{H})$ and the maximally entangled state $|\psi^+_{AB}\rangle=(1/\sqrt{d})\sum_i|ii\rangle$, we have so called 'ping-pong' trick:
\begin{equation}
\label{eq:pingpong}
    (\mathbf{1}_A\otimes X_B)|\psi^+_{AB}\rangle=(X_A^T\otimes \mathbf{1}_B)|\psi^+_{AB}\rangle,
\end{equation}
where \(T\) denotes transposition with respect to the local basis.

Now, for simplicity, let us consider the action of $\mathcal{E}_{\vec{p}}\otimes \mathcal{E}_{\vec{p}}$ on just one maximally entangled state $P^+_{AB}:=|\psi^+_{AB}\rangle\langle\psi_{AB}^+|$ using the Kraus decomposition:
\begin{equation}
\begin{aligned}
    (\mathcal{E}_{\vec{p}}\otimes \mathcal{E}_{\vec{p}})P_{AB}^+&=\sum_{m,n}(K_m\otimes K_n)P^+_{AB}(K_m^\dagger \otimes K_n^\dagger)\\
    &=(\mathbf{1}\otimes K_n)(K_m\otimes \mathbf{1})P_{AB}^+(K_m^\dagger\otimes \mathbf{1})(\mathbf{1}\otimes K_n^\dagger)\\
&=\sum_{m,n}(\mathbf{1}\otimes K_nK_m^T)P_{AB}^+(\mathbf{1}\otimes \overline{K}_mK_n^\dagger),
\end{aligned}
\end{equation}
where the bar denotes complex conjugation. Let us denote by $D$ the number of Kraus operators representing the channel $\mathcal{E}_{\vec{p}}$. Introducing two sets $S=:\{1,2,\ldots,D\}$, and $S':=\{1,2,\ldots, D^2\}$, we define a function $f:S\times S \rightarrow S'$, which maps any pair $(m,n)\in S\times S$ to $l\in S'$. This allows us to associate with every term $K_nK_m^T$ another operator, let us call it $K_l$. The operators $K_l$ are again Kraus operators. Indeed, we have:
\begin{equation}
\begin{aligned}
    \sum_l K_l^\dagger K_l & =\sum_{m,n}(\overline{K}_mK_n^\dagger)(K_nK_m^T)\\ &=\sum_m\overline{K}_mK_m^T=\sum_{m}(K_m K_m^\dagger)^T\\
    &=\mathds{1},
\end{aligned}
\end{equation} 
where in the last equality we use the fact that the Pauli channel is a unital channel. The above calculations show that instead of considering the action of $\mathcal{E}_{\vec{p}}\otimes \mathcal{E}_{\vec{p}}$, it is enough to consider the action of a channel $\widetilde{\mathcal{E}}_{\vec{p}}(\cdot)=\sum_l K_l(\cdot)K_l^\dagger$ on the system $B$ only. What is more, the resulting channel $\widetilde{\mathcal{E}}_{\vec{p}}$ is again a Pauli channel, but with different parameter $\vec{p}'$ depending on $\vec{p}$. The explicit relations can be derived having known forms of the Kraus operators for $\mathcal{E}_{\vec{p}}$. It means that their approach can be reduced to studying noisy resource state, where the noise is applied only on the Bob's side.

In our case, the noise is produced by the resource-environment interaction given by the unitary:
\begin{equation}
\label{eq:unitaryE}
U_{A_kB_k:E}=\sum_{i,j}|ij\rangle\langle ij|_{A_kB_k}\otimes w_{ij}^E,
\end{equation}
where $\{w_{ij}^E\}$ are unitaries. Now, let us consider the initial state for system and environment to be of the form $\rho_{A_k B_k:E} = \ketbra{\Psi^-}_{A_k B_k}\otimes\rho_E$. The corresponding channel is
\begin{align}
    \mathcal{C}(\rho_{A_k B_k})& =\tr_E U_{A_k B_k:E}(\rho_{A_k B_k:E})U_{A_k B_k:E}^\dagger \\
        &=\frac{1}{2}\sum_{i,j}(-1)^{i+j}\tr{w_{j,j\oplus 1}^\dagger w_{i, i\oplus 1}\rho_E}\ketbra{i,i\oplus 1}{j,j\oplus 1} \\
        &=\frac{1}{2}\sum_{i,j}\ketbra{i}{j}\otimes\Gamma_{ij}\ketbra{i\oplus 1}{j\oplus 1}.
\end{align}
Here, $\Gamma_{00}=\Gamma_{11}=1$ and $\Gamma_{01}=\Gamma_{10}^\ast$. We take $\Gamma_{01}=\Gamma e^{-i\theta}$. Eq. \eqref{dephasing_channel} can be expressed as
\begin{align}
    \mathcal{C}(\rho_{A_k B_k}) & = \frac{1+|\Gamma|}{2}\mathds{1}\otimes R(\theta)\ketbra{\Psi^-}\mathds{1}\otimes R(\theta)^\dagger \\
        &+ \frac{1-|\Gamma|}{2}\mathds{1}\otimes \sigma_z R(\theta)\ketbra{\Psi^-}\mathds{1}\otimes R(\theta)^\dagger\sigma_z^\dagger \\
        & = \mathds{1}\otimes\mathcal{E}_{\Gamma,\theta}\left(\ketbra{\Psi^-}\right).
\end{align}
Here, $\mathcal{E}_{\Gamma,\theta}=\sum_{k=0,1}E_k(\cdot)E_k^\dagger$ is a single qubit Pauli channel with Kraus operators 
\begin{align}
    E_0 & = \sqrt{\frac{1+|\Gamma|}{2}}R(\theta) \\
        E_1 & =\sqrt{\frac{1-|\Gamma|}{2}}\sigma_z R(\theta),
\end{align}
where $R(\theta)$ is given by Eq. \eqref{eq:Rotation}.

Let $\tilde{\Psi}_{\vec{B}\vec{A}}$ be the resource state. We can write
\begin{equation}
\tilde{\Psi}_{\vec{B}\vec{A}}  = \bigotimes_j^N\tilde{\Psi}_{B_j A_j} = \bigotimes_j^N \mathcal{E}_{\Gamma, \theta} (\ketbra{\Psi^-}),
\end{equation}
where

\begin{equation}
    \begin{split}
\mathcal{E}_{\Gamma,\theta}\left(\ketbra{\Psi^-}_{B_j A_j}\right) & =
\frac{1+|\Gamma|}{2}(\mathds{1}\otimes R(-\theta))\ketbra{\Psi^-}_{B_j A_j}(\mathds{1}\otimes R(-\theta)^\dagger) \\
  & \frac{1-|\Gamma|}{2}(\mathds{1}\otimes \sigma_3 R(-\theta))\ketbra{\Psi^-}_{B_j A_j}(\mathds{1}\otimes R(-\theta)^\dagger \sigma_3)
    \end{split}
\end{equation}
The PBT channel is written as
\begin{equation}
\label{PBT_channel_noisy_resource_1}
    \begin{split}
        \mathcal{C}(\rho_C) & = \sum_{j=1}^N \tr_{\bar{B}_j\vec{A} C}\left[\left(\mathds{1}_{\vec{B}}\otimes\Pi^{(j)}_{\vec{A}C}\right)\left(\tilde{\Psi}_{\vec{B}\vec{A}}\otimes\rho_C\right)\right] \\
        &=\sum_{j=1}^N \tr_{\vec{A} C}\left[\left(\mathds{1}_{B_j}\otimes\Pi^{(j)}_{\vec{A}C}\right)\left(\tr_{\bar{B}_j}{\tilde{\Psi}_{\vec{B}\vec{A}}}\otimes\rho_C\right)\right] \\
    \end{split}
\end{equation}
Since $\tr_{\bar{B}_j}{\tilde{\Psi}_{\vec{B}\vec{A}}} = \tilde{\Psi}_{B_j A_j}\otimes\frac{\mathds{1}_{\bar{A}_j}}{2^{N-1}}$, we can re-write Eq. \eqref{PBT_channel_noisy_resource_1} using as
 \begin{equation}
\label{PBT_channel_noisy_resource_2}
    \begin{split}
        \mathcal{C}(\rho_C) & = \frac{1+|\Gamma|}{2}\sum_{j=1}^N \tr_{\vec{A} C}\left[\left(\mathds{1}_{B_j}\otimes\Pi^{(j)}_{\vec{A}C}\right)\left(\left(R(\theta)\otimes\mathds{1}\ketbra{\Psi^-}_{B_j A_j}R(\theta)^\dagger\otimes\mathds{1}\right)\otimes\frac{\mathds{1}_{\bar{A}_j}}{2^{N-1}}\otimes\rho_C\right)\right] \\
        &+\frac{1-|\Gamma|}{2}\sum_{j=1}^N \tr_{\vec{A} C}\left[\left(\mathds{1}_{B_j}\otimes\Pi^{(j)}_{\vec{A}C}\right)\left(\left(\sigma_3 R(\theta)\otimes\mathds{1}\ketbra{\Psi^-}_{B_j A_j}R(\theta)^\dagger\sigma_3\otimes\mathds{1}\right)\otimes\frac{\mathds{1}_{\bar{A}_j}}{2^{N-1}}\otimes\rho_C\right)\right] \\
         & = \frac{1+|\Gamma|}{2}R(\theta)\sum_{j=1}^N \tr_{\vec{A} C}\left[\left(\mathds{1}_{B_j}\otimes\Pi^{(j)}_{\vec{A}C}\right)\left(\ketbra{\Psi^-}_{B_j A_j}\otimes\frac{\mathds{1}_{\bar{A}_j}}{2^{N-1}}\otimes\rho_C\right)\right]_{B_j\rightarrow B}R(\theta)^\dagger \\
        &+\frac{1-|\Gamma|}{2} \sigma_3 R(\theta)\sum_{j=1}^N \tr_{\vec{A} C}\left[\left(\mathds{1}_{B_j}\otimes\Pi^{(j)}_{\vec{A}C}\right)\left(\ketbra{\Psi^-}_{B_j A_j}\otimes\frac{\mathds{1}_{\bar{A}_j}}{2^{N-1}}\otimes\rho_C\right)\right]_{B_j\rightarrow B}R(\theta)^\dagger\sigma_3\\
        & =\frac{1+|\Gamma|}{2} R(\theta)\mathcal{C}(\rho_C) R(\theta)^\dagger + \frac{1-|\Gamma|}{2} \sigma_3 R(\theta)\mathcal{C} (\rho_C)R(\theta)^\dagger\sigma_3
    \end{split}
\end{equation}
Let us consider the case $\theta =0$. Then
\begin{equation}
    \mathcal{C}(\rho_C) =\frac{1+|\Gamma|}{2} \mathcal{C}(\rho_C) + \frac{1-|\Gamma|}{2} \sigma_3 \mathcal{C} (\rho_C)\sigma_3
\end{equation}
Comparing the above channel with Eq. (36) of Kim and Jeong~\cite{Kim}, we get $\alpha^1 = 0; \quad \alpha^2 = 0$ and
\begin{equation}
    \begin{split}
        \frac{\alpha^0}{16} & = \frac{1+|\Gamma|}{2}\Rightarrow \alpha^0 = 8(1+|\Gamma|) \\
        \frac{\alpha^3}{16}& = \frac{1-|\Gamma|}{2}\Rightarrow \alpha^3 = 8(1-|\Gamma|) \\
    \end{split}
\end{equation}
The corresponding Pauli probabilities are calculated using Eq. (34) of Kim and Jeong~\cite{Kim} as
\begin{equation}
\label{eq:Kimsp}
    \begin{split}
        p_0 = 2(1+\sqrt{\Gamma}),\quad p_1 = 0, \quad p_2 = 0,\quad p_3 = 2(1-\sqrt{\Gamma}).
    \end{split}
\end{equation}
Further, following Eq.~(37), we calculate
\begin{equation}
\label{q_p}
    \begin{split}
        q_{\vec{p}} & = \frac{1}{3}((1-p_3)^2+2)-\frac{1}{6}p_3^2 \\
        & = \frac{1+2\Gamma}{6}
    \end{split}
\end{equation}
Since the fidelity for noiseless resource should be equal to Ishizaka and Hiroshima~\cite{ishizaka_quantum_2009, ishizaka_asymptotic_2008} for SRM, using $q_{\vec{p}}=1$ and
\begin{equation}
\label{KJ_fidelity}
    F = \frac{1}{4}+\frac{3}{4}q_N q_{\vec{p}},
\end{equation}
we get 
\begin{equation}
    q_N = \frac{4F_{IH}-1}{3}
\end{equation}
Using Eqs. \eqref{q_p} and \eqref{KJ_fidelity},
\begin{equation}
    F = \frac{2\Gamma+1}{3}F_{IH}+\frac{1-\Gamma}{6}
\end{equation}
The entanglement fidelity is not the same as our noiseless case in Eq. \eqref{ent_fid_NL}. 

\section{Spin--boson model and microscopic origin of the decoherence factor}
\label{app:spin-boson}

In this Appendix we provide a microscopic justification of the effective decoherence
parameter used in Sec.~\ref{Sec:spin-boson}. We remind here that our goal is not to re-derive the port-based teleportation
protocol from first principles, but to demonstrate how the single complex decoherence
factor
\begin{equation}
\Gamma(t,r) = |\Gamma(t,r)| e^{i\theta(t,r)}
\label{eq:H5}
\end{equation}
naturally arises from a standard spin--boson model for two spatially separated qubits.
Throughout we follow the field-theoretic treatment of dephasing noise in which the
separation enters through spatial bath correlations (see, e.g., Ref.~\cite{Tuziemski2018}).

\vspace{0.3cm}

We consider two qubits forming a Bell pair, located at spatial positions
$\mathbf r_A$ and $\mathbf r_B$, interacting with a bosonic environment described by
\begin{equation}
H_E = \sum_{\mathbf k} \omega_{\mathbf k} b_{\mathbf k}^\dagger b_{\mathbf k}.
\end{equation}

The total Hamiltonian of the composite system is therefore given by
\begin{equation}
H = H_S + H_E + H_{\mathrm{int}},
\end{equation}
where $H_S$ denotes the free Hamiltonian of the two qubits.
We assume $[H_S,H_{\mathrm{int}}]=0$, so that the interaction induces pure dephasing
without energy exchange.

The interaction Hamiltonian is taken in the position-dependent (field-theoretic)
pure-dephasing form
\begin{equation}
H_{\mathrm{int}}
=
\sum_{\alpha=A,B}
\sigma_z^{(\alpha)}
\otimes
\sum_{\mathbf k}
\left(
g_{\mathbf k} e^{i\mathbf k\cdot\mathbf r_\alpha} b_{\mathbf k}^\dagger
+
g_{\mathbf k}^\ast e^{-i\mathbf k\cdot\mathbf r_\alpha} b_{\mathbf k}
\right),
\label{eq:H6}
\end{equation}

\vspace{0.3cm}

Due to the spatial separation
\begin{equation}
r = |\mathbf r_A - \mathbf r_B|,
\end{equation}
bath-induced correlations are not instantaneous. In the field-theoretic model,
the distance dependence enters through phase factors $e^{i\mathbf k\cdot(\mathbf r_A-\mathbf r_B)}$
and, for an isotropic environment with linear dispersion $\omega=ck$, reduces to
oscillatory correlation factors $\cos(\omega r/c)$ and $\sin(\omega r/c)$
(see Ref.~\cite{Tuziemski2018}).

\vspace{0.3cm}

In a multi-qubit spin--boson model, different off-diagonal elements of the reduced density
matrix generally decay with different decoherence exponents, depending on the
corresponding eigenvalues of the system--bath coupling operator.
In the present work we are interested specifically in the coherence relevant for the Bell
pair $\ket{\Psi^-}=(\ket{01}-\ket{10})/\sqrt{2}$, which serves as the elementary
entanglement resource in the port-based teleportation protocol.
Consequently, the effective decoherence factor entering our description corresponds to
the decay of the coherence between the computational basis states $\ket{01}$ and
$\ket{10}$.

Within the field-theoretic formulation of collective dephasing, such two-qubit coherences
can be expressed in terms of bath correlation functions $\Gamma_{mn}(t)$
\cite{Tuziemski2018}.
For the $\ket{01}\!\leftrightarrow\!\ket{10}$ coherence, the relevant contribution is
given by the difference of local and nonlocal bath correlations, which reflects the fact
that both basis states acquire identical bath-induced phases in the limit of perfectly
correlated noise.
As a result, the singlet state belongs to a decoherence-free subspace for vanishing
separation between the qubits, while finite separation leads to a nontrivial,
distance-dependent decay of the Bell-pair coherence.

Tracing out the bosonic environment yields an effective dephasing of the Bell pair
fully characterized by the single complex number $\Gamma(t,r)$.
For the Bell-pair coherence relevant to teleportation (equivalently, the coherence
between $|01\rangle$ and $|10\rangle$ in the computational basis), one obtains
\cite{Tuziemski2018}
\begin{align}
\chi(t,r)
&=
2\int_0^\infty d\omega\,
\frac{J(\omega)}{\omega^2}\,
\bigl(1-\cos\omega t\bigr)\,
\coth\!\left(\frac{\beta\omega}{2}\right)\,
\bigl[1-\cos(\omega r/c)\bigr],
\label{eq:H8_new}
\\[1ex]
\theta(t,r)
&=
 \frac{1}{2}\int_0^\infty d\omega\,
\frac{J(\omega)}{\omega^2}\,
\bigl(1-\cos\omega t\bigr)\,
\sin(\omega r/c),
\label{eq:H9_new}
\end{align}
where $J(\omega)$ denotes the spectral density of the environment and
$\beta = 1/T$ is the inverse temperature. The decoherence factor reads
\begin{equation}
\Gamma(t,r)=\exp\!\bigl[-\chi(t,r)+i\theta(t,r)\bigr],
\qquad
|\Gamma(t,r)|=e^{-\chi(t,r)}.
\label{eq:H9b_new}
\end{equation}

\vspace{0.3cm}

To make the connection between microscopic bath parameters and the effective decoherence
factor more explicit, we consider a generalized Ohmic spectral density with an
exponential cutoff,
\begin{equation}
J(\omega) = \frac{\omega^s}{\Lambda^{s-1}} e^{-\omega/\Lambda}.
\label{eq:H10}
\end{equation}
We introduce dimensionless variables
\begin{equation}
\tilde\omega = \frac{\omega}{\Lambda},
\qquad
\tau = t\Lambda,
\qquad
\vartheta = \frac{T}{\Lambda},
\qquad
\ell = \frac{r\Lambda}{c}.
\label{eq:H11_new}
\end{equation}

In terms of these variables, Eqs.~(\ref{eq:H8_new})--(\ref{eq:H9_new}) become
\begin{align}
\chi(\tau,\ell)
&=
2\int_0^\infty d\tilde\omega\;
\tilde\omega^{\,s-2} e^{-\tilde\omega}\,
\bigl(1-\cos(\tilde\omega\tau)\bigr)\,
\coth\!\left(\frac{\tilde\omega}{2\vartheta}\right)\,
\bigl[1-\cos(\tilde\omega\ell)\bigr],
\label{eq:H12_new}
\\[1ex]
\theta(\tau,\ell)
&=
\frac{1}{2}\int_0^\infty d\tilde\omega\;
\tilde\omega^{\,s-2} e^{-\tilde\omega}\,
\bigl(1-\cos(\tilde\omega\tau)\bigr)\,
\sin(\tilde\omega\ell),
\label{eq:H13_new}
\end{align}
and the decoherence factor can be written as
\begin{equation}
\Gamma(\tau,\ell)
=
\exp\!\bigl[-\chi(\tau,\ell) + i\theta(\tau,\ell)\bigr],
\qquad
|\Gamma(\tau,\ell)| = e^{-\chi(\tau,\ell)}.
\label{eq:H14_new}
\end{equation}

This shows explicitly that all microscopic bath properties and the finite separation
enter the teleportation protocol solely through the single complex parameter
$\Gamma(\tau,\ell)$.

\subsection*{Derivation of Eqs.~(H6)--(H7) from Tuziemski Eqs.~(17)--(20)}

We start from the multi-qubit pure-dephasing (spin--boson) model as in
Tuziemski \emph{et al.}~\cite{Tuziemski2018}, where the reduced dynamics
of the register coherences can be written in terms of bath correlation
matrices $\Gamma(t)$, $\Gamma^{\pm}(t)$.
In particular, Tuziemski writes the (complex) decoherence factor between two
register basis states $\ket{\boldsymbol{\epsilon}}$ and $\ket{\boldsymbol{\epsilon}'}$
($\epsilon_n,\epsilon_n'\in\{\pm \tfrac12\}$ being the eigenvalues of $J_z^{(n)}=\tfrac12\sigma_z^{(n)}$)
as~\cite[Eq.~(15)]{Tuziemski2018}

\begin{equation}
-\log \gamma_{\boldsymbol{\epsilon}\boldsymbol{\epsilon}'}(t)
=
\Delta \boldsymbol{\epsilon}^{T}\,\Gamma(t)\,\Delta \boldsymbol{\epsilon}
\;+\;
i\Bigl(
\boldsymbol{\epsilon}^{T}\Gamma^{+}(t)\boldsymbol{\epsilon}
-
\boldsymbol{\epsilon}'^{T}\Gamma^{+}(t)\boldsymbol{\epsilon}'
-2\,\boldsymbol{\epsilon}^{T}\Gamma^{-}(t)\boldsymbol{\epsilon}'
\Bigr),
\qquad
\Delta\boldsymbol{\epsilon}:=\boldsymbol{\epsilon}-\boldsymbol{\epsilon}' .
\label{eq:Tuz15}
\end{equation}

The matrices entering \eqref{eq:Tuz15} are specified by Tuziemski
via a position-dependent coupling vector

\begin{equation}
\mathbf{g}_{\mathbf{k}}
=
g_{\mathbf{k}} \bigl(e^{-i\mathbf{k}\cdot\mathbf{r}_1},e^{-i\mathbf{k}\cdot\mathbf{r}_2}\bigr),
\label{eq:Tuz17}
\end{equation}

and the mode-sum definitions~\cite[Eqs.~(18)--(20)]{Tuziemski2018}

\begin{align}
\Gamma_{nm}(t)
&=
\frac12\sum_{\mathbf{k}}
\bigl|g_{\mathbf{k}}\alpha_{\mathbf{k}}(t)\bigr|^{2}
\coth\!\Bigl(\frac{\omega_{\mathbf{k}}}{2k_BT}\Bigr)\,
\cos\!\bigl(\mathbf{k}\cdot(\mathbf{r}_n-\mathbf{r}_m)\bigr),
\label{eq:Tuz18}
\\
\Gamma^{+}_{nm}(t)
&=
\sum_{\mathbf{k}}
|g_{\mathbf{k}}|^{2}\,\xi_{\mathbf{k}}(t)\,
\cos\!\bigl(\mathbf{k}\cdot(\mathbf{r}_n-\mathbf{r}_m)\bigr),
\label{eq:Tuz19}
\\
\Gamma^{-}_{nm}(t)
&=
\frac12\sum_{\mathbf{k}}
\bigl|g_{\mathbf{k}}\alpha_{\mathbf{k}}(t)\bigr|^{2}\,
\sin\!\bigl(\mathbf{k}\cdot(\mathbf{r}_n-\mathbf{r}_m)\bigr).
\label{eq:Tuz20}
\end{align}
Here, we use \(\mathbf r_1\equiv\mathbf r_A\) and
\(\mathbf r_2\equiv\mathbf r_B\) for convenience.
To obtain the final expressions from the previous paragraph, we need to execute the following steps:\\
{\bf Identify the relevant coherence.}
In NPBT we need the coherence between the two-qubit computational basis states
$\ket{01}$ and $\ket{10}$, since it controls the singlet Bell component.
We place the qubits at $\mathbf{r}_A$ and $\mathbf{r}_B$ and define
$r:=|\mathbf{r}_A-\mathbf{r}_B|$.
With the convention $\sigma_z\ket{0}=+\ket{0}$ and $\sigma_z\ket{1}=-\ket{1}$,
the $J_z=\tfrac12\sigma_z$ eigenvalue strings are

\begin{equation}
\boldsymbol{\epsilon}(\ket{01})=\Bigl(+\tfrac12,-\tfrac12\Bigr),
\qquad
\boldsymbol{\epsilon}'(\ket{10})=\Bigl(-\tfrac12,+\tfrac12\Bigr),
\qquad
\Delta\boldsymbol{\epsilon}=(1,-1).
\label{eq:epsstrings}
\end{equation}

{\bf Real part (decay exponent) in terms of $\Gamma_{nm}(t)$.}
For two qubits the matrix $\Gamma(t)$ is symmetric and has equal diagonals,
$\Gamma_{11}(t)=\Gamma_{22}(t)$, and $\Gamma_{12}(t)=\Gamma_{21}(t)$.
Thus,

\begin{align}
\Delta\boldsymbol{\epsilon}^{T}\Gamma(t)\Delta\boldsymbol{\epsilon}
&=
(1,-1)
\begin{pmatrix}
\Gamma_{11}(t) & \Gamma_{12}(t)\\
\Gamma_{12}(t) & \Gamma_{11}(t)
\end{pmatrix}
\begin{pmatrix}1\\-1\end{pmatrix}
\nonumber\\
&=
\Gamma_{11}(t)-2\Gamma_{12}(t)
=
2\bigl(\Gamma_{11}(t)-\Gamma_{12}(t)\bigr).
\label{eq:realpartmatrix}
\end{align}

Therefore, the modulus decay of the $\ket{01}\bra{10}$ coherence is governed by
$\Gamma_{11}(t)-\Gamma_{12}(t)$.
Similarly,
\begin{align}
\boldsymbol{\epsilon}^{T}\Gamma^+(t)\boldsymbol{\epsilon}
&=
\frac{1}{4}(1,-1)
\begin{pmatrix}
\Gamma^+_{11}(t) & \Gamma^+_{12}(t)\\
\Gamma^+_{12}(t) & \Gamma^+_{11}(t)
\end{pmatrix}
\begin{pmatrix}1\\-1\end{pmatrix}
\nonumber\\
&=
\frac{1}{4}\left[\Gamma^+_{11}(t)+\Gamma^+_{22}(t)-\left(\Gamma^+_{12}(t)+\Gamma^+_{21}(t)\right)\right]
=
\frac{1}{2}\bigl(\Gamma^+_{11}(t)-\Gamma^+_{12}(t)\bigr),
\label{eq:imgpartmatrix_1}
\end{align}

\begin{align}
\boldsymbol{\epsilon}^{\prime^T}\Gamma^+(t)\boldsymbol{\epsilon}^{\prime}
&=
\frac{1}{4}(-1,1)
\begin{pmatrix}
\Gamma^+_{11}(t) & \Gamma^+_{12}(t)\\
\Gamma^+_{12}(t) & \Gamma^+_{11}(t)
\end{pmatrix}
\begin{pmatrix} -1\\1\end{pmatrix}
\nonumber\\
&=
\frac{1}{4}\left[\Gamma^+_{11}(t)+\Gamma^+_{22}(t)-\left(\Gamma^+_{12}(t)+\Gamma^+_{21}(t)\right)\right]
=
\frac{1}{2}\bigl(\Gamma^+_{11}(t)-\Gamma^+_{12}(t)\bigr),
\label{eq:imgpartmatrix_2}
\end{align}
and
\begin{align}
\boldsymbol{\epsilon}^{T}\Gamma^-(t)\boldsymbol{\epsilon}^{\prime}
&=
\frac{1}{4}(1,-1)
\begin{pmatrix}
\Gamma^-_{11}(t) & \Gamma^-_{12}(t)\\
\Gamma^-_{12}(t) & \Gamma^-_{11}(t)
\end{pmatrix}
\begin{pmatrix} -1\\1\end{pmatrix}
\nonumber\\
&=
-\frac{1}{4}\left[\Gamma^-_{11}(t)+\Gamma^-_{22}(t)-\left(\Gamma^-_{12}(t)+\Gamma^-_{21}(t)\right)\right]
=
\frac{1}{2}\Gamma^-_{12}(t).
\label{eq:imgpartmatrix_3}
\end{align}
Here, we have used $\Gamma^+_{11}(t)=\Gamma^+_{22}(t)$, $\Gamma^+_{12}(t)=\Gamma^+_{21}(t)$, $\Gamma^-_{11}(t)=\Gamma^-_{22}(t)=0$ and $\Gamma^-_{12}(t)=\Gamma^-_{21}(t)$ followed from \eqref{eq:Tuz19} and \eqref{eq:Tuz20} \\

{\bf Turn the mode sums into spectral-density integrals.}
The standard interaction-picture displacement amplitude is
$\alpha_{\mathbf{k}}(t)=(1-e^{i\omega_{\mathbf{k}} t})/\omega_{\mathbf{k}}$,
so that

\begin{equation}
\bigl|\alpha_{\mathbf{k}}(t)\bigr|^{2}
=
\frac{2(1-\cos \omega_{\mathbf{k}} t)}{\omega_{\mathbf{k}}^{2}}.
\label{eq:alphasq}
\end{equation}

Inserting \eqref{eq:alphasq} into \eqref{eq:Tuz18} gives

\begin{equation}
\Gamma_{nm}(t)
=
\sum_{\mathbf{k}}
|g_{\mathbf{k}}|^{2}\,
\frac{1-\cos(\omega_{\mathbf{k}} t)}{\omega_{\mathbf{k}}^{2}}\,
\coth\!\Bigl(\frac{\omega_{\mathbf{k}}}{2k_BT}\Bigr)\,
\cos\!\bigl(\mathbf{k}\cdot(\mathbf{r}_n-\mathbf{r}_m)\bigr).
\label{eq:Gamnm_modesum_simplified}
\end{equation}

Passing to the continuum limit and introducing the spectral density
$J(\omega)=\sum_{\mathbf{k}}|g_{\mathbf{k}}|^{2}\delta(\omega-\omega_{\mathbf{k}})$
yields

\begin{equation}
\Gamma_{nm}(t)
=
\int_{0}^{\infty} d\omega\;
J(\omega)\,
\frac{1-\cos(\omega t)}{\omega^{2}}\,
\coth\!\Bigl(\frac{\omega}{2k_BT}\Bigr)\,
\cos\!\bigl(\omega\tau_{nm}\bigr),
\qquad
\tau_{nm}:=\frac{|\mathbf{r}_n-\mathbf{r}_m|}{c},
\label{eq:Gamnm_continuum}
\end{equation}

in agreement with Tuziemski's continuum expressions~\cite[Eq.~(27)]{Tuziemski2018}.
Similarly,
\begin{equation}
\Gamma^-_{nm}(t)
=
\frac{1}{2}\int_{0}^{\infty} d\omega\;
J(\omega)\,
\frac{1-\cos(\omega t)}{\omega^{2}}\,
\sin\!\bigl(\omega\tau_{nm}\bigr).
\label{eq:Gamma_minus_continuum}
\end{equation}
For the two-qubit case, $\tau_{11}=0$ and $\tau_{12}=r/c$, and hence,
\begin{align}
\Delta\boldsymbol{\epsilon}^{T}\Gamma(t)\Delta\boldsymbol{\epsilon}
&=
2\int_{0}^{\infty} d\omega\;
J(\omega)\,
\frac{1-\cos(\omega t)}{\omega^{2}}\,
\coth\!\Bigl(\frac{\omega}{2k_BT}\Bigr)\,
\Bigl(1-\cos(\omega r/c)\Bigr),
\label{eq:H28}
\end{align}
and
\begin{align}
\boldsymbol{\epsilon}^{T}\Gamma^-(t)\boldsymbol{\epsilon}^{\prime}
&=
\frac{1}{4}\int_{0}^{\infty} d\omega\;
J(\omega)\,
\frac{1-\cos(\omega t)}{\omega^{2}}\,
\sin(\omega r/c),
\label{eq:H29}
\end{align}
Combining \eqref{eq:Tuz15}, and \eqref{eq:imgpartmatrix_1}-\eqref{eq:imgpartmatrix_3}
\begin{align}
    -\log \gamma_{\boldsymbol{\epsilon}\boldsymbol{\epsilon}'}(t)
=
&\Delta \boldsymbol{\epsilon}^{T}\,\Gamma(t)\,\Delta \boldsymbol{\epsilon}
\,
-2i\,\boldsymbol{\epsilon}^{T}\Gamma^{-}(t)\boldsymbol{\epsilon}'
\label{eq:H30}
\end{align}
Comparing \eqref{eq:H30} with \eqref{eq:H9b_new}, and using \eqref{eq:H28} and \eqref{eq:H29} for spectral forms, $\chi(t,r)$ and $\theta(t,r)$ are expressed as

\begin{align}
\chi(t,r) & =
2\int_{0}^{\infty} d\omega\;
\frac{J(\omega)}{\omega^{2}}\,
\bigl(1-\cos(\omega t)\bigr)\,
\coth\!\Bigl(\frac{\beta\omega}{2}\Bigr)\,
\Bigl(1-\cos(\omega r/c)\Bigr) \\
\theta(t,r) & =
\frac{1}{2}\int_{0}^{\infty} d\omega\;
\frac{J(\omega)}{\omega^{2}}\,
\bigl(1-\cos(\omega t)\bigr)\,
\sin(\omega r/c).
\end{align}

\twocolumngrid


\begin{thebibliography}{67}%
\makeatletter
\providecommand \@ifxundefined [1]{%
 \@ifx{#1\undefined}
}%
\providecommand \@ifnum [1]{%
 \ifnum #1\expandafter \@firstoftwo
 \else \expandafter \@secondoftwo
 \fi
}%
\providecommand \@ifx [1]{%
 \ifx #1\expandafter \@firstoftwo
 \else \expandafter \@secondoftwo
 \fi
}%
\providecommand \natexlab [1]{#1}%
\providecommand \enquote  [1]{``#1''}%
\providecommand \bibnamefont  [1]{#1}%
\providecommand \bibfnamefont [1]{#1}%
\providecommand \citenamefont [1]{#1}%
\providecommand \href@noop [0]{\@secondoftwo}%
\providecommand \href [0]{\begingroup \@sanitize@url \@href}%
\providecommand \@href[1]{\@@startlink{#1}\@@href}%
\providecommand \@@href[1]{\endgroup#1\@@endlink}%
\providecommand \@sanitize@url [0]{\catcode `\\12\catcode `\$12\catcode
  `\&12\catcode `\#12\catcode `\^12\catcode `\_12\catcode `\%12\relax}%
\providecommand \@@startlink[1]{}%
\providecommand \@@endlink[0]{}%
\providecommand \url  [0]{\begingroup\@sanitize@url \@url }%
\providecommand \@url [1]{\endgroup\@href {#1}{\urlprefix }}%
\providecommand \urlprefix  [0]{URL }%
\providecommand \Eprint [0]{\href }%
\providecommand \doibase [0]{https://doi.org/}%
\providecommand \selectlanguage [0]{\@gobble}%
\providecommand \bibinfo  [0]{\@secondoftwo}%
\providecommand \bibfield  [0]{\@secondoftwo}%
\providecommand \translation [1]{[#1]}%
\providecommand \BibitemOpen [0]{}%
\providecommand \bibitemStop [0]{}%
\providecommand \bibitemNoStop [0]{.\EOS\space}%
\providecommand \EOS [0]{\spacefactor3000\relax}%
\providecommand \BibitemShut  [1]{\csname bibitem#1\endcsname}%
\let\auto@bib@innerbib\@empty
\bibitem [{\citenamefont {Ishizaka}\ and\ \citenamefont
  {Hiroshima}(2008)}]{ishizaka_asymptotic_2008}%
  \BibitemOpen
  \bibfield  {author} {\bibinfo {author} {\bibfnamefont {S.}~\bibnamefont
  {Ishizaka}}\ and\ \bibinfo {author} {\bibfnamefont {T.}~\bibnamefont
  {Hiroshima}},\ }\bibfield  {title} {\bibinfo {title} {Asymptotic
  {Teleportation} {Scheme} as a {Universal} {Programmable} {Quantum}
  {Processor}},\ }\href {https://doi.org/10.1103/PhysRevLett.101.240501}
  {\bibfield  {journal} {\bibinfo  {journal} {Physical Review Letters}\
  }\textbf {\bibinfo {volume} {101}},\ \bibinfo {pages} {240501} (\bibinfo
  {year} {2008})}\BibitemShut {NoStop}%
\bibitem [{\citenamefont {Ishizaka}\ and\ \citenamefont
  {Hiroshima}(2009)}]{ishizaka_quantum_2009}%
  \BibitemOpen
  \bibfield  {author} {\bibinfo {author} {\bibfnamefont {S.}~\bibnamefont
  {Ishizaka}}\ and\ \bibinfo {author} {\bibfnamefont {T.}~\bibnamefont
  {Hiroshima}},\ }\bibfield  {title} {\bibinfo {title} {Quantum teleportation
  scheme by selecting one of multiple output ports},\ }\href
  {https://doi.org/10.1103/PhysRevA.79.042306} {\bibfield  {journal} {\bibinfo
  {journal} {Physical Review A}\ }\textbf {\bibinfo {volume} {79}},\ \bibinfo
  {pages} {042306} (\bibinfo {year} {2009})}\BibitemShut {NoStop}%
\bibitem [{\citenamefont {Bennett}\ \emph {et~al.}(1993)\citenamefont
  {Bennett}, \citenamefont {Brassard}, \citenamefont {Cr{\'e}peau},
  \citenamefont {Jozsa}, \citenamefont {Peres},\ and\ \citenamefont
  {Wootters}}]{bennett_teleporting_1993}%
  \BibitemOpen
  \bibfield  {author} {\bibinfo {author} {\bibfnamefont {C.~H.}\ \bibnamefont
  {Bennett}}, \bibinfo {author} {\bibfnamefont {G.}~\bibnamefont {Brassard}},
  \bibinfo {author} {\bibfnamefont {C.}~\bibnamefont {Cr{\'e}peau}}, \bibinfo
  {author} {\bibfnamefont {R.}~\bibnamefont {Jozsa}}, \bibinfo {author}
  {\bibfnamefont {A.}~\bibnamefont {Peres}},\ and\ \bibinfo {author}
  {\bibfnamefont {W.~K.}\ \bibnamefont {Wootters}},\ }\bibfield  {title}
  {\bibinfo {title} {Teleporting an unknown quantum state via dual classical
  and {Einstein}-{Podolsky}-{Rosen} channels},\ }\href
  {https://doi.org/10.1103/PhysRevLett.70.1895} {\bibfield  {journal} {\bibinfo
   {journal} {Physical Review Letters}\ }\textbf {\bibinfo {volume} {70}},\
  \bibinfo {pages} {1895} (\bibinfo {year} {1993})}\BibitemShut {NoStop}%
\bibitem [{\citenamefont {Nielsen}\ and\ \citenamefont
  {Chuang}(1997)}]{Nielsen1997}%
  \BibitemOpen
  \bibfield  {author} {\bibinfo {author} {\bibfnamefont {M.~A.}\ \bibnamefont
  {Nielsen}}\ and\ \bibinfo {author} {\bibfnamefont {I.~L.}\ \bibnamefont
  {Chuang}},\ }\bibfield  {title} {\bibinfo {title} {Programmable quantum gate
  arrays},\ }\href {https://doi.org/10.1103/PhysRevLett.79.321} {\bibfield
  {journal} {\bibinfo  {journal} {Phys. Rev. Lett.}\ }\textbf {\bibinfo
  {volume} {79}},\ \bibinfo {pages} {321} (\bibinfo {year} {1997})}\BibitemShut
  {NoStop}%
\bibitem [{\citenamefont {Yang}\ \emph {et~al.}(2020)\citenamefont {Yang},
  \citenamefont {Renner},\ and\ \citenamefont
  {Chiribella}}]{PhysRevLett.125.210501}%
  \BibitemOpen
  \bibfield  {author} {\bibinfo {author} {\bibfnamefont {Y.}~\bibnamefont
  {Yang}}, \bibinfo {author} {\bibfnamefont {R.}~\bibnamefont {Renner}},\ and\
  \bibinfo {author} {\bibfnamefont {G.}~\bibnamefont {Chiribella}},\ }\bibfield
   {title} {\bibinfo {title} {Optimal universal programming of unitary gates},\
  }\href {https://doi.org/10.1103/PhysRevLett.125.210501} {\bibfield  {journal}
  {\bibinfo  {journal} {Phys. Rev. Lett.}\ }\textbf {\bibinfo {volume} {125}},\
  \bibinfo {pages} {210501} (\bibinfo {year} {2020})}\BibitemShut {NoStop}%
\bibitem [{\citenamefont {{Studzi{\'n}ski}}\ \emph {et~al.}(2017)\citenamefont
  {{Studzi{\'n}ski}}, \citenamefont {{Strelchuk}}, \citenamefont
  {{Mozrzymas}},\ and\ \citenamefont {{Horodecki}}}]{Studzinski2017}%
  \BibitemOpen
  \bibfield  {author} {\bibinfo {author} {\bibfnamefont {M.}~\bibnamefont
  {{Studzi{\'n}ski}}}, \bibinfo {author} {\bibfnamefont {S.}~\bibnamefont
  {{Strelchuk}}}, \bibinfo {author} {\bibfnamefont {M.}~\bibnamefont
  {{Mozrzymas}}},\ and\ \bibinfo {author} {\bibfnamefont {M.}~\bibnamefont
  {{Horodecki}}},\ }\bibfield  {title} {\bibinfo {title} {{Port-based
  teleportation in arbitrary dimension}},\ }\href
  {https://doi.org/10.1038/s41598-017-10051-4} {\bibfield  {journal} {\bibinfo
  {journal} {Scientific Reports}\ }\textbf {\bibinfo {volume} {7}},\ \bibinfo
  {eid} {10871} (\bibinfo {year} {2017})},\ \Eprint
  {https://arxiv.org/abs/1612.09260} {arXiv:1612.09260 [quant-ph]} \BibitemShut
  {NoStop}%
\bibitem [{\citenamefont {{Mozrzymas}}\ \emph {et~al.}(2018)\citenamefont
  {{Mozrzymas}}, \citenamefont {{Studzi{\'n}ski}}, \citenamefont
  {{Strelchuk}},\ and\ \citenamefont {{Horodecki}}}]{StuNJP}%
  \BibitemOpen
  \bibfield  {author} {\bibinfo {author} {\bibfnamefont {M.}~\bibnamefont
  {{Mozrzymas}}}, \bibinfo {author} {\bibfnamefont {M.}~\bibnamefont
  {{Studzi{\'n}ski}}}, \bibinfo {author} {\bibfnamefont {S.}~\bibnamefont
  {{Strelchuk}}},\ and\ \bibinfo {author} {\bibfnamefont {M.}~\bibnamefont
  {{Horodecki}}},\ }\bibfield  {title} {\bibinfo {title} {{Optimal port-based
  teleportation}},\ }\href {https://doi.org/10.1088/1367-2630/aab8e7}
  {\bibfield  {journal} {\bibinfo  {journal} {New Journal of Physics}\ }\textbf
  {\bibinfo {volume} {20}},\ \bibinfo {eid} {053006} (\bibinfo {year}
  {2018})},\ \Eprint {https://arxiv.org/abs/1707.08456} {arXiv:1707.08456
  [quant-ph]} \BibitemShut {NoStop}%
\bibitem [{\citenamefont {Wang}\ and\ \citenamefont
  {Braunstein}(2016)}]{wang_higher-dimensional_2016}%
  \BibitemOpen
  \bibfield  {author} {\bibinfo {author} {\bibfnamefont {Z.-W.}\ \bibnamefont
  {Wang}}\ and\ \bibinfo {author} {\bibfnamefont {S.~L.}\ \bibnamefont
  {Braunstein}},\ }\bibfield  {title} {{\selectlanguage {en}\bibinfo {title}
  {Higher-dimensional performance of port-based teleportation}},\ }\href
  {https://doi.org/10.1038/srep33004} {\bibfield  {journal} {\bibinfo
  {journal} {Scientific Reports}\ }\textbf {\bibinfo {volume} {6}},\ \bibinfo
  {pages} {33004} (\bibinfo {year} {2016})}\BibitemShut {NoStop}%
\bibitem [{\citenamefont {Christandl}\ \emph {et~al.}(2020)\citenamefont
  {Christandl}, \citenamefont {Leditzky}, \citenamefont {Majenz}, \citenamefont
  {Smith}, \citenamefont {Speelman},\ and\ \citenamefont {Walter}}]{majenz2}%
  \BibitemOpen
  \bibfield  {author} {\bibinfo {author} {\bibfnamefont {M.}~\bibnamefont
  {Christandl}}, \bibinfo {author} {\bibfnamefont {F.}~\bibnamefont
  {Leditzky}}, \bibinfo {author} {\bibfnamefont {C.}~\bibnamefont {Majenz}},
  \bibinfo {author} {\bibfnamefont {G.}~\bibnamefont {Smith}}, \bibinfo
  {author} {\bibfnamefont {F.}~\bibnamefont {Speelman}},\ and\ \bibinfo
  {author} {\bibfnamefont {M.}~\bibnamefont {Walter}},\ }\bibfield  {title}
  {\bibinfo {title} {Asymptotic performance of port-based teleportation},\
  }\bibfield  {journal} {\bibinfo  {journal} {Communications in Mathematical
  Physics}\ }\href {https://doi.org/10.1007/s00220-020-03884-0}
  {10.1007/s00220-020-03884-0} (\bibinfo {year} {2020})\BibitemShut {NoStop}%
\bibitem [{\citenamefont {Leditzky}(2022)}]{leditzky2020optimality}%
  \BibitemOpen
  \bibfield  {author} {\bibinfo {author} {\bibfnamefont {F.}~\bibnamefont
  {Leditzky}},\ }\bibfield  {title} {\bibinfo {title} {Optimality of the pretty
  good measurement for port-based teleportation},\ }\href
  {https://doi.org/10.1007/s11005-022-01592-5} {\bibfield  {journal} {\bibinfo
  {journal} {Letters in Mathematical Physics}\ }\textbf {\bibinfo {volume}
  {112}} (\bibinfo {year} {2022})}\BibitemShut {NoStop}%
\bibitem [{\citenamefont {Taranto}\ \emph {et~al.}(2025)\citenamefont
  {Taranto}, \citenamefont {Milz}, \citenamefont {Murao}, \citenamefont
  {Quintino},\ and\ \citenamefont
  {Modi}}]{taranto2025higherorderquantumoperations}%
  \BibitemOpen
  \bibfield  {author} {\bibinfo {author} {\bibfnamefont {P.}~\bibnamefont
  {Taranto}}, \bibinfo {author} {\bibfnamefont {S.}~\bibnamefont {Milz}},
  \bibinfo {author} {\bibfnamefont {M.}~\bibnamefont {Murao}}, \bibinfo
  {author} {\bibfnamefont {M.~T.}\ \bibnamefont {Quintino}},\ and\ \bibinfo
  {author} {\bibfnamefont {K.}~\bibnamefont {Modi}},\ }\href
  {https://arxiv.org/abs/2503.09693} {\bibinfo {title} {Higher-order quantum
  operations}} (\bibinfo {year} {2025}),\ \Eprint
  {https://arxiv.org/abs/2503.09693} {arXiv:2503.09693 [quant-ph]} \BibitemShut
  {NoStop}%
\bibitem [{\citenamefont {Pereira}\ \emph {et~al.}(2021)\citenamefont
  {Pereira}, \citenamefont {Banchi},\ and\ \citenamefont {Pirandola}}]{sim}%
  \BibitemOpen
  \bibfield  {author} {\bibinfo {author} {\bibfnamefont {J.}~\bibnamefont
  {Pereira}}, \bibinfo {author} {\bibfnamefont {L.}~\bibnamefont {Banchi}},\
  and\ \bibinfo {author} {\bibfnamefont {S.}~\bibnamefont {Pirandola}},\
  }\bibfield  {title} {\bibinfo {title} {Characterising port-based
  teleportation as universal simulator of qubit channels},\ }\href
  {https://doi.org/10.1088/1751-8121/abe67a} {\bibfield  {journal} {\bibinfo
  {journal} {Journal of Physics A: Mathematical and Theoretical}\ }\textbf
  {\bibinfo {volume} {54}},\ \bibinfo {pages} {205301} (\bibinfo {year}
  {2021})}\BibitemShut {NoStop}%
\bibitem [{\citenamefont {Muguruza}\ and\ \citenamefont
  {Speelman}(2024)}]{Muguruza2024portbasedstate}%
  \BibitemOpen
  \bibfield  {author} {\bibinfo {author} {\bibfnamefont {G.}~\bibnamefont
  {Muguruza}}\ and\ \bibinfo {author} {\bibfnamefont {F.}~\bibnamefont
  {Speelman}},\ }\bibfield  {title} {\bibinfo {title} {Port-{B}ased {S}tate
  {P}reparation and {A}pplications},\ }\href
  {https://doi.org/10.22331/q-2024-12-18-1573} {\bibfield  {journal} {\bibinfo
  {journal} {{Quantum}}\ }\textbf {\bibinfo {volume} {8}},\ \bibinfo {pages}
  {1573} (\bibinfo {year} {2024})}\BibitemShut {NoStop}%
\bibitem [{\citenamefont {Brzić}\ \emph {et~al.}(2025)\citenamefont {Brzić},
  \citenamefont {Yoshida}, \citenamefont {Murao},\ and\ \citenamefont
  {Quintino}}]{brzic2025}%
  \BibitemOpen
  \bibfield  {author} {\bibinfo {author} {\bibfnamefont {V.}~\bibnamefont
  {Brzić}}, \bibinfo {author} {\bibfnamefont {S.}~\bibnamefont {Yoshida}},
  \bibinfo {author} {\bibfnamefont {M.}~\bibnamefont {Murao}},\ and\ \bibinfo
  {author} {\bibfnamefont {M.~T.}\ \bibnamefont {Quintino}},\ }\href
  {https://arxiv.org/abs/2510.20530} {\bibinfo {title} {Higher-order quantum
  computing with known input states}} (\bibinfo {year} {2025}),\ \Eprint
  {https://arxiv.org/abs/2510.20530} {arXiv:2510.20530 [quant-ph]} \BibitemShut
  {NoStop}%
\bibitem [{\citenamefont {Yoshida}\ \emph {et~al.}(2024)\citenamefont
  {Yoshida}, \citenamefont {Koizumi}, \citenamefont {Studziński},
  \citenamefont {Quintino},\ and\ \citenamefont {Murao}}]{yoshida2024onetoone}%
  \BibitemOpen
  \bibfield  {author} {\bibinfo {author} {\bibfnamefont {S.}~\bibnamefont
  {Yoshida}}, \bibinfo {author} {\bibfnamefont {Y.}~\bibnamefont {Koizumi}},
  \bibinfo {author} {\bibfnamefont {M.}~\bibnamefont {Studziński}}, \bibinfo
  {author} {\bibfnamefont {M.~T.}\ \bibnamefont {Quintino}},\ and\ \bibinfo
  {author} {\bibfnamefont {M.}~\bibnamefont {Murao}},\ }\href
  {https://arxiv.org/abs/2408.11902} {\bibinfo {title} {One-to-one
  correspondence between deterministic port-based teleportation and unitary
  estimation}} (\bibinfo {year} {2024}),\ \Eprint
  {https://arxiv.org/abs/2408.11902} {arXiv:2408.11902 [quant-ph]} \BibitemShut
  {NoStop}%
\bibitem [{\citenamefont {Quintino}\ \emph
  {et~al.}(2019{\natexlab{a}})\citenamefont {Quintino}, \citenamefont {Dong},
  \citenamefont {Shimbo}, \citenamefont {Soeda},\ and\ \citenamefont
  {Murao}}]{PhysRevLett.123.210502}%
  \BibitemOpen
  \bibfield  {author} {\bibinfo {author} {\bibfnamefont {M.~T.}\ \bibnamefont
  {Quintino}}, \bibinfo {author} {\bibfnamefont {Q.}~\bibnamefont {Dong}},
  \bibinfo {author} {\bibfnamefont {A.}~\bibnamefont {Shimbo}}, \bibinfo
  {author} {\bibfnamefont {A.}~\bibnamefont {Soeda}},\ and\ \bibinfo {author}
  {\bibfnamefont {M.}~\bibnamefont {Murao}},\ }\bibfield  {title} {\bibinfo
  {title} {Reversing unknown quantum transformations: Universal quantum circuit
  for inverting general unitary operations},\ }\href
  {https://doi.org/10.1103/PhysRevLett.123.210502} {\bibfield  {journal}
  {\bibinfo  {journal} {Phys. Rev. Lett.}\ }\textbf {\bibinfo {volume} {123}},\
  \bibinfo {pages} {210502} (\bibinfo {year} {2019}{\natexlab{a}})}\BibitemShut
  {NoStop}%
\bibitem [{\citenamefont {Quintino}\ \emph
  {et~al.}(2019{\natexlab{b}})\citenamefont {Quintino}, \citenamefont {Dong},
  \citenamefont {Shimbo}, \citenamefont {Soeda},\ and\ \citenamefont
  {Murao}}]{PhysRevA.100.062339}%
  \BibitemOpen
  \bibfield  {author} {\bibinfo {author} {\bibfnamefont {M.~T.}\ \bibnamefont
  {Quintino}}, \bibinfo {author} {\bibfnamefont {Q.}~\bibnamefont {Dong}},
  \bibinfo {author} {\bibfnamefont {A.}~\bibnamefont {Shimbo}}, \bibinfo
  {author} {\bibfnamefont {A.}~\bibnamefont {Soeda}},\ and\ \bibinfo {author}
  {\bibfnamefont {M.}~\bibnamefont {Murao}},\ }\bibfield  {title} {\bibinfo
  {title} {Probabilistic exact universal quantum circuits for transforming
  unitary operations},\ }\href {https://doi.org/10.1103/PhysRevA.100.062339}
  {\bibfield  {journal} {\bibinfo  {journal} {Phys. Rev. A}\ }\textbf {\bibinfo
  {volume} {100}},\ \bibinfo {pages} {062339} (\bibinfo {year}
  {2019}{\natexlab{b}})}\BibitemShut {NoStop}%
\bibitem [{\citenamefont {Quintino}\ and\ \citenamefont
  {Ebler}(2022)}]{Quintino2022deterministic}%
  \BibitemOpen
  \bibfield  {author} {\bibinfo {author} {\bibfnamefont {M.~T.}\ \bibnamefont
  {Quintino}}\ and\ \bibinfo {author} {\bibfnamefont {D.}~\bibnamefont
  {Ebler}},\ }\bibfield  {title} {\bibinfo {title} {Deterministic
  transformations between unitary operations: {E}xponential advantage with
  adaptive quantum circuits and the power of indefinite causality},\ }\href
  {https://doi.org/10.22331/q-2022-03-31-679} {\bibfield  {journal} {\bibinfo
  {journal} {{Quantum}}\ }\textbf {\bibinfo {volume} {6}},\ \bibinfo {pages}
  {679} (\bibinfo {year} {2022})}\BibitemShut {NoStop}%
\bibitem [{\citenamefont {{Sedl{\'a}k}}\ \emph {et~al.}(2019)\citenamefont
  {{Sedl{\'a}k}}, \citenamefont {{Bisio}},\ and\ \citenamefont
  {{Ziman}}}]{Stroing}%
  \BibitemOpen
  \bibfield  {author} {\bibinfo {author} {\bibfnamefont {M.}~\bibnamefont
  {{Sedl{\'a}k}}}, \bibinfo {author} {\bibfnamefont {A.}~\bibnamefont
  {{Bisio}}},\ and\ \bibinfo {author} {\bibfnamefont {M.}~\bibnamefont
  {{Ziman}}},\ }\bibfield  {title} {\bibinfo {title} {{Optimal Probabilistic
  Storage and Retrieval of Unitary Channels}},\ }\href
  {https://doi.org/10.1103/PhysRevLett.122.170502} {\bibfield  {journal}
  {\bibinfo  {journal} {\prl}\ }\textbf {\bibinfo {volume} {122}},\ \bibinfo
  {eid} {170502} (\bibinfo {year} {2019})},\ \Eprint
  {https://arxiv.org/abs/1809.04552} {arXiv:1809.04552 [quant-ph]} \BibitemShut
  {NoStop}%
\bibitem [{\citenamefont {Bisio}\ \emph {et~al.}(2010)\citenamefont {Bisio},
  \citenamefont {Chiribella}, \citenamefont {D'Ariano}, \citenamefont
  {Facchini},\ and\ \citenamefont {Perinotti}}]{PhysRevA.81.032324}%
  \BibitemOpen
  \bibfield  {author} {\bibinfo {author} {\bibfnamefont {A.}~\bibnamefont
  {Bisio}}, \bibinfo {author} {\bibfnamefont {G.}~\bibnamefont {Chiribella}},
  \bibinfo {author} {\bibfnamefont {G.~M.}\ \bibnamefont {D'Ariano}}, \bibinfo
  {author} {\bibfnamefont {S.}~\bibnamefont {Facchini}},\ and\ \bibinfo
  {author} {\bibfnamefont {P.}~\bibnamefont {Perinotti}},\ }\bibfield  {title}
  {\bibinfo {title} {Optimal quantum learning of a unitary transformation},\
  }\href {https://doi.org/10.1103/PhysRevA.81.032324} {\bibfield  {journal}
  {\bibinfo  {journal} {Phys. Rev. A}\ }\textbf {\bibinfo {volume} {81}},\
  \bibinfo {pages} {032324} (\bibinfo {year} {2010})}\BibitemShut {NoStop}%
\bibitem [{\citenamefont {Beigi}\ and\ \citenamefont
  {K{\"o}nig}(2011)}]{beigi_konig}%
  \BibitemOpen
  \bibfield  {author} {\bibinfo {author} {\bibfnamefont {S.}~\bibnamefont
  {Beigi}}\ and\ \bibinfo {author} {\bibfnamefont {R.}~\bibnamefont
  {K{\"o}nig}},\ }\bibfield  {title} {\bibinfo {title} {Simplified
  instantaneous non-local quantum computation with applications to
  position-based cryptography},\ }\href
  {https://doi.org/10.1088/1367-2630/13/9/093036} {\bibfield  {journal}
  {\bibinfo  {journal} {New Journal of Physics}\ }\textbf {\bibinfo {volume}
  {13}},\ \bibinfo {pages} {093036} (\bibinfo {year} {2011})}\BibitemShut
  {NoStop}%
\bibitem [{\citenamefont {May}(2022)}]{may2022complexity}%
  \BibitemOpen
  \bibfield  {author} {\bibinfo {author} {\bibfnamefont {A.}~\bibnamefont
  {May}},\ }\bibfield  {title} {\bibinfo {title} {Complexity and entanglement
  in non-local computation and holography},\ }\href@noop {} {\bibfield
  {journal} {\bibinfo  {journal} {Quantum}\ }\textbf {\bibinfo {volume} {6}},\
  \bibinfo {pages} {864} (\bibinfo {year} {2022})}\BibitemShut {NoStop}%
\bibitem [{\citenamefont {Buhrman}\ \emph {et~al.}(2016)\citenamefont
  {Buhrman}, \citenamefont {Czekaj}, \citenamefont {Grudka}, \citenamefont
  {Horodecki}, \citenamefont {Horodecki}, \citenamefont {Markiewicz},
  \citenamefont {Speelman},\ and\ \citenamefont
  {Strelchuk}}]{buhrman_quantum_2016}%
  \BibitemOpen
  \bibfield  {author} {\bibinfo {author} {\bibfnamefont {H.}~\bibnamefont
  {Buhrman}}, \bibinfo {author} {\bibfnamefont {{\L}.}~\bibnamefont {Czekaj}},
  \bibinfo {author} {\bibfnamefont {A.}~\bibnamefont {Grudka}}, \bibinfo
  {author} {\bibfnamefont {M.}~\bibnamefont {Horodecki}}, \bibinfo {author}
  {\bibfnamefont {P.}~\bibnamefont {Horodecki}}, \bibinfo {author}
  {\bibfnamefont {M.}~\bibnamefont {Markiewicz}}, \bibinfo {author}
  {\bibfnamefont {F.}~\bibnamefont {Speelman}},\ and\ \bibinfo {author}
  {\bibfnamefont {S.}~\bibnamefont {Strelchuk}},\ }\bibfield  {title}
  {{\selectlanguage {en}\bibinfo {title} {Quantum communication complexity
  advantage implies violation of a {Bell} inequality}},\ }\href
  {https://doi.org/10.1073/pnas.1507647113} {\bibfield  {journal} {\bibinfo
  {journal} {Proceedings of the National Academy of Sciences}\ }\textbf
  {\bibinfo {volume} {113}},\ \bibinfo {pages} {3191} (\bibinfo {year}
  {2016})}\BibitemShut {NoStop}%
\bibitem [{\citenamefont {Pereira}\ \emph {et~al.}(2023)\citenamefont
  {Pereira}, \citenamefont {Banchi},\ and\ \citenamefont
  {Pirandola}}]{pereira2023continuous}%
  \BibitemOpen
  \bibfield  {author} {\bibinfo {author} {\bibfnamefont {J.~L.}\ \bibnamefont
  {Pereira}}, \bibinfo {author} {\bibfnamefont {L.}~\bibnamefont {Banchi}},\
  and\ \bibinfo {author} {\bibfnamefont {S.}~\bibnamefont {Pirandola}},\
  }\bibfield  {title} {\bibinfo {title} {Continuous variable port-based
  teleportation},\ }\href@noop {} {\bibfield  {journal} {\bibinfo  {journal}
  {arXiv:2302.08522}\ } (\bibinfo {year} {2023})}\BibitemShut {NoStop}%
\bibitem [{\citenamefont {{Pirandola}}\ \emph {et~al.}(2019)\citenamefont
  {{Pirandola}}, \citenamefont {{Laurenza}}, \citenamefont {{Lupo}},\ and\
  \citenamefont {{Pereira}}}]{limit}%
  \BibitemOpen
  \bibfield  {author} {\bibinfo {author} {\bibfnamefont {S.}~\bibnamefont
  {{Pirandola}}}, \bibinfo {author} {\bibfnamefont {R.}~\bibnamefont
  {{Laurenza}}}, \bibinfo {author} {\bibfnamefont {C.}~\bibnamefont {{Lupo}}},\
  and\ \bibinfo {author} {\bibfnamefont {J.~L.}\ \bibnamefont {{Pereira}}},\
  }\bibfield  {title} {\bibinfo {title} {{Fundamental limits to quantum channel
  discrimination}},\ }\href {https://doi.org/10.1038/s41534-019-0162-y}
  {\bibfield  {journal} {\bibinfo  {journal} {npj Quantum Information}\
  }\textbf {\bibinfo {volume} {5}},\ \bibinfo {eid} {50} (\bibinfo {year}
  {2019})},\ \Eprint {https://arxiv.org/abs/1803.02834} {arXiv:1803.02834
  [quant-ph]} \BibitemShut {NoStop}%
\bibitem [{\citenamefont {Chitambar}\ and\ \citenamefont
  {Leditzky}(2024)}]{10315956}%
  \BibitemOpen
  \bibfield  {author} {\bibinfo {author} {\bibfnamefont {E.}~\bibnamefont
  {Chitambar}}\ and\ \bibinfo {author} {\bibfnamefont {F.}~\bibnamefont
  {Leditzky}},\ }\bibfield  {title} {\bibinfo {title} {On the duality of
  teleportation and dense coding},\ }\href
  {https://doi.org/10.1109/TIT.2023.3331821} {\bibfield  {journal} {\bibinfo
  {journal} {IEEE Transactions on Information Theory}\ }\textbf {\bibinfo
  {volume} {70}},\ \bibinfo {pages} {3529} (\bibinfo {year}
  {2024})}\BibitemShut {NoStop}%
\bibitem [{\citenamefont {Grinko}\ \emph {et~al.}(2023)\citenamefont {Grinko},
  \citenamefont {Burchardt},\ and\ \citenamefont
  {Ozols}}]{grinko2023gelfandtsetlinbasispartiallytransposed}%
  \BibitemOpen
  \bibfield  {author} {\bibinfo {author} {\bibfnamefont {D.}~\bibnamefont
  {Grinko}}, \bibinfo {author} {\bibfnamefont {A.}~\bibnamefont {Burchardt}},\
  and\ \bibinfo {author} {\bibfnamefont {M.}~\bibnamefont {Ozols}},\ }\href
  {https://arxiv.org/abs/2310.02252} {\bibinfo {title} {Gelfand-tsetlin basis
  for partially transposed permutations, with applications to quantum
  information}} (\bibinfo {year} {2023}),\ \Eprint
  {https://arxiv.org/abs/2310.02252} {arXiv:2310.02252 [quant-ph]} \BibitemShut
  {NoStop}%
\bibitem [{\citenamefont {Fei}\ \emph {et~al.}(2023)\citenamefont {Fei},
  \citenamefont {Timmerman},\ and\ \citenamefont
  {Hayden}}]{fei2023efficientquantumalgorithmportbased}%
  \BibitemOpen
  \bibfield  {author} {\bibinfo {author} {\bibfnamefont {J.}~\bibnamefont
  {Fei}}, \bibinfo {author} {\bibfnamefont {S.}~\bibnamefont {Timmerman}},\
  and\ \bibinfo {author} {\bibfnamefont {P.}~\bibnamefont {Hayden}},\ }\href
  {https://arxiv.org/abs/2310.01637} {\bibinfo {title} {Efficient quantum
  algorithm for port-based teleportation}} (\bibinfo {year} {2023}),\ \Eprint
  {https://arxiv.org/abs/2310.01637} {arXiv:2310.01637 [quant-ph]} \BibitemShut
  {NoStop}%
\bibitem [{\citenamefont {Grinko}\ \emph {et~al.}(2024)\citenamefont {Grinko},
  \citenamefont {Burchardt},\ and\ \citenamefont {Ozols}}]{Grinko2024}%
  \BibitemOpen
  \bibfield  {author} {\bibinfo {author} {\bibfnamefont {D.}~\bibnamefont
  {Grinko}}, \bibinfo {author} {\bibfnamefont {A.}~\bibnamefont {Burchardt}},\
  and\ \bibinfo {author} {\bibfnamefont {M.}~\bibnamefont {Ozols}},\ }\href
  {https://arxiv.org/abs/2312.03188} {\bibinfo {title} {Efficient quantum
  circuits for port-based teleportation}} (\bibinfo {year} {2024}),\ \Eprint
  {https://arxiv.org/abs/2312.03188} {arXiv:2312.03188 [quant-ph]} \BibitemShut
  {NoStop}%
\bibitem [{\citenamefont {Wills}\ \emph {et~al.}(2024)\citenamefont {Wills},
  \citenamefont {Hsieh},\ and\ \citenamefont
  {Strelchuk}}]{PRXQuantum.5.030354}%
  \BibitemOpen
  \bibfield  {author} {\bibinfo {author} {\bibfnamefont {A.}~\bibnamefont
  {Wills}}, \bibinfo {author} {\bibfnamefont {M.-H.}\ \bibnamefont {Hsieh}},\
  and\ \bibinfo {author} {\bibfnamefont {S.}~\bibnamefont {Strelchuk}},\
  }\bibfield  {title} {\bibinfo {title} {Efficient algorithms for all
  port-based teleportation protocols},\ }\href
  {https://doi.org/10.1103/PRXQuantum.5.030354} {\bibfield  {journal} {\bibinfo
   {journal} {PRX Quantum}\ }\textbf {\bibinfo {volume} {5}},\ \bibinfo {pages}
  {030354} (\bibinfo {year} {2024})}\BibitemShut {NoStop}%
\bibitem [{\citenamefont {Kim}\ and\ \citenamefont {Jeong}(2024)}]{Kim}%
  \BibitemOpen
  \bibfield  {author} {\bibinfo {author} {\bibfnamefont {H.~E.}\ \bibnamefont
  {Kim}}\ and\ \bibinfo {author} {\bibfnamefont {K.}~\bibnamefont {Jeong}},\
  }\bibfield  {title} {\bibinfo {title} {Port-based entanglement teleportation
  via noisy resource states},\ }\href
  {https://doi.org/10.1088/1402-4896/ad22c6} {\bibfield  {journal} {\bibinfo
  {journal} {Physica Scripta}\ }\textbf {\bibinfo {volume} {99}},\ \bibinfo
  {pages} {035105} (\bibinfo {year} {2024})}\BibitemShut {NoStop}%
\bibitem [{\citenamefont
  {Schlosshauer}(2007{\natexlab{a}})}]{schlosshauer2007}%
  \BibitemOpen
  \bibfield  {author} {\bibinfo {author} {\bibfnamefont {M.}~\bibnamefont
  {Schlosshauer}},\ }\href {https://doi.org/10.1007/978-3-540-35775-9} {\emph
  {\bibinfo {title} {Decoherence and the Quantum-To-Classical Transition}}}\
  (\bibinfo  {publisher} {Springer Science \& Business Media},\ \bibinfo {year}
  {2007})\BibitemShut {NoStop}%
\bibitem [{Note1()}]{Note1}%
  \BibitemOpen
  \bibinfo {note} {In the literature, square-root measurements (SRMs) are
  sometimes referred to as pretty-good measurements (PGMs). We will use the two
  terms interchangeably.}\BibitemShut {Stop}%
\bibitem [{\citenamefont {Palma}\ \emph {et~al.}(1996)\citenamefont {Palma},
  \citenamefont {Suominen},\ and\ \citenamefont
  {Ekert}}]{10.1098/rspa.1996.0029}%
  \BibitemOpen
  \bibfield  {author} {\bibinfo {author} {\bibfnamefont {G.~M.}\ \bibnamefont
  {Palma}}, \bibinfo {author} {\bibfnamefont {K.-a.}\ \bibnamefont
  {Suominen}},\ and\ \bibinfo {author} {\bibfnamefont {A.}~\bibnamefont
  {Ekert}},\ }\bibfield  {title} {\bibinfo {title} {Quantum computers and
  dissipation},\ }\href {https://doi.org/10.1098/rspa.1996.0029} {\bibfield
  {journal} {\bibinfo  {journal} {Proceedings of the Royal Society A:
  Mathematical, Physical and Engineering Sciences}\ }\textbf {\bibinfo {volume}
  {452}},\ \bibinfo {pages} {567} (\bibinfo {year} {1996})},\ \Eprint
  {https://arxiv.org/abs/https://royalsocietypublishing.org/rspa/article-pdf/452/1946/567/998471/rspa.1996.0029.pdf}
  {https://royalsocietypublishing.org/rspa/article-pdf/452/1946/567/998471/rspa.1996.0029.pdf}
  \BibitemShut {NoStop}%
\bibitem [{\citenamefont {Reina}\ \emph {et~al.}(2002)\citenamefont {Reina},
  \citenamefont {Quiroga},\ and\ \citenamefont {Johnson}}]{PhysRevA.65.032326}%
  \BibitemOpen
  \bibfield  {author} {\bibinfo {author} {\bibfnamefont {J.~H.}\ \bibnamefont
  {Reina}}, \bibinfo {author} {\bibfnamefont {L.}~\bibnamefont {Quiroga}},\
  and\ \bibinfo {author} {\bibfnamefont {N.~F.}\ \bibnamefont {Johnson}},\
  }\bibfield  {title} {\bibinfo {title} {Decoherence of quantum registers},\
  }\href {https://doi.org/10.1103/PhysRevA.65.032326} {\bibfield  {journal}
  {\bibinfo  {journal} {Phys. Rev. A}\ }\textbf {\bibinfo {volume} {65}},\
  \bibinfo {pages} {032326} (\bibinfo {year} {2002})}\BibitemShut {NoStop}%
\bibitem [{\citenamefont
  {Schlosshauer}(2007{\natexlab{b}})}]{Schlosshauer2007-tx}%
  \BibitemOpen
  \bibfield  {author} {\bibinfo {author} {\bibfnamefont {M.}~\bibnamefont
  {Schlosshauer}},\ }\href@noop {} {\emph {\bibinfo {title} {Decoherence: And
  the {Quantum-To-Classical} Transition}}}\ (\bibinfo  {publisher} {Springer},\
  \bibinfo {year} {2007})\BibitemShut {NoStop}%
\bibitem [{\citenamefont {Ruggiero}\ \emph {et~al.}(2006)\citenamefont
  {Ruggiero}, \citenamefont {Delsing}, \citenamefont {Granata}, \citenamefont
  {Pashkin},\ and\ \citenamefont {Silvestrini}}]{ruggiero2006quantum}%
  \BibitemOpen
  \bibfield  {author} {\bibinfo {author} {\bibfnamefont {B.}~\bibnamefont
  {Ruggiero}}, \bibinfo {author} {\bibfnamefont {P.}~\bibnamefont {Delsing}},
  \bibinfo {author} {\bibfnamefont {C.}~\bibnamefont {Granata}}, \bibinfo
  {author} {\bibfnamefont {Y.~A.}\ \bibnamefont {Pashkin}},\ and\ \bibinfo
  {author} {\bibfnamefont {P.}~\bibnamefont {Silvestrini}},\ }\bibfield
  {title} {\bibinfo {title} {Quantum computing in solid state systems}\
  }(\bibinfo  {publisher} {Springer},\ \bibinfo {address} {New York, NY},\
  \bibinfo {year} {2006})\BibitemShut {NoStop}%
\bibitem [{\citenamefont {Predojevi{\'c}}\ and\ \citenamefont
  {Mitchell}(2015)}]{predojevic2015engineering}%
  \BibitemOpen
  \bibfield  {author} {\bibinfo {author} {\bibfnamefont {A.}~\bibnamefont
  {Predojevi{\'c}}}\ and\ \bibinfo {author} {\bibfnamefont {M.}~\bibnamefont
  {Mitchell}},\ }\href {https://books.google.pl/books?id=Qj8wCgAAQBAJ} {\emph
  {\bibinfo {title} {Engineering the Atom-Photon Interaction: Controlling
  Fundamental Processes with Photons, Atoms and Solids}}},\ Nano-Optics and
  Nanophotonics\ (\bibinfo  {publisher} {Springer International Publishing},\
  \bibinfo {year} {2015})\BibitemShut {NoStop}%
\bibitem [{\citenamefont {Gaitan}(2008)}]{gaitan2008quantum}%
  \BibitemOpen
  \bibfield  {author} {\bibinfo {author} {\bibfnamefont {F.}~\bibnamefont
  {Gaitan}},\ }\href@noop {} {\emph {\bibinfo {title} {Quantum Error Correction
  and Fault Tolerant Quantum Computing}}}\ (\bibinfo  {publisher} {CRC Press},\
  \bibinfo {address} {Boca Raton, FL},\ \bibinfo {year} {2008})\BibitemShut
  {NoStop}%
\bibitem [{\citenamefont {Weiss}(2021)}]{weiss2021quantum}%
  \BibitemOpen
  \bibfield  {author} {\bibinfo {author} {\bibfnamefont {U.}~\bibnamefont
  {Weiss}},\ }\href@noop {} {\emph {\bibinfo {title} {Quantum Dissipative
  Systems}}},\ \bibinfo {edition} {5th}\ ed.\ (\bibinfo  {publisher} {World
  Scientific Publishing Company},\ \bibinfo {address} {Singapore},\ \bibinfo
  {year} {2021})\BibitemShut {NoStop}%
\bibitem [{\citenamefont {Horodecki}\ \emph {et~al.}(1999)\citenamefont
  {Horodecki}, \citenamefont {Horodecki},\ and\ \citenamefont
  {Horodecki}}]{HorodeckiMPRFidelity}%
  \BibitemOpen
  \bibfield  {author} {\bibinfo {author} {\bibfnamefont {M.}~\bibnamefont
  {Horodecki}}, \bibinfo {author} {\bibfnamefont {P.}~\bibnamefont
  {Horodecki}},\ and\ \bibinfo {author} {\bibfnamefont {R.}~\bibnamefont
  {Horodecki}},\ }\bibfield  {title} {\bibinfo {title} {General teleportation
  channel, singlet fraction, and quasidistillation},\ }\href
  {https://doi.org/10.1103/PhysRevA.60.1888} {\bibfield  {journal} {\bibinfo
  {journal} {Phys. Rev. A}\ }\textbf {\bibinfo {volume} {60}},\ \bibinfo
  {pages} {1888} (\bibinfo {year} {1999})}\BibitemShut {NoStop}%
\bibitem [{\citenamefont {Yu}\ and\ \citenamefont
  {Eberly}(2004)}]{PhysRevLett.93.140404}%
  \BibitemOpen
  \bibfield  {author} {\bibinfo {author} {\bibfnamefont {T.}~\bibnamefont
  {Yu}}\ and\ \bibinfo {author} {\bibfnamefont {J.~H.}\ \bibnamefont
  {Eberly}},\ }\bibfield  {title} {\bibinfo {title} {Finite-time
  disentanglement via spontaneous emission},\ }\href
  {https://doi.org/10.1103/PhysRevLett.93.140404} {\bibfield  {journal}
  {\bibinfo  {journal} {Phys. Rev. Lett.}\ }\textbf {\bibinfo {volume} {93}},\
  \bibinfo {pages} {140404} (\bibinfo {year} {2004})}\BibitemShut {NoStop}%
\bibitem [{\citenamefont {Dür}\ \emph {et~al.}(2004)\citenamefont {Dür},
  \citenamefont {Simon},\ and\ \citenamefont {Cirac}}]{dur2004stability}%
  \BibitemOpen
  \bibfield  {author} {\bibinfo {author} {\bibfnamefont {W.}~\bibnamefont
  {Dür}}, \bibinfo {author} {\bibfnamefont {C.}~\bibnamefont {Simon}},\ and\
  \bibinfo {author} {\bibfnamefont {J.~I.}\ \bibnamefont {Cirac}},\ }\bibfield
  {title} {\bibinfo {title} {Stability of macroscopic entanglement under
  decoherence},\ }\href {https://doi.org/10.1103/PhysRevLett.92.180403}
  {\bibfield  {journal} {\bibinfo  {journal} {Physical Review Letters}\
  }\textbf {\bibinfo {volume} {92}},\ \bibinfo {pages} {180403} (\bibinfo
  {year} {2004})}\BibitemShut {NoStop}%
\bibitem [{\citenamefont {Zurek}(2003)}]{zurek2003decoherence}%
  \BibitemOpen
  \bibfield  {author} {\bibinfo {author} {\bibfnamefont {W.~H.}\ \bibnamefont
  {Zurek}},\ }\bibfield  {title} {\bibinfo {title} {Decoherence, einselection,
  and the quantum origins of the classical},\ }\href
  {https://doi.org/10.1103/RevModPhys.75.715} {\bibfield  {journal} {\bibinfo
  {journal} {Reviews of Modern Physics}\ }\textbf {\bibinfo {volume} {75}},\
  \bibinfo {pages} {715} (\bibinfo {year} {2003})}\BibitemShut {NoStop}%
\bibitem [{\citenamefont {Schlosshauer}(2019)}]{SCHLOSSHAUER20191}%
  \BibitemOpen
  \bibfield  {author} {\bibinfo {author} {\bibfnamefont {M.}~\bibnamefont
  {Schlosshauer}},\ }\bibfield  {title} {\bibinfo {title} {Quantum
  decoherence},\ }\href
  {https://doi.org/https://doi.org/10.1016/j.physrep.2019.10.001} {\bibfield
  {journal} {\bibinfo  {journal} {Physics Reports}\ }\textbf {\bibinfo {volume}
  {831}},\ \bibinfo {pages} {1} (\bibinfo {year} {2019})},\ \bibinfo {note}
  {quantum decoherence}\BibitemShut {NoStop}%
\bibitem [{\citenamefont {Roszak}\ and\ \citenamefont
  {Korbicz}(2023)}]{Roszak2023purifying}%
  \BibitemOpen
  \bibfield  {author} {\bibinfo {author} {\bibfnamefont {K.}~\bibnamefont
  {Roszak}}\ and\ \bibinfo {author} {\bibfnamefont {J.~K.}\ \bibnamefont
  {Korbicz}},\ }\bibfield  {title} {\bibinfo {title} {Purifying
  teleportation},\ }\href {https://doi.org/10.22331/q-2023-02-16-923}
  {\bibfield  {journal} {\bibinfo  {journal} {{Quantum}}\ }\textbf {\bibinfo
  {volume} {7}},\ \bibinfo {pages} {923} (\bibinfo {year} {2023})}\BibitemShut
  {NoStop}%
\bibitem [{\citenamefont {Tuziemski}\ \emph {et~al.}(2019)\citenamefont
  {Tuziemski}, \citenamefont {Lampo}, \citenamefont {Lewenstein},\ and\
  \citenamefont {Korbicz}}]{Tuziemski2018}%
  \BibitemOpen
  \bibfield  {author} {\bibinfo {author} {\bibfnamefont {J.}~\bibnamefont
  {Tuziemski}}, \bibinfo {author} {\bibfnamefont {A.}~\bibnamefont {Lampo}},
  \bibinfo {author} {\bibfnamefont {M.}~\bibnamefont {Lewenstein}},\ and\
  \bibinfo {author} {\bibfnamefont {J.~K.}\ \bibnamefont {Korbicz}},\
  }\bibfield  {title} {\bibinfo {title} {Decoherence of spin registers
  revisited},\ }\href {https://doi.org/10.1103/PhysRevA.99.022122} {\bibfield
  {journal} {\bibinfo  {journal} {Physical Review A}\ }\textbf {\bibinfo
  {volume} {99}},\ \bibinfo {pages} {022122} (\bibinfo {year} {2019})},\
  \bibinfo {note} {arXiv:1811.01808},\ \Eprint
  {https://arxiv.org/abs/1811.01808} {arXiv:1811.01808 [quant-ph]} \BibitemShut
  {NoStop}%
\bibitem [{\citenamefont {Barnum}\ and\ \citenamefont
  {Knill}(2002)}]{barnum2002reversing}%
  \BibitemOpen
  \bibfield  {author} {\bibinfo {author} {\bibfnamefont {H.}~\bibnamefont
  {Barnum}}\ and\ \bibinfo {author} {\bibfnamefont {E.}~\bibnamefont {Knill}},\
  }\bibfield  {title} {\bibinfo {title} {Reversing quantum dynamics with
  near-optimal quantum and classical fidelity},\ }\href
  {https://doi.org/10.1063/1.1459754} {\bibfield  {journal} {\bibinfo
  {journal} {Journal of Mathematical Physics}\ }\textbf {\bibinfo {volume}
  {43}},\ \bibinfo {pages} {2097} (\bibinfo {year} {2002})}\BibitemShut
  {NoStop}%
\bibitem [{\citenamefont {Montanaro}(2008)}]{Montanaro2019}%
  \BibitemOpen
  \bibfield  {author} {\bibinfo {author} {\bibfnamefont {A.}~\bibnamefont
  {Montanaro}},\ }\bibfield  {title} {\bibinfo {title} {A lower bound on the
  probability of error in quantum state discrimination},\ }in\ \href
  {https://doi.org/10.1109/ITW.2008.4578690} {\emph {\bibinfo {booktitle} {2008
  IEEE Information Theory Workshop}}}\ (\bibinfo {year} {2008})\ pp.\ \bibinfo
  {pages} {378--380}\BibitemShut {NoStop}%
\bibitem [{\citenamefont {Audenaert}\ and\ \citenamefont
  {Mosonyi}(2014)}]{Audenaert2014}%
  \BibitemOpen
  \bibfield  {author} {\bibinfo {author} {\bibfnamefont {K.~M.~R.}\
  \bibnamefont {Audenaert}}\ and\ \bibinfo {author} {\bibfnamefont
  {M.}~\bibnamefont {Mosonyi}},\ }\bibfield  {title} {\bibinfo {title} {Upper
  bounds on the error probabilities and asymptotic error exponents in quantum
  multiple state discrimination},\ }\href {https://doi.org/10.1063/1.4898559}
  {\bibfield  {journal} {\bibinfo  {journal} {Journal of Mathematical Physics}\
  }\textbf {\bibinfo {volume} {55}},\ \bibinfo {pages} {102201} (\bibinfo
  {year} {2014})},\ \Eprint
  {https://arxiv.org/abs/https://pubs.aip.org/aip/jmp/article-pdf/doi/10.1063/1.4898559/14758713/102201\_1\_online.pdf}
  {https://pubs.aip.org/aip/jmp/article-pdf/doi/10.1063/1.4898559/14758713/102201\_1\_online.pdf}
  \BibitemShut {NoStop}%
\bibitem [{\citenamefont {Harris}\ \emph {et~al.}(2020)\citenamefont {Harris},
  \citenamefont {Millman}, \citenamefont {van~der Walt}, \citenamefont
  {Gommers}, \citenamefont {Virtanen}, \citenamefont {Cournapeau},
  \citenamefont {Wieser}, \citenamefont {Taylor}, \citenamefont {Berg},
  \citenamefont {Smith} \emph {et~al.}}]{harris2020numpy}%
  \BibitemOpen
  \bibfield  {author} {\bibinfo {author} {\bibfnamefont {C.~R.}\ \bibnamefont
  {Harris}}, \bibinfo {author} {\bibfnamefont {K.~J.}\ \bibnamefont {Millman}},
  \bibinfo {author} {\bibfnamefont {S.~J.}\ \bibnamefont {van~der Walt}},
  \bibinfo {author} {\bibfnamefont {R.}~\bibnamefont {Gommers}}, \bibinfo
  {author} {\bibfnamefont {P.}~\bibnamefont {Virtanen}}, \bibinfo {author}
  {\bibfnamefont {D.}~\bibnamefont {Cournapeau}}, \bibinfo {author}
  {\bibfnamefont {E.}~\bibnamefont {Wieser}}, \bibinfo {author} {\bibfnamefont
  {J.}~\bibnamefont {Taylor}}, \bibinfo {author} {\bibfnamefont
  {S.}~\bibnamefont {Berg}}, \bibinfo {author} {\bibfnamefont {N.~J.}\
  \bibnamefont {Smith}}, \emph {et~al.},\ }\bibfield  {title} {\bibinfo {title}
  {Array programming with numpy},\ }\href
  {https://doi.org/10.1038/s41586-020-2649-2} {\bibfield  {journal} {\bibinfo
  {journal} {Nature}\ }\textbf {\bibinfo {volume} {585}},\ \bibinfo {pages}
  {357} (\bibinfo {year} {2020})}\BibitemShut {NoStop}%
\bibitem [{\citenamefont {Eldar}\ \emph {et~al.}(2004)\citenamefont {Eldar},
  \citenamefont {Megretski},\ and\ \citenamefont
  {Verghese}}]{eldar2004optimal}%
  \BibitemOpen
  \bibfield  {author} {\bibinfo {author} {\bibfnamefont {Y.~C.}\ \bibnamefont
  {Eldar}}, \bibinfo {author} {\bibfnamefont {A.}~\bibnamefont {Megretski}},\
  and\ \bibinfo {author} {\bibfnamefont {G.~C.}\ \bibnamefont {Verghese}},\
  }\bibfield  {title} {\bibinfo {title} {Optimal detection of symmetric mixed
  quantum states},\ }\href {https://doi.org/10.1109/TIT.2004.828070} {\bibfield
   {journal} {\bibinfo  {journal} {IEEE Transactions on Information Theory}\
  }\textbf {\bibinfo {volume} {50}},\ \bibinfo {pages} {1198} (\bibinfo {year}
  {2004})}\BibitemShut {NoStop}%
\bibitem [{\citenamefont {Assalini}\ \emph {et~al.}(2010)\citenamefont
  {Assalini}, \citenamefont {Cariolaro},\ and\ \citenamefont
  {Pierobon}}]{PhysRevA.81.012315}%
  \BibitemOpen
  \bibfield  {author} {\bibinfo {author} {\bibfnamefont {A.}~\bibnamefont
  {Assalini}}, \bibinfo {author} {\bibfnamefont {G.}~\bibnamefont
  {Cariolaro}},\ and\ \bibinfo {author} {\bibfnamefont {G.}~\bibnamefont
  {Pierobon}},\ }\bibfield  {title} {\bibinfo {title} {Efficient optimal
  minimum error discrimination of symmetric quantum states},\ }\href
  {https://doi.org/10.1103/PhysRevA.81.012315} {\bibfield  {journal} {\bibinfo
  {journal} {Phys. Rev. A}\ }\textbf {\bibinfo {volume} {81}},\ \bibinfo
  {pages} {012315} (\bibinfo {year} {2010})}\BibitemShut {NoStop}%
\bibitem [{\citenamefont {Iten}\ \emph {et~al.}(2017)\citenamefont {Iten},
  \citenamefont {Renes},\ and\ \citenamefont {Sutter}}]{ItenRenesSutter}%
  \BibitemOpen
  \bibfield  {author} {\bibinfo {author} {\bibfnamefont {R.}~\bibnamefont
  {Iten}}, \bibinfo {author} {\bibfnamefont {J.~M.}\ \bibnamefont {Renes}},\
  and\ \bibinfo {author} {\bibfnamefont {D.}~\bibnamefont {Sutter}},\
  }\bibfield  {title} {\bibinfo {title} {Pretty good measures in quantum
  information theory},\ }\href {https://doi.org/10.1109/TIT.2016.2639521}
  {\bibfield  {journal} {\bibinfo  {journal} {IEEE Transactions on Information
  Theory}\ }\textbf {\bibinfo {volume} {63}},\ \bibinfo {pages} {1270}
  (\bibinfo {year} {2017})}\BibitemShut {NoStop}%
\bibitem [{\citenamefont {Helstrom}(1976)}]{helstrom1976quantum}%
  \BibitemOpen
  \bibfield  {author} {\bibinfo {author} {\bibfnamefont {C.~W.}\ \bibnamefont
  {Helstrom}},\ }\href@noop {} {\emph {\bibinfo {title} {Quantum Detection and
  Estimation Theory}}}\ (\bibinfo  {publisher} {Academic Press},\ \bibinfo
  {address} {New York, NY, USA},\ \bibinfo {year} {1976})\BibitemShut {NoStop}%
\bibitem [{\citenamefont {Horodecki}\ \emph {et~al.}(2003)\citenamefont
  {Horodecki}, \citenamefont {Shor},\ and\ \citenamefont
  {Ruskai}}]{ent_breaking_channels}%
  \BibitemOpen
  \bibfield  {author} {\bibinfo {author} {\bibfnamefont {M.}~\bibnamefont
  {Horodecki}}, \bibinfo {author} {\bibfnamefont {P.~W.}\ \bibnamefont
  {Shor}},\ and\ \bibinfo {author} {\bibfnamefont {M.~B.}\ \bibnamefont
  {Ruskai}},\ }\bibfield  {title} {\bibinfo {title} {Entanglement breaking
  channels},\ }\href {https://doi.org/10.1142/S0129055X03001709} {\bibfield
  {journal} {\bibinfo  {journal} {Reviews in Mathematical Physics}\ }\textbf
  {\bibinfo {volume} {15}},\ \bibinfo {pages} {629} (\bibinfo {year} {2003})},\
  \Eprint {https://arxiv.org/abs/https://doi.org/10.1142/S0129055X03001709}
  {https://doi.org/10.1142/S0129055X03001709} \BibitemShut {NoStop}%
\bibitem [{\citenamefont {Breuer}\ \emph {et~al.}(2009)\citenamefont {Breuer},
  \citenamefont {Laine},\ and\ \citenamefont {Piilo}}]{PhysRevLett.103.210401}%
  \BibitemOpen
  \bibfield  {author} {\bibinfo {author} {\bibfnamefont {H.-P.}\ \bibnamefont
  {Breuer}}, \bibinfo {author} {\bibfnamefont {E.-M.}\ \bibnamefont {Laine}},\
  and\ \bibinfo {author} {\bibfnamefont {J.}~\bibnamefont {Piilo}},\ }\bibfield
   {title} {\bibinfo {title} {Measure for the degree of non-markovian behavior
  of quantum processes in open systems},\ }\href
  {https://doi.org/10.1103/PhysRevLett.103.210401} {\bibfield  {journal}
  {\bibinfo  {journal} {Phys. Rev. Lett.}\ }\textbf {\bibinfo {volume} {103}},\
  \bibinfo {pages} {210401} (\bibinfo {year} {2009})}\BibitemShut {NoStop}%
\bibitem [{\citenamefont {Gily\'en}\ \emph {et~al.}(2022)\citenamefont
  {Gily\'en}, \citenamefont {Lloyd}, \citenamefont {Marvian}, \citenamefont
  {Quek},\ and\ \citenamefont {Wilde}}]{PhysRevLett.128.220502}%
  \BibitemOpen
  \bibfield  {author} {\bibinfo {author} {\bibfnamefont {A.}~\bibnamefont
  {Gily\'en}}, \bibinfo {author} {\bibfnamefont {S.}~\bibnamefont {Lloyd}},
  \bibinfo {author} {\bibfnamefont {I.}~\bibnamefont {Marvian}}, \bibinfo
  {author} {\bibfnamefont {Y.}~\bibnamefont {Quek}},\ and\ \bibinfo {author}
  {\bibfnamefont {M.~M.}\ \bibnamefont {Wilde}},\ }\bibfield  {title} {\bibinfo
  {title} {Quantum algorithm for petz recovery channels and pretty good
  measurements},\ }\href {https://doi.org/10.1103/PhysRevLett.128.220502}
  {\bibfield  {journal} {\bibinfo  {journal} {Phys. Rev. Lett.}\ }\textbf
  {\bibinfo {volume} {128}},\ \bibinfo {pages} {220502} (\bibinfo {year}
  {2022})}\BibitemShut {NoStop}%
\bibitem [{\citenamefont {Gily{\'e}n}\ \emph {et~al.}(2019)\citenamefont
  {Gily{\'e}n}, \citenamefont {Su}, \citenamefont {Low},\ and\ \citenamefont
  {Wiebe}}]{Gilyen2019QSVT}%
  \BibitemOpen
  \bibfield  {author} {\bibinfo {author} {\bibfnamefont {A.}~\bibnamefont
  {Gily{\'e}n}}, \bibinfo {author} {\bibfnamefont {Y.}~\bibnamefont {Su}},
  \bibinfo {author} {\bibfnamefont {G.~H.}\ \bibnamefont {Low}},\ and\ \bibinfo
  {author} {\bibfnamefont {N.}~\bibnamefont {Wiebe}},\ }\bibfield  {title}
  {\bibinfo {title} {Quantum singular value transformation and beyond},\
  }\href@noop {} {\bibfield  {journal} {\bibinfo  {journal} {Proceedings of the
  51st Annual ACM SIGACT Symposium on Theory of Computing}\ ,\ \bibinfo {pages}
  {193}} (\bibinfo {year} {2019})},\ \Eprint {https://arxiv.org/abs/1806.01838}
  {arXiv:1806.01838 [quant-ph]} \BibitemShut {NoStop}%
\bibitem [{\citenamefont {Low}\ and\ \citenamefont
  {Chuang}(2019)}]{Low2019Hamiltonian}%
  \BibitemOpen
  \bibfield  {author} {\bibinfo {author} {\bibfnamefont {G.~H.}\ \bibnamefont
  {Low}}\ and\ \bibinfo {author} {\bibfnamefont {I.~L.}\ \bibnamefont
  {Chuang}},\ }\bibfield  {title} {\bibinfo {title} {Hamiltonian simulation by
  qubitization},\ }\href@noop {} {\bibfield  {journal} {\bibinfo  {journal}
  {Quantum}\ }\textbf {\bibinfo {volume} {3}},\ \bibinfo {pages} {163}
  (\bibinfo {year} {2019})},\ \Eprint {https://arxiv.org/abs/1610.06546}
  {arXiv:1610.06546 [quant-ph]} \BibitemShut {NoStop}%
\bibitem [{\citenamefont {Bernien}\ \emph {et~al.}(2013)\citenamefont
  {Bernien}, \citenamefont {Hensen}, \citenamefont {Pfaff}, \citenamefont
  {Koolstra}, \citenamefont {Blok}, \citenamefont {Robledo}, \citenamefont
  {Taminiau}, \citenamefont {Markham}, \citenamefont {Twitchen}, \citenamefont
  {Childress},\ and\ \citenamefont {Hanson}}]{Bernien2013Heralded}%
  \BibitemOpen
  \bibfield  {author} {\bibinfo {author} {\bibfnamefont {H.}~\bibnamefont
  {Bernien}}, \bibinfo {author} {\bibfnamefont {B.}~\bibnamefont {Hensen}},
  \bibinfo {author} {\bibfnamefont {W.}~\bibnamefont {Pfaff}}, \bibinfo
  {author} {\bibfnamefont {G.}~\bibnamefont {Koolstra}}, \bibinfo {author}
  {\bibfnamefont {M.~S.}\ \bibnamefont {Blok}}, \bibinfo {author}
  {\bibfnamefont {L.}~\bibnamefont {Robledo}}, \bibinfo {author} {\bibfnamefont
  {T.~H.}\ \bibnamefont {Taminiau}}, \bibinfo {author} {\bibfnamefont
  {M.}~\bibnamefont {Markham}}, \bibinfo {author} {\bibfnamefont {D.~J.}\
  \bibnamefont {Twitchen}}, \bibinfo {author} {\bibfnamefont {L.}~\bibnamefont
  {Childress}},\ and\ \bibinfo {author} {\bibfnamefont {R.}~\bibnamefont
  {Hanson}},\ }\bibfield  {title} {\bibinfo {title} {Heralded entanglement
  between solid-state qubits separated by three metres},\ }\href
  {https://doi.org/10.1038/nature12016} {\bibfield  {journal} {\bibinfo
  {journal} {Nature}\ }\textbf {\bibinfo {volume} {497}},\ \bibinfo {pages}
  {86} (\bibinfo {year} {2013})}\BibitemShut {NoStop}%
\bibitem [{\citenamefont {Childress}\ and\ \citenamefont
  {Hanson}(2013)}]{ChildressHanson2013NV}%
  \BibitemOpen
  \bibfield  {author} {\bibinfo {author} {\bibfnamefont {L.}~\bibnamefont
  {Childress}}\ and\ \bibinfo {author} {\bibfnamefont {R.}~\bibnamefont
  {Hanson}},\ }\bibfield  {title} {{\selectlanguage {English}\bibinfo {title}
  {Diamond nv centers for quantum computing and quantum networks}},\ }\href
  {https://doi.org/10.1557/mrs.2013.20} {\bibfield  {journal} {\bibinfo
  {journal} {MRS Bulletin}\ }\textbf {\bibinfo {volume} {38}},\ \bibinfo
  {pages} {134} (\bibinfo {year} {2013})}\BibitemShut {NoStop}%
\bibitem [{\citenamefont {Bradley}\ \emph {et~al.}(2022)\citenamefont
  {Bradley}, \citenamefont {de~Bone}, \citenamefont {Möller}, \citenamefont
  {Baier}, \citenamefont {Degen}, \citenamefont {Loenen}, \citenamefont
  {Bartling}, \citenamefont {Markham}, \citenamefont {Twitchen}, \citenamefont
  {Hanson}, \citenamefont {Elkouss},\ and\ \citenamefont
  {Taminiau}}]{Bradley2022Robust}%
  \BibitemOpen
  \bibfield  {author} {\bibinfo {author} {\bibfnamefont {C.~E.}\ \bibnamefont
  {Bradley}}, \bibinfo {author} {\bibfnamefont {S.~W.}\ \bibnamefont
  {de~Bone}}, \bibinfo {author} {\bibfnamefont {P.~F.~W.}\ \bibnamefont
  {Möller}}, \bibinfo {author} {\bibfnamefont {S.}~\bibnamefont {Baier}},
  \bibinfo {author} {\bibfnamefont {M.~J.}\ \bibnamefont {Degen}}, \bibinfo
  {author} {\bibfnamefont {S.~J.~H.}\ \bibnamefont {Loenen}}, \bibinfo {author}
  {\bibfnamefont {H.~P.}\ \bibnamefont {Bartling}}, \bibinfo {author}
  {\bibfnamefont {M.}~\bibnamefont {Markham}}, \bibinfo {author} {\bibfnamefont
  {D.~J.}\ \bibnamefont {Twitchen}}, \bibinfo {author} {\bibfnamefont
  {R.}~\bibnamefont {Hanson}}, \bibinfo {author} {\bibfnamefont
  {D.}~\bibnamefont {Elkouss}},\ and\ \bibinfo {author} {\bibfnamefont {T.~H.}\
  \bibnamefont {Taminiau}},\ }\bibfield  {title} {\bibinfo {title} {Robust
  quantum-network memory based on spin qubits in isotopically engineered
  diamond},\ }\href {https://doi.org/10.1038/s41534-022-00637-w} {\bibfield
  {journal} {\bibinfo  {journal} {npj Quantum Information}\ }\textbf {\bibinfo
  {volume} {8}},\ \bibinfo {pages} {122} (\bibinfo {year} {2022})}\BibitemShut
  {NoStop}%
\bibitem [{\citenamefont {Siddiqi}(2021)}]{Siddiqi2021Engineering}%
  \BibitemOpen
  \bibfield  {author} {\bibinfo {author} {\bibfnamefont {I.}~\bibnamefont
  {Siddiqi}},\ }\bibfield  {title} {\bibinfo {title} {Engineering
  high-coherence superconducting qubits},\ }\href
  {https://doi.org/10.1038/s41578-021-00370-4} {\bibfield  {journal} {\bibinfo
  {journal} {Nature Reviews Materials}\ }\textbf {\bibinfo {volume} {6}},\
  \bibinfo {pages} {875} (\bibinfo {year} {2021})}\BibitemShut {NoStop}%
\bibitem [{\citenamefont {Anton}\ \emph {et~al.}(2012)\citenamefont {Anton},
  \citenamefont {M\"uller}, \citenamefont {Birenbaum}, \citenamefont
  {O'Kelley}, \citenamefont {Fefferman}, \citenamefont {Golubev}, \citenamefont
  {Hilton}, \citenamefont {Cho}, \citenamefont {Irwin}, \citenamefont
  {Wellstood}, \citenamefont {Sch\"on}, \citenamefont {Shnirman},\ and\
  \citenamefont {Clarke}}]{PhysRevB.85.224505}%
  \BibitemOpen
  \bibfield  {author} {\bibinfo {author} {\bibfnamefont {S.~M.}\ \bibnamefont
  {Anton}}, \bibinfo {author} {\bibfnamefont {C.}~\bibnamefont {M\"uller}},
  \bibinfo {author} {\bibfnamefont {J.~S.}\ \bibnamefont {Birenbaum}}, \bibinfo
  {author} {\bibfnamefont {S.~R.}\ \bibnamefont {O'Kelley}}, \bibinfo {author}
  {\bibfnamefont {A.~D.}\ \bibnamefont {Fefferman}}, \bibinfo {author}
  {\bibfnamefont {D.~S.}\ \bibnamefont {Golubev}}, \bibinfo {author}
  {\bibfnamefont {G.~C.}\ \bibnamefont {Hilton}}, \bibinfo {author}
  {\bibfnamefont {H.-M.}\ \bibnamefont {Cho}}, \bibinfo {author} {\bibfnamefont
  {K.~D.}\ \bibnamefont {Irwin}}, \bibinfo {author} {\bibfnamefont {F.~C.}\
  \bibnamefont {Wellstood}}, \bibinfo {author} {\bibfnamefont {G.}~\bibnamefont
  {Sch\"on}}, \bibinfo {author} {\bibfnamefont {A.}~\bibnamefont {Shnirman}},\
  and\ \bibinfo {author} {\bibfnamefont {J.}~\bibnamefont {Clarke}},\
  }\bibfield  {title} {\bibinfo {title} {Pure dephasing in flux qubits due to
  flux noise with spectral density scaling as $1/{f}^{\ensuremath{\alpha}}$},\
  }\href {https://doi.org/10.1103/PhysRevB.85.224505} {\bibfield  {journal}
  {\bibinfo  {journal} {Phys. Rev. B}\ }\textbf {\bibinfo {volume} {85}},\
  \bibinfo {pages} {224505} (\bibinfo {year} {2012})}\BibitemShut {NoStop}%
\bibitem [{\citenamefont {Bylander}\ \emph {et~al.}(2011)\citenamefont
  {Bylander}, \citenamefont {Gustavsson}, \citenamefont {Yan}, \citenamefont
  {Yoshihara}, \citenamefont {Harrabi}, \citenamefont {Fitch}, \citenamefont
  {Cory}, \citenamefont {Nakamura}, \citenamefont {Tsai},\ and\ \citenamefont
  {Oliver}}]{Bylander2011Noise}%
  \BibitemOpen
  \bibfield  {author} {\bibinfo {author} {\bibfnamefont {J.}~\bibnamefont
  {Bylander}}, \bibinfo {author} {\bibfnamefont {S.}~\bibnamefont
  {Gustavsson}}, \bibinfo {author} {\bibfnamefont {F.}~\bibnamefont {Yan}},
  \bibinfo {author} {\bibfnamefont {F.}~\bibnamefont {Yoshihara}}, \bibinfo
  {author} {\bibfnamefont {K.}~\bibnamefont {Harrabi}}, \bibinfo {author}
  {\bibfnamefont {G.}~\bibnamefont {Fitch}}, \bibinfo {author} {\bibfnamefont
  {D.~G.}\ \bibnamefont {Cory}}, \bibinfo {author} {\bibfnamefont
  {Y.}~\bibnamefont {Nakamura}}, \bibinfo {author} {\bibfnamefont {J.-S.}\
  \bibnamefont {Tsai}},\ and\ \bibinfo {author} {\bibfnamefont {W.~D.}\
  \bibnamefont {Oliver}},\ }\bibfield  {title} {\bibinfo {title} {Noise
  spectroscopy through dynamical decoupling with a superconducting flux
  qubit},\ }\href {https://doi.org/10.1038/nphys1994} {\bibfield  {journal}
  {\bibinfo  {journal} {Nature Physics}\ }\textbf {\bibinfo {volume} {7}},\
  \bibinfo {pages} {565} (\bibinfo {year} {2011})}\BibitemShut {NoStop}%
\bibitem [{\citenamefont {Bruzewicz}\ \emph {et~al.}(2019)\citenamefont
  {Bruzewicz}, \citenamefont {Chiaverini}, \citenamefont {McConnell},\ and\
  \citenamefont {Sage}}]{Bruzewicz2019TrappedIons}%
  \BibitemOpen
  \bibfield  {author} {\bibinfo {author} {\bibfnamefont {C.~D.}\ \bibnamefont
  {Bruzewicz}}, \bibinfo {author} {\bibfnamefont {J.}~\bibnamefont
  {Chiaverini}}, \bibinfo {author} {\bibfnamefont {R.}~\bibnamefont
  {McConnell}},\ and\ \bibinfo {author} {\bibfnamefont {J.~M.}\ \bibnamefont
  {Sage}},\ }\bibfield  {title} {\bibinfo {title} {Trapped-ion quantum
  computing: Progress and challenges},\ }\href
  {https://doi.org/10.1063/1.5088164} {\bibfield  {journal} {\bibinfo
  {journal} {Applied Physics Reviews}\ }\textbf {\bibinfo {volume} {6}},\
  \bibinfo {pages} {021314} (\bibinfo {year} {2019})}\ \Eprint
  {https://arxiv.org/abs/https://pubs.aip.org/aip/apr/article-pdf/doi/10.1063/1.5088164/19742554/021314_1_online.pdf}{}
  \BibitemShut {NoStop}%
\end{thebibliography}
\end{document}